\newcommand{\galg}{\mathfrak{g}}
\newcommand{\lieg}{\galg}
\newcommand{\halg}{\mathfrak{h}}
\newcommand{\Sl}{\mathfrak{sl}}
\newcommand{\Sp}{\mathfrak{sp}}
\newcommand{\gl}{\mathfrak{gl}}
\newcommand{\K}{\mathbb{K}}
\newcommand{\R}{\mathbb{R}}
\newcommand{\CC}{\mathbb{C}}
\newcommand{\PP}{\mathbb{P}}
\newcommand{\N}{\mathbb{N}}
\definecolor{RED}{named}{red}
\DeclareMathOperator{\SO}{SO}
\DeclareMathOperator{\SL}{SL}
\DeclareMathOperator{\DD}{D}
\DeclareMathOperator{\Span}{span}
\DeclareMathOperator{\Pol}{Pol}
\DeclareMathOperator{\Sym}{Sym}
\DeclareMathOperator{\adj}{ad}
\DeclareMathOperator{\id}{Id}
\DeclareMathOperator{\Cas}{Cas}
\theoremstyle{plain}
\newtheorem{theorem}{Theorem}[section]
\newtheorem{proposition}[theorem]{Proposition}
\newtheorem{corollary}[theorem]{Corollary}
\newtheorem{lemma}[theorem]{Lemma}
\theoremstyle{definition}
\newtheorem{definition}[theorem]{Definition}
\theoremstyle{remark}
\newtheorem{remark}[theorem]{Remark}
\newtheorem{example}[theorem]{Example}
\newtheorem*{conjecture*}{Conjecture}
\author[G. Gubbiotti]{Giorgio Gubbiotti}
\author[D. Latini]{Danilo Latini}
\author[B. van Geemen]{Bert van Geemen}
\address[G. Gubbiotti]{Dipartimento di Matematica ``Federigo Enriques'',
        Universit\`a degli Studi di Milano, Via C. Saldini 50, 20133
        Milano, Italy \& INFN Sezione di Milano, Via G. Celoria 16,
        20133 Milano, Italy}
\email{giorgio.gubbiotti@unimi.it}
\address[D. Latini]{Dipartimento di Matematica ``Federigo Enriques'',
        Universit\`a degli Studi di Milano, Via C. Saldini 50, 20133
        Milano, Italy \& INFN Sezione di Milano, Via G. Celoria 16,
        20133 Milano, Italy}
\email{danilo.latini@unimi.it}
\address[B. van Geemen]{Dipartimento di Matematica ``Federigo Enriques'',
        Universit\`a degli Studi di Milano, Via C. Saldini 50, 20133
        Milano, Italy}
\email{lambertus.vangeemen@unimi.it}
\numberwithin{equation}{section}
\title{A novel chain of Lie algebras and its coalgebra symmetry}
\begin{document}

\begin{abstract}
    We study a novel $n(n+1)/2$-dimensional non-semisimple Lie algebra
    $\galg_n$, a generalisation of both $\Sl_2(\K)$ and the two-photon Lie
    algebra $\mathfrak{h}_6$. We investigate its properties, including its
    structure, representations, and its Casimir elements. In particular, we
    prove that there exists only one non-trivial Casimir polynomial of degree
    $n$ given by the determinant of an $n\times n$ symmetric matrix. We then
    associate this Lie algebra to a hierarchy of Hamiltonian systems with
    integrability properties depending on $n$, and describe their first
    integrals as sums of squares of linear combinations of the components of
    the angular momentum. In particular, we obtain that these systems are
    integrable for $n=2$, quasi-integrable for $n=3$, and of
    Poincar\'e--Lyapunov--Nekhoroshev type for $n\geq4$.
\end{abstract}
\maketitle
%


%
\setcounter{tocdepth}{1}
\tableofcontents

\section{Introduction}

\noindent Lie algebras play a pivotal role in the study of Integrable Systems,
serving as a fundamental tool for understanding and classifying these
mathematical structures. Crucial to the advancement of this field was the
development of Quantum Mechanics. Indeed, the Lie algebraic approach to
integrable systems was initiated in Hendrik Casimir's Ph.D.
thesis~\cite{Casimir1931} on rigid body motion in Quantum Mechanics, and
later continued with the seminal paper by Fock~\cite{Fock1935} on the
$\SO(4)$ symmetry of the hydrogen atom. A similar finding was made around
of the same time by Demkov~\cite{Demkov1956} and Fradkin~\cite{Fradkin1965}
about the harmonic oscillator, whose underlying symmetry algebra was found
to be related to the special unitary algebra, see also~\cite{DGL_fradkin}.
This research left an everlasting mark in the history of the topic
as it is witnessed by the famous book of Herman Weyl on the
subject~\cite{Weyl1950theory}. On the other hand, the importance of Lie
algebras in classical mechanics was established only later, mostly
through the fundamental work of Olshanetsky and Perelomov. In their most
famous paper~\cite{OlshanetskyPerelomov1976} Olshanetsky and Perelomov
established a connection between semi-simple Lie algebras and
Calogero--Moser like integrable Hamiltonian systems~\cite{Calogero1971},
see also the book~\cite{Perelomov1990book}. 

The systematisation of these ideas led to the so-called algebraic approach
to Integrable Systems. This field uses tools from Lie or Poisson algebra
theory to construct new integrable systems and explicitly solve existing
ones. The first problem, also known as the ``search problem'' is of
fundamental importance in the theory of Integrable Systems and in
Mathematical Physics in general, since Integrable Systems are ubiquitous in
applications, as they arise as a limiting form of a broad families of
nonlinear systems~\cite{Calogero1991}.  Given a ``generic system'', it is
possible to obtain information on its behaviour using perturbation theory
of an integrable system, see e.g.~\cite[Chap.\ 11]{Arnold1997}.

The strength of most algebraic methods in Integrable Systems is that, in
contrast to various existing direct methods, see e.g.\
\cite{Fris1965,PostWinternitz2011, Kalninsetal2002, Kalninsetal2003,
Gravel2004, PostWinternintz2015general, Drach1935, KarloviniRosquist2000,
Karlovini_etal2002, GrammaticosDorizziPadjen1982,
RamaniDorizziGrammaticos1982, DorizziGrammaticosRamani1983,
Quispel_etal2023, vanderKamp_etal2021homogeneous, PrelleSinger1981,
Schlomiuk1993proc, Schlomiuk1993, GubLatDrach}, 
they can be employed to construct Integrable Systems in
arbitrary dimensions. In contrast, direct methods are often limited by
cumbersome computations as the number of degrees of freedom increases.

As in any top-down approach, the main limitation of algebraic methods in
Integrable Systems is our knowledge of the properties of a given mathematical
structure. For this reason, the techniques based on Lie algebras in most of the
cases are built on (semi)simple Lie algebras, which are essentially completely
understood over the complex field since the times of W.\
Killing~\cite{Killing1890} and \'E.\ Cartan~\cite{Cartan1909}. When the real
field is considered, for instance for applications in Classical Mechanics, the
situation is more involved, since complex simple Lie algebras have multiple
possible real forms~\cite{Onishchik2004}.  Besides the results on simple Lie
algebras, the classification of (all) real Lie algebras is nowadays limited up
to dimension $6$, with an history going back to Luigi Bianchi~\cite{Bianchi}
and completed only about 70 years later by Turkowski~\cite{Turkowski1990} with
many contributions given by
Mubarakzyanov~\cite{Mubarakzyanov1963a,Mubarakzyanov1963b,Mubarakzyanov1963c}.
For more details see the very complete introduction on the topic in~\cite[\S
2]{Popovich2003}. Classification in higher dimension becomes impractical as the
number of Lie algebras grows exponentially (see the estimates
in~\cite{KirillovNeretin1987}), so usually one confines the study to Lie
algebras with interesting properties, where of course the interest depends on
the intended application.

This paper is devoted to the study of a novel $n(n+1)/2$-dimensional
non-semimple Lie algebra, which we call $\galg_{n}$ for $n\geq 2$. Up to
isomorphism, the Lie algebra $\galg_n$ is a generalisation of another
non-semisimple Lie algebra called the two-photon Lie algebra
$\mathfrak{h}_6$~\cite{Zhang_et_al1990}, which arises as $\galg_3$. Moreover,
it is well-known that the two-photon Lie algebra admits as Levi factor a copy
of the three-dimensional special linear algebra $\Sl_2(\K)$, which arises in
our construction as $\galg_2$. The Lie algebra $\galg_n$ generalises this
property, and in particular, we will show that for every fixed $n$ it contains
as Lie subalgebras all the Lie algebras $\galg_k$ with $2\leq k\leq n-1$. The
Lie algebra $\galg_n$ was first proposed in~\cite{GLT_coalgebra} in the context
of the application of Lie algebras to \emph{discrete time} integrable systems
through the so-called coalgebra symmetry
approach~\cite{Ballesteros_et_al_1996,BallesterosRagnisco1998,Ballesteros_et_al2009}.
Roughly speaking, the coalgebra symmetry approach is an algebraic machinery
that given a (classical) one degree of freedom Hamiltonian system associated to
a Lie--Poisson algebra turns it into one with $N$ degrees of freedom, and many
first integrals are obtained from the Casimir elements of the underlying
Lie-Poisson algebra. In some particular cases, these first integrals are enough
to give both integrability and even superintegrability of the $N$ degrees of
freedom Hamiltonian system.  In~\cite{GLT_coalgebra} it was proposed that the
Lie algebra $\galg_{n}$ could be useful to explain integrability of both
discrete and continuous time systems, making it a ``good'' candidate Lie
algebra to be studied in arbitrary dimension, i.e.\ arbitrary $n$. However, the
idea given in~\cite{GLT_coalgebra} requires the computation of the Casimir
elements of the Lie algebra $\galg_n$ for generic $n$. This is a difficult
problem to tackle because the algebra is non-semisimple and for this kind of
Lie algebras the standard ways of computing the Casimir elements rapidly become
untreatable because of the complexity of the calculation involved.

In this paper, we address the problem of determining the Casimir elements of
the Lie algebra $\galg_n$ by initially employing the vector field approach
based on its coadjoint representation. We then rigorously prove that the
obtained polynomial is indeed the Casimir invariant for any $n \geq 2$, within
a framework grounded in the representation theory of Lie
algebras~\cite{FultonHarris1991}. We are able to show that there exists, up to
a scalar multiple, only one non-trivial Casimir polynomial of degree $n$. This
result generalise the well-known Casimir elements of $\Sl_2(\K)$, a degree two
polynomial~\cite[\S 16.5]{SnobWinternitz2017book}, and of $\mathfrak{h}_6$, a
degree three polynomial~\cite{BallesterosHerranz2001}, which are recovered as
the $n=2,3$ cases, respectively. Then, in the spirit
of~\cite{Ballesteros_et_al_1996,BallesterosRagnisco1998,Ballesteros_et_al2009}
we build a hierarchy of Hamiltonian systems in $N$ degrees of freedom
associated to the Lie algebra $\galg_n$. In particular, we show that each of
these systems possesses a total number of $2(N-n+1)$ first integrals divided
into two families of $N-(n-2)$ pairwise commuting ones. In particular, this
implies that the given systems are integrable for $n=2$, quasi-integrable for
$n= 3$, and of Poincar\'e--Lyapunov--Nekhoroshev type with rank
$N-(n-2)$~\cite{Nekhoroshev1994,Gaeta2002,Gaeta2003} for $n\geq 4$.

\

The plan of the paper is the following: in \Cref{sec:background} we discuss the
background material we need throughout the paper. In particular, we focus on
the basic notions on Lie algebras and their representations that we are going
to use, and we give a short introduction to the coalgebra symmetry method
applied to Classical Mechanics. In \Cref{sec:gndef} we introduce the Lie
algebra $\galg_n$ and establish its main properties.  In particular, we show
that there is a chain of inclusions of Lie subalgebras, prove that $\galg_{n}$ can
have at most one non-trivial Casimir element, and then construct a matrix
representation in $\Sl_{2(n-1)}(\K)$ and also its coadjoint representation in
the space of vector fields. \Cref{sec:casgn} contains the main result of the
paper: the proof of the form of the Casimir element of $\galg_n$ for arbitrary
$n$. We start by employing the method of vector fields to examine the
low-dimensional examples $n=2,3,4$. Carefully analising those
examples we guess the general form of the Casimir polynomial for generic $n$ as
the determinant of a given $n\times n$ symmetric matrix with coefficients in
$\galg_{n}$. Then, we use the matrix representation constructed
in \Cref{sec:gndef} to prove that such a determinant is indeed a Casimir
polynomial, being well-behaved with respect to the action of the group
$\SL_n(\K)$, and recalling that such an action has the property of leaving
invariant the determinant of a symmetric matrix.  Then, in \Cref{sec:coalg} we
proceed to associate to $\galg_n$ a hierarchy of classical Hamiltonian systems
using a one degree of freedom symplectic
realisation.  Moreover, using the information on the Casimir element of
$\galg_n$, we are able to describe the first integrals of the obtained
Hamiltonian systems. In particular, we provide a expression of the first
integrals as sum of squares of ``building blocks'' that are $n$-indices objects
arising as linear combinations of the angular momentum components $L_{i,j}$,
plus constant terms coming from the $\galg_n$ central elements under the chosen
symplectic realization. We also provide a determinantal formula for these
$n$-indices objects, which clarifies their rather intricate structure. Of
course, the known cases $n=2,3$ are properly recovered.  Finally, in
\Cref{sec:concl} we discuss our results and suggest further directions of
research in the field.

\section{Background and notations}
\label{sec:background}

\noindent In this section we discuss the background tools we need to fully
develop our arguments, and fix the notations we are going to use throughout the
manuscript.

\subsection{Notions from representation theory}

Consider a finite-dimensional Lie algebra $\galg$ over a field $\mathbb{K}$.
In this subsection we recall a few notions from representation theory, see for
instance~\cite{FultonHarris1991}.  A \emph{representation} $\rho$ of
the Lie algebra $\galg$ on a vector space $V$ (finite or infinite dimensional)
is a linear map of $\galg$ into the space of linear
operators acting on $V$, denoted by $\mathcal{L}(V)$:
\begin{equation}
    \begin{tikzcd}[row sep=tiny]
        \rho \colon\galg \arrow{r} & \mathcal{L}(V)
        \\
        \phantom{\rho \colon}x \arrow[mapsto]{r} & \rho(x),
    \end{tikzcd}
    \label{eq:reprgen}
\end{equation}
such that:
\begin{equation}
    \rho([x,y]) = \rho(x)\circ\rho(y) - \rho(y)\circ\rho(x).
    \label{eq:reprcond}
\end{equation}
Note that, within this definition, we implicitly used that $\mathcal{L}(V)$ is
a Lie algebra with Lie bracket given by the difference of compositions:
${[F,G]}_{\mathcal{L}(V)}=F\circ G - G\circ F$, and a representation preserves
the Lie bracket: $\rho([x,y]) = [\rho(x),\rho(y)]$.  If the map $\rho$ is
injective, the representation is called \emph{faithful}.

When $V$ is a finite-dimensional vector space over $\K$ we have that
$\mathcal{L}(V)\cong \gl(V)$, the space of linear operators on $V$. If $\dim V
= n$, after choosing a basis, this is the space of $n\times n$ matrices and it
is denoted by $\gl_{n}(\K)$.  The commutator is $[X,Y]=XY-YX$,
cf.~\eqref{eq:reprcond}. Representations $\rho\colon\lieg\longrightarrow
\gl_n(\K)$ are called \emph{matrix representations}.

A basic example of a representation of a Lie algebra, is the \emph{adjoint
representation $\rho_{\adj}$}, which we will use later. 
It is defined as follows:
\begin{equation}
    \begin{tikzcd}[row sep=tiny]
        \rho_{\adj} \colon \galg \arrow{r} & \gl(\galg)
        \\
        \phantom{\rho_{\adj} \colon}x \arrow[mapsto]{r} &  \rho_{\adj}(x) (-)= \adj_{x}(-)=[x,-].
    \end{tikzcd}
    \label{eq:rhoadj}
\end{equation}
The Lie bracket is preserved due to the Jacobi identity for a Lie algebra.
Given a finite-dimensional representation $\rho\colon \galg \longrightarrow
\gl(V)$ we can define representations on tensor products of $V$, see~\cite[\S
4.2.2]{KirillovBook}.
In particular, we are interested in the
representation $\rho^{\otimes 2}$ induced on $V^{\otimes 2}$, namely
$\rho^{\otimes 2} \colon \galg \longrightarrow \gl(V^{\otimes 2})$, where for
every $x\in\galg$ we have:
\begin{equation}
    \begin{tikzcd}[row sep=tiny]
        \rho^{\otimes 2}(x)\colon V^{\otimes2}\arrow[r] & V^{\otimes 2}
        \\
        \phantom{\rho^{\otimes 2}(x)\colon} v\otimes w \arrow[r,mapsto] & (\rho(x)v)\otimes w\,+\,v\otimes(\rho(x)w)~.
    \end{tikzcd}
    \label{eq:rhot2abstract}
\end{equation}
We observe that that $\rho(x)$ acts as a derivation on $V^{\otimes 2}$ and this is
also true for $V^{\otimes n}$.  This action can be made more explicit by
choosing a basis $e_1,\ldots,e_{n}$ of $V$ and identifying the tensor product
with the vector space of $n\times n$ matrices in the following way:
\begin{equation}
    \begin{tikzcd}[row sep=tiny]
        V^{\otimes 2} \arrow[r,leftrightarrow,"\cong"] & M_{n}\left( \K \right)
        \\
        e_i\otimes e_j \arrow[r,leftrightarrow] & E_{i,j}~,
    \end{tikzcd}
    \label{eq:Vo2Mat}
\end{equation}
where $E_{i,j}$ is an $n\times n$ matrix whose
only non-zero entry is the element $(i,j)$:
\begin{equation}
    E_{i,j} = (\delta_{ik}\delta_{jl})_{k,l=1}^{n}.
    \label{eq:Eij}
\end{equation}
It is then an easy computation to verify that the representation $\rho^{\otimes 2}$
on $V^{\otimes 2}$ is (equivalent to) the representation, still denoted by
$\rho^{\otimes 2}$, on $M_n(\K)$:
\begin{equation}
    \begin{tikzcd}[row sep=tiny]
        \rho^{\otimes 2} \colon \galg \arrow{r} & M_n(\K)
        \\
        \phantom{\rho^{\otimes 2} \colon}x \arrow[mapsto]{r} &  \rho^{\otimes 2}(x)(M)
        =
        \rho(x)M + M{\rho(x)}^{T} .
    \end{tikzcd}
    \label{eq:rhot2}
\end{equation}
In the decomposition $V^{\otimes 2}= \Sym^{2} V \oplus \Lambda^{2} V$, the
subspaces of symmetric and skew-symmetric tensors are mapped into  themselves
by any $\rho^{\otimes 2}(x)$. So, it makes sense to consider the
representations on the two invariant subsets. In particular, we denote the
representation $\rho^{\otimes 2}$ restricted to $\Sym^{2} V$ (resp.
$\Lambda^{2} V$) by $\rho^{(2)}$ (resp. $\rho^{[2]}$).

More generally, for each $n\geq 0$ we have the representations $\rho^{(n)}$ on
$\Sym^{n} V$.  This leads to consider the symmetric algebra $SV$ of $V$, an
infinite dimensional vector space
\begin{equation}
    SV\,:=\,\bigoplus_{n= 0}^{\infty}\Sym^{n} V~,
    \label{eq:SVgen}
\end{equation}
with the convention that $\Sym^{0} V=\K$. An element $v\in V$ defines a linear
map on the dual vector space $v\colon V^*\longrightarrow \K$ simply mapping
$\ell \in V^*$ to
$\ell(v)$. The vector space  $\Sym^{n} V$ can then be identified with linear
combinations of products $v_{i_1}v_{i_2}\ldots v_{i_n}$ of functions on $V^*$
that are homogeneous of degree $n$. This leads to the identification
$SV=\Pol(V^*)$, the algebra of polynomial functions on $V^*$.

We are particularly interested in the case that $V=\galg$ and
$\rho=\rho_{\adj}$, the adjoint representation on $V$, now extended, by
derivations, to all of $SV$. That is,
$\rho_{\adj}\colon\lieg\,\longrightarrow\,\mathcal{L}(S\lieg)$, where for all
$x\in \galg$ we have:
\begin{equation}
    \begin{tikzcd}[row sep=tiny]
        \rho_{\adj}(x) \colon S\galg \arrow{r} & S\galg
        \\
        \phantom{\rho_{\adj}(x) \colon} x_{i_1}x_{i_2}\ldots x_{i_n} \arrow[mapsto]{r} &  
        {[x,x_{i_1}]}x_{i_2}\ldots x_{i_n}+
        \ldots+\,[x,x_{i_n}]x_{i_1}\ldots x_{i_{n-1}}.
    \end{tikzcd}
    \label{eq:rhoaddsymm}
\end{equation}
Then, a \emph{polynomial Casimir element} $C\in S\lieg$ is by definition an
element annihilated by the adjoint action:
\begin{equation}
    \rho_{\adj}(x)(C)\,=\,0\qquad \forall x\in\galg~.
    \label{eq:casdef}
\end{equation}
Clearly, the elements of the centre of a Lie algebra $Z(\galg)$ are Casimir
elements. These elements are usually called \emph{trivial Casimir elements},
since they belong to the Lie algebra $\galg$ itself.

More explicitly, if we choose a basis $\galg=\Span_{\K}\Set{x_{1},\ldots,x_{{d}}}$,
then we have the identification $S\galg \cong \K\left[ x_{1},\ldots,x_{{d}}
\right]$, the polynomial algebra in $d$ variables. The Lie algebra
structure is determined by the structure constants $c^k_{ij}$, i.e.:
\begin{equation}
    [x_i,x_j]\,:=\,\sum_{k=1}^{{d}}c^k_{ij}x_k~.
    \label{eq:cijk}
\end{equation}
The Lie bracket of $\galg$ induces on $S\galg=\Pol(\galg^{*})$, and more in
general on the algebra of functions ${\mathcal F}(\galg^*)$ of all
smooth/analytic functions on $\galg^*$, the so-called Lie--Poisson bracket:
\begin{equation}
    \left\{ f,g \right\}_{S\galg} =
    \sum_{i,j,k=1}^{{d}} c_{ij}^{k}x_{k}
    \frac{\partial f}{\partial x_{i}}
    \frac{\partial g}{\partial x_{j}},
    \label{eq:plbracket}
\end{equation}
see also~\cite[Chap. 7]{LaurentGengoux2018}.
Notice that:
\begin{equation}
    \rho_{\adj}(x_i)(x_j)\,=\,[x_i,x_j]\,=\,\{x_i,x_j\}_{{S\galg}},
    \label{eq:rhoadxixj}
\end{equation}
and hence, since $x\in \galg$ acts by derivations:
\begin{equation}
    \rho_{\adj}(x)(f)\,=\,\{x,f\}_{{S\galg}}.
    \label{eq:rhoadf}
\end{equation}
Consequently, polynomial Casimir elements $C$ can be constructed as elements of
the symmetric algebra by solving:
\begin{equation}
    \pb{C}{x}_{S\galg} = 0,
    \quad \forall x\in \galg.
    \label{eq:plCas}
\end{equation}

Since $\galg$ acts on ${\mathcal F}(\galg^*)$ by derivations, we actually
have a map $\galg\,\longrightarrow\,\mathfrak{X}(\galg^*)$, the latter being
the space of vector fields over $\galg^*$. In fact, the formula for
$\rho_{\adj}(x_i)(x_j)$ implies that $x_i$ acts as the vector field:
\begin{equation}
    \widehat{x}_i\,:=\,
    \sum_{j,k=1}^{{d}} c_{ij}^{k} x_{k} \frac{\partial}{\partial x_{j}}.
    \label{eq:eihat}
\end{equation}
The vector fields are a Lie algebra with Lie bracket given by
\begin{equation}
    [X,Y]_{LB}(f)\,:=\,X(Y(f))\,-\,Y(X(f)),
    \qquad 
    X,Y\in\mathfrak{X}(\galg^*),
    \quad
    f\in \mathcal{F}(\galg^*).
    \label{eq:LBvf}
\end{equation}
Then $x\mapsto\widehat{x}$ is a Lie algebra representation
since the vector fields $\widehat{x}_i$
satisfy the same commutation relations as $x_i$ in the Lie algebra $\galg$,
i.e.\ the following identity holds:
\begin{equation}
    {[\widehat{x}_{i},\widehat{x}_{j}]}_{LB}
    =\widehat{[x_{i},x_{j}]}_{\galg} \, .
    \label{eq:javv}
\end{equation}
Thus, following for instance~\cite{SnobWinternitz2017book}, we say
that the vector fields $\Set{\widehat{x}_{1},\ldots,\widehat{x}_{{d}}}\subset
\mathfrak{X}(\galg^*)$ form a basis of the so-called coadjoint representation
of $\galg$, acting by derivations on $\mathcal{F}(\galg^*)$.

\begin{remark}
    Note that if $Z(\galg)\neq 0$ the adjoint representation is not
    faithful, as all the elements in the centre are mapped into the zero
    vector field of $\mathfrak{X}(\galg^*)$.
    \label{rem:Zentrum}
\end{remark}

For the search of Casimirs, we then have the following result:

\begin{proposition}[{\cite[\S 3.2]{SnobWinternitz2017book}}]
    Given a $d$-dimensional Lie algebra $\galg$, then its Casimir elements can be
    found as the functionally independent solutions of
    the system of PDEs:
    \begin{equation}
        \widehat{x}_{i}( f ) = 0, \quad
        i=1,\ldots,d,
        \label{eq:pdej}
    \end{equation}
    where the $\widehat{x}_{i}$ are the basis of the coadjoint representation in
    the space of vector fields, see equation~\eqref{eq:eihat}.
    \label{prop:casj}
\end{proposition}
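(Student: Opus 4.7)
The plan is essentially a translation between three equivalent incarnations of the same operator: the adjoint action on the symmetric algebra, the Lie--Poisson bracket with a basis element, and the coadjoint vector field acting on functions. Once this identification is clean, the claim reduces to a linearity observation.

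First, I would recall the identity chain established just before the statement: for each basis element $x_i$ and any $f\in\mathcal{F}(\galg^*)$,
\begin{equation*}
    \widehat{x}_i(f)
    \;=\;
    \sum_{j,k=1}^{d} c_{ij}^{k} x_k \frac{\partial f}{\partial x_j}
    \;=\;
    \{x_i,f\}_{S\galg}
    \;=\;
    \rho_{\adj}(x_i)(f),
\end{equation*}
where the second equality comes from the Lie--Poisson formula~\eqref{eq:plbracket} evaluated at $g=x_i$, and the third from~\eqref{eq:rhoadf}. In particular, the PDE system $\widehat{x}_i(f)=0$ for $i=1,\ldots,d$ is literally the system $\rho_{\adj}(x_i)(f)=0$ for $i=1,\ldots,d$.

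Next I would use linearity to upgrade the basis condition to the full adjoint condition. Since $\rho_{\adj}\colon\galg\to\mathcal{L}(S\galg)$ is $\K$-linear in its argument, for any $x=\sum_i \alpha_i x_i\in\galg$ we have $\rho_{\adj}(x)(f)=\sum_i \alpha_i\,\rho_{\adj}(x_i)(f)$. Hence vanishing on the basis is equivalent to vanishing on all of $\galg$. Combined with Definition~\eqref{eq:casdef}, this proves that $f$ is a polynomial (or more generally smooth/analytic) Casimir element if and only if $\widehat{x}_i(f)=0$ for every $i=1,\ldots,d$. The same argument works verbatim on $\mathcal{F}(\galg^*)$ because $\widehat{x}_i$ acts by derivations.

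Finally, for the functional independence clause, I would note that the equations $\widehat{x}_i(f)=0$ are a system of linear homogeneous first-order PDEs whose solution space is a subalgebra of $\mathcal{F}(\galg^*)$ closed under composition with smooth functions: if $f_1,\ldots,f_r$ are solutions and $\Phi$ is smooth, then $\widehat{x}_i(\Phi(f_1,\ldots,f_r))=\sum_a (\partial_a\Phi)\,\widehat{x}_i(f_a)=0$ by the derivation property. Therefore it is enough to exhibit a maximal set of functionally independent solutions; all others are obtained from these by functional composition. There is no real obstacle here, the only subtle point being to make it explicit that the equivalence ``Casimir $\Leftrightarrow$ annihilated by the coadjoint vector fields'' rests solely on the linearity of $\rho_{\adj}$ and the derivation property transported through~\eqref{eq:rhoadxixj}--\eqref{eq:rhoadf}, so that the result holds whether one works in $S\galg$ or in the larger algebra ${\mathcal F}(\galg^*)$.
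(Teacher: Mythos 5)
Your argument is correct and matches the route the paper itself takes: \cref{prop:casj} is stated there without proof (it is cited from \v{S}nobl--Winternitz), and it is exactly the immediate consequence of the identities \eqref{eq:rhoadxixj}, \eqref{eq:rhoadf} and \eqref{eq:eihat} that you chain together, upgraded from the basis to all of $\galg$ by linearity of $\rho_{\adj}$, with the derivation property handling closure under functional composition. The only nitpick is that your second equality comes from substituting $x_i$ into the \emph{first} slot of \eqref{eq:plbracket} (i.e.\ $f=x_i$), not the second slot $g=x_i$, which would flip the sign; the identity you actually use, $\widehat{x}_i(f)=\{x_i,f\}_{S\galg}$, is the correct one.
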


This result tells us that one does not need to check the polynomial Casimir
condition~\eqref{eq:plCas} for every element, but just a suitably chosen basis
is enough.  For applications and further discussion of this method to find
Casimir elements of Lie algebras we refer to~\cite{SnobWinternitz2017book}. A
short discussion of this approach in our case of interest, along
with some motivations on why it tends to become ineffective as the dimension of
the Lie algebra grows, will be presented in \Cref{sec:casgn}.

We also recall the following important result on the number of Casimir elements
admitted by a Lie algebra:

\begin{theorem}[Beltrametti--Blasi~\cite{BeltramettiBlasi1966}]
    Let $\galg$ be a $d$-dimensional Lie algebra over $\K$ with structure constants
    $c_{ij}^{k}$.  Then, the number of functionally independent Casimir
    invariants, $\nu$, of $\galg$ is the following:
    \begin{equation}
        \nu = \dim \galg
        - \rank A,
        \qquad
    A = {\left(\sum_{k=1}^{{d}} c_{ij}^{k}x_{k}  \right)}_{i,j=1}^{{d}},
        \label{eq:bbformula}
    \end{equation}
    where the $x_{i}$ are commuting variables on the base field.
    \label{thm:beltramettiblasi}
\end{theorem}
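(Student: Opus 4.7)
The plan is to reduce the theorem to a Frobenius-type integrability statement via \Cref{prop:casj}. By that proposition, the Casimir invariants of $\galg$ are the functionally independent common solutions of the linear first-order PDE system $\widehat{x}_{i}(f)=0$, $i=1,\ldots,d$. Writing the vector fields~\eqref{eq:eihat} explicitly, the $i$-th equation becomes
\begin{equation*}
    \sum_{j=1}^{d}\left(\sum_{k=1}^{d} c_{ij}^{k} x_{k}\right)\frac{\partial f}{\partial x_{j}}=0,
\end{equation*}
so the coefficient matrix of the system acting on $\nabla f$ is exactly the matrix $A$ from~\eqref{eq:bbformula}. Hence, at each point $x\in\galg^{*}$, $\nabla f(x)$ must lie in $\ker A(x)$, a subspace of dimension $d-\rank A(x)$.

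The second ingredient is the \emph{involutivity} of the distribution generated by $\widehat{x}_{1},\ldots,\widehat{x}_{d}$, which is precisely the content of~\eqref{eq:javv}: the Lie bracket $[\widehat{x}_{i},\widehat{x}_{j}]_{LB}$ equals $\widehat{[x_{i},x_{j}]}_{\galg}$, and through the structure equations~\eqref{eq:cijk} this lies in the $\K$-linear span of the $\widehat{x}_{k}$. Moreover, since the entries of $A$ are linear polynomials in $x_{1},\ldots,x_{d}$, the set where $\rank A(x)$ attains its maximum value $r:=\rank A$ (viewed as a matrix over the polynomial ring, i.e.\ the rank over the field of rational functions) is a Zariski-open dense subset $U\subset\galg^{*}$, cut out by the non-vanishing of a suitable $r\times r$ minor. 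On $U$ the distribution generated by the $\widehat{x}_{i}$ has constant rank $r$ and is integrable, so the Frobenius theorem furnishes, locally at each point of $U$, exactly $d-r$ functionally independent common first integrals of the $\widehat{x}_{i}$, whose integral leaves are the generic coadjoint orbits of $\galg$.

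Combining the two steps gives $\nu = d - r = \dim\galg - \rank A$, which is~\eqref{eq:bbformula}. The main subtle point I anticipate is the interplay between the ``algebraic'' rank of the polynomial matrix $A$ appearing in the statement and the generic pointwise rank used by Frobenius: the two quantities agree because $A$ has polynomial entries and the generic pointwise rank of such a matrix coincides with its rank over the field of rational functions, but one must then interpret ``functional independence'' on the open dense locus $U$ rather than everywhere on $\galg^{*}$, which is the standard convention implicit in the statement.
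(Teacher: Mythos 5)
The paper does not actually prove this theorem: it is recalled verbatim from the cited reference~\cite{BeltramettiBlasi1966} and used as a black box (notably in the proof of \Cref{prop:casnum}), so there is no internal proof to compare yours against. That said, your argument is the standard one and it is sound. The two halves fit together correctly: the identity $\widehat{x}_{i}(f)=\sum_{j}A_{ij}\,\partial f/\partial x_{j}$ shows that the rows of $A(x)$ are precisely the component vectors of the $\widehat{x}_{i}$ at $x$, so the pointwise rank of the distribution they span equals $\rank A(x)$, and the condition $A\nabla f=0$ caps the number of functionally independent solutions at $d-\rank A$ near a generic point; involutivity via~\eqref{eq:javv} (the bracket $[\widehat{x}_{i},\widehat{x}_{j}]_{LB}$ is a \emph{constant-coefficient} combination $\sum_{k}c_{ij}^{k}\widehat{x}_{k}$, so no shrinking of the span can occur) plus Frobenius on the constant-rank locus supplies exactly that many local first integrals, which are constant on the generic coadjoint orbits. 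You are also right to flag the only delicate point, namely that the rank of $A$ over the field of rational functions coincides with the generic pointwise rank (a nonzero $r\times r$ minor is nonzero on a dense open set over $\R$ or $\CC$), and that the resulting invariants are a priori only local analytic functions on that dense open set --- which is exactly the sense in which the paper uses the formula, since in \Cref{prop:casnum} the authors likewise compute $\rank A(n)$ by exhibiting a nonvanishing minor at a special point and invoking genericity. I see no gap.
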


We conclude by recalling that for some other authors Casimir elements are
considered in the universal enveloping algebra (UEA) $U\galg$ of the Lie
algebra $\galg$ rather than in its symmetric algebra. For instance,
in~\cite{KirillovBook} a Casimir element is defined to be an element of
$U\galg$ invariant with respect to the adjoint action of $\galg$, defined by
extending the adjoint action of $\galg$ on $U\mathfrak{g}$. In fact, this
definition is analogous to the one we gave, because there exists a map
$\Sigma\colon S\galg \longrightarrow U\galg$, called the \emph{symmetrisation
map}, mapping polynomial Casimir elements in $S\lieg$ into polynomial Casimir
elements in $U\lieg$ in the sense just described.  For this reason, within this
paper we will only consider Casimir elements in the symmetric algebra $S\galg$,
see also~\cite[\S 2]{Campoamoretal2023}.

\subsection{The coalgebra symmetry approach for Hamiltonian systems}
\label{sss:coalg}

The concept of coalgebra was introduced in quantum group theory during the
1980s~\cite{Drinfeld1987}, see also the book~\cite{ChariPressley1994Book}.
However, its application to (super)integrable systems did not emerge until
the late 1990s through the pioneering work of Ballesteros, Ragnisco, and
their collaborators
in~\cite{Ballesteros_et_al_1996,BallesterosRagnisco1998}. Over the years,
this approach has been significantly expanded to encompass diverse
scenarios, leading to a deeper understanding of various integrable and
superintegrable systems. As evident from comprehensive reviews such
as~\cite{Ballesteros_et_al2009}, the coalgebra symmetry framework has also
been successfully applied in other areas, including the study of
superintegrable systems on non-Euclidean
spaces~\cite{BallesterosHerranz2007,Ballesteros_et_al2008PhysD}.
Additionally, it has been employed to investigate models with spin-orbital
interactions~\cite{Riglioni_et_al2014}, discrete quantum mechanical
systems~\cite{LatiniRiglioni2016}, and superintegrable systems related to
the generalized Racah algebra $R(n)$~\cite{DeBie_et_al2021, Latini2019,LMZ2021,LMZ2021b}.
Furthermore, adaptations of this method have been made for use with
comodule algebras in~\cite{Ballesteros_et_al2002} and loop coproducts in
\cite{Musso2010loop}. Recent work has also explored its application to
discrete-time systems\cite{GubLat_sl2,GLT_coalgebra,DrozGub_h6}. This
subsection will provide a brief overview of the coalgebra symmetry
construction, highlighting the main results and ideas in this theory.

The starting point of this approach is the following definition:

\begin{definition}
    A \emph{coalgebra} is a pair $(\mathcal{A},\Delta)$ where $\mathcal{A}$ is
    a unital, associative algebra and $\Delta\colon \mathcal{A} \longrightarrow
    \mathcal{A}^{\otimes2}$ is a \emph{coassociative} map.  That is,
    $\Delta$ is an algebra homomorphism from $\mathcal{A}$ to
    $\mathcal{A}^{\otimes2}$:
    \begin{equation}
        \Delta (x \cdot  y) = \Delta (x)  \cdot \Delta (y), \qquad \forall x, y \in \mathcal{A} \, ,
        \label{eq:homalg}
    \end{equation}
    such that:
    \begin{equation}
        (\Delta \otimes \id) \circ \Delta=(\id \otimes \Delta) \circ \Delta \, ,
        \label{eq:commdiag}
    \end{equation}
    that is the following diagram commute:
    \begin{equation}
        \begin{tikzpicture}[baseline={(0,0)},thick]
            \node (a1) at (0,0){$\mathcal{A}$};
            \node (a2) at ($({2*cos(30)},{2*sin(30)})$)
            {$\mathcal{A}^{\otimes2}$};
            \node (a3) at ($({2*cos(30)},{-2*sin(30)})$) {$\mathcal{A}^{\otimes2}$};
            \node (a4) at (4,0){$\mathcal{A}^{\otimes3}$};
            \draw[->] (a1) edge node[above left]{$\Delta$} (a2) (a2) edge node[above right]{$\Delta\otimes\id$}(a4) ;
            \draw[->] (a1) edge node[below left]{$\Delta$} (a3) (a3) edge node[below right]{$\id\otimes\Delta$} (a4);
        \end{tikzpicture}
        \label{eq:commdiag2}
    \end{equation}
    The map $\Delta$ is called the \emph{coproduct map}. 
    \label{def:coalgebra}
\end{definition}

When there is no possible confusion on the coproduct map, it is customary
to denote the coalgebra simply by $\mathcal{A}$. Our case of interest are
\emph{Poisson coalgebras}. That is, we will not assume $\mathcal{A}$ to be
a generic unital associative algebra, but rather we will consider it to be a
\emph{Poisson algebra}, i.e.\ a commutative associative algebra admitting a
bilinear derivation $\left\{ \, , \, \right\}$, satisfying the Jacobi
identity. The bilinear operation $\left\{ \,,\, \right\}$ is called a
\emph{Poisson bracket}. To build a Poisson coalgebra, the coalgebra structure
and the Poisson algebra structure have to be compatible in
the sense of the following definition:

\begin{definition}
    A \emph{Poisson coalgebra} is a triple $(\mathcal{A},\Delta,\left\{ \,
    , \, \right\})$ where $(\mathcal{A},\Delta)$ is a coalgebra and
    $(\mathcal{A},\left\{ \, , \, \right\})$ is a Poisson algebra, such
    that the coproduct map is a \emph{Poisson homomorphism} with respect to
    the induced Poisson bracket on $\mathcal{A}^{\otimes2}$:
    \begin{equation}
        \left\{x \otimes y, w \otimes z\right\}_{\mathcal{A}^{\otimes2}}
        =
        \left\{x, w\right\}_{\mathcal{A}} \otimes yz
        +xw \otimes \left\{y,z\right\}_{\mathcal{A}}, 
        \quad 
        x,y,z,w \in \mathcal{A},
        \label{eq:pstrucatensa}
    \end{equation}
    i.e.\ $\Delta(\left\{x,y\right\}_{\mathcal{A}})=\left\{
    \Delta(x),\Delta(y) \right\}_{\mathcal{A}^{\otimes2}}$.
    \label{def:poissoncoalgebra}
\end{definition}

The coproduct provides us a way to obtain
relevant elements living in the two-fold tensor product of a Poisson
algebra. In particular, the coproduct is not surjective, but its image is a
copy of the original Poisson algebra $\mathcal{A}$ in $\mathcal{A}^{\otimes
2}$: $\mathcal{A} \cong\Delta(\mathcal{A}) \subset \mathcal{A}^{\otimes2}$.

The construction of the coproduct can be extended in two different
ways for all $m>2$ \cite{Ballesteros_et_al2009}. Indeed, by recursion one
can define the \emph{left and right} higher-order $m$th \emph{coproducts}
$\Delta^{(m)}_{R}\colon  \mathcal{A}\longrightarrow \mathcal{A}^{\otimes m}$ and
$\Delta^{(m)}_{L}\colon \mathcal{A} \longrightarrow \mathcal{A}^{\otimes m}$, acting as:
\begin{subequations}
    \begin{align}
        \Delta_{L}^{(m)} &=
        \bigl(\overbrace{\id\otimes \id\otimes \dots \otimes \id}^{m-2}\otimes
        \Delta\bigr)\circ\Delta^{(m-1)}_{L},
        \label{left-higher-order-coproducts}
        \\
        \Delta^{(m)}_{R} &=
        \bigl(\Delta\otimes\overbrace{\id\otimes \id\otimes \dots \otimes \id}^{m-2}\bigr)\circ\Delta^{(m-1)}_{R}.
        \label{right-higher-order-coproducts}
    \end{align}
    \label{higher-order-coproducts}%
\end{subequations}
By induction on $m$, it can be proved that the $m$th coproducts
$\Delta^{(m)}_{L}$ and $\Delta_{R}^{(m)}$ are also Poisson maps on
$\mathcal{A}^{\otimes m}$ and define a copy of the original algebra:
    $\mathcal{A} \cong\Delta^{(m)}_{i}(\mathcal{A}) \subset \mathcal{A}^{\otimes m}$,
    $i=L,R$.

In particular, if $\Set{x_{1},\ldots,x_{{d}}}$ is the generating set of
$\mathcal{A}$, for a generic element $F\in\mathcal{A}$ we can write:
\begin{subequations}
    \begin{align}
        \Delta_{L}^{(m)}({F})(x_{1},\ldots,x_{{d}}) &= 
        {F}(\Delta_{L}^{(m)}(x_{1}),\ldots,\Delta_{L}^{(m)}(x_{{d}})),
        \label{eq:copfunctL}
        \\
        \Delta_{R}^{(m)}({F})(x_{1},\ldots,x_{{d}}) &= 
        {F}(\Delta_{R}^{(m)}(x_{1}),\ldots,\Delta_{R}^{(m)}(x_{{d}})).
        \label{eq:copfunctR}
    \end{align}
    \label{eq:copfunct}
\end{subequations}

Now, let us fix an $N\in\N$.  Then, following for
example~\cite{Musso2010gaudin} we can consider a chain of tensor products
$\mathcal{A}^{\otimes m}$ for all $m \leq N$ with left/right embeddings:
\begin{equation}
    \begin{tikzcd}[row sep=tiny]
        \iota_{m,L} \colon \mathcal{A}^{\otimes m} \arrow{r} & \mathcal{A}^{\otimes N}
        \\ 
        \phantom{\iota_{m,L}} x \arrow[mapsto]{r} & \iota_{m,L}(x)
        =
        x \otimes \underbrace{1\otimes 1\otimes \dots \otimes 1}_{N-m}, 
    \end{tikzcd}
    \label{eq:Algembed}
\end{equation}
and:
\begin{equation}
    \begin{tikzcd}[row sep=tiny]
        \iota_{m,R} \colon \mathcal{A}^{\otimes {m}} \arrow{r} & \mathcal{A}^{\otimes N}
        \\ 
        \phantom{\iota_{m,R}} x \arrow[mapsto]{r} & \iota_{m,R}(x)
        =
        \underbrace{1\otimes 1\otimes \dots \otimes 1}_{{N-m}}\otimes x. 
    \end{tikzcd}
    \label{eq:Algembed2}
\end{equation}
So, we can consider all the elements as lying in the final tensor
product $\mathcal{A}^{\otimes N}$.  In particular, we can consider all the
Poisson brackets in $\mathcal{A}^{\otimes N}$ in the following way: let
$x\in\mathcal{A}^{\otimes m}$ and $y\in\mathcal{A}^{\otimes m'}$, then:
\begin{equation}
    \left\{ x, y \right\}_{\mathcal{A}^{\otimes N}} = 
    \left\{ \iota_{m,L}(x),\iota_{m',L}(y) \right\}_{\mathcal{A}^{\otimes N}},
    \label{eq:algNpoisson}
\end{equation}
and, similarly, if we consider the tensor product within the right embedding,
or the mixed case.

Now, a crucial ingredient in the construction of Hamiltonian systems from
coalgebras is the fact that Poisson algebras admit Casimir
elements. This is a direct generalisation of what we discussed in the previous
section about Casimir elements for Lie algebras. Clearly, given a
Poisson algebra $(\mathcal{A},\left\{ .,. \right\})$, a
Casimir element $C\in\mathcal{A}$ is an element such that $\left\{ C,x
\right\}_{{\mathcal{A}}}=0$ for all $x \in\mathcal{A}$, to be compared with
\eqref{eq:plCas}.  We denote the space of Casimir elements of a
Poisson algebra by:
\begin{equation}
    \Cas(\mathcal{A},\left\{ .,. \right\}) =
    \Set{ C \in \mathcal{A} | \left\{ C,x \right\}=0, \,\forall x\in\mathcal{A} }.
    \label{eq:casimirspace}
\end{equation}
This space is a subalgebra and also an ideal of $\mathcal{A}$~\cite[Chap.\
1]{LaurentGengoux2018}. If $\Cas(\mathcal{A},\left\{ .,.  \right\})$ is
\emph{finitely generated} we can take a (minimal) set of its generators
$\Set{C_{1},\dots,C_{s}}$. We will say that such a set of generators is a set
of \emph{functionally independent Casimirs}. This name is justified from the
fact that in the case of the symmetric algebra $S\galg$ of a $d$-dimensional
Lie algebra $\galg$, through the isomorphism $S\galg\cong
\K[x_{1},\ldots,x_{d}]$ this notion reduces to the standard analytic notion of
functional independence.

Then, we can state the following fundamental result adapted to our notations:

\begin{theorem}[\cite{Ballesteros_et_al_1996,Ballesteros_et_al2009}]
    Assume we are given a finitely generated Poisson
    coalgebra $(\mathcal{A}, \Delta, \left\{ \,,\, \right\})$, with
    $\mathcal{A}$ generated by
    the set $\{x_1, \dots, x_d\}$, whose Casimir subalgebra is finitely
    generated $\Cas(\mathcal{A},\left\{ \,,\, \right\}) =
    \Set{C_{1},\ldots,C_{s}}$. Then, the two sets of functions:
    \begin{subequations}
        \begin{align}
          \rm{L} &= \Set{
            C_{j}^{(m)} :=
        (\iota_{m,L}\circ\Delta^{(m)}_{L})(C_j)(x_1,\dots,x_{d})}_{m=1,\dots,N,j=1,\dots,s},
            \label{eq:Ctotg}
            \\
            \rm{R} &= \Set{
            C_{j, (m)} := (\iota_{m,R}\circ\Delta^{(m)}_{R})(C_j)(x_1,\dots,x_{d})}_{m=1,\dots,N,j=1,\dots,s},
            \label{eq:rfun}
        \end{align}
        \label{eq:rlfunc}%
    \end{subequations}
    are made of Poisson-commuting functions on $\mathcal{A}^{\otimes N}$.
    Furthermore, they Poisson commute with
    $\Delta^{(N)}_{L}(x_i)=\Delta^{(N)}_{R}(x_i)$, $i=1, \dots, d$ and with all
    functions $\Delta_{L}^{(N)}(H)=\Delta_{R}^{(N)}(H)\in \mathcal{A}^{\otimes
    N}$, where $H \in \mathcal{A}$.
    \label{thm:fundcoalg}
\end{theorem}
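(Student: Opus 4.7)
The plan is to reduce the whole statement to two ingredients: \emph{(a)} each iterated coproduct $\Delta^{(m)}_L$ (and $\Delta^{(m)}_R$) is a Poisson algebra homomorphism from $\mathcal{A}$ to $\mathcal{A}^{\otimes m}$, and \emph{(b)} a ``telescoping'' form of coassociativity, namely
\begin{equation*}
\Delta^{(l)}_L \,=\, \bigl(\Delta^{(k)}_L\otimes \id^{\otimes (l-k)}\bigr)\circ \Delta^{(l-k+1)}_L \qquad (1\le k\le l),
\end{equation*}
together with its right-handed analogue. Ingredient \emph{(a)} would be proved by induction on $m$ from the recursion~\eqref{left-higher-order-coproducts}: the base case $m=2$ is the defining property of a Poisson coalgebra, while the inductive step uses that $\id^{\otimes j}\otimes \Delta$ is a Poisson map between tensor powers, via the Leibniz extension of~\eqref{eq:pstrucatensa}. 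Ingredient \emph{(b)} would be proved by a second short induction on $l-k$, based only on the coassociativity axiom~\eqref{eq:commdiag}.

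Given \emph{(a)} and \emph{(b)}, the pairwise commutativity within the L family becomes a one-line computation. Fix $k\le l$. Because the embeddings~\eqref{eq:Algembed} pad with the unit $1$, the bracket $\{C^{(k)}_j,C^{(l)}_{j'}\}_{\mathcal{A}^{\otimes N}}$ collapses to a bracket in $\mathcal{A}^{\otimes l}$. Writing $\Delta^{(l-k+1)}_L(C_{j'})=\sum_\alpha u_\alpha^{(1)}\otimes u_\alpha^{(2)}\otimes\cdots\otimes u_\alpha^{(l-k+1)}$ and applying \emph{(b)}, and using repeatedly~\eqref{eq:pstrucatensa} together with $\{\cdot,1\}=0$, this bracket simplifies to
\begin{equation*}
\sum_\alpha \left\{\Delta^{(k)}_L(C_j),\,\Delta^{(k)}_L(u_\alpha^{(1)})\right\}_{\mathcal{A}^{\otimes k}}\otimes u_\alpha^{(2)}\otimes\cdots \otimes u_\alpha^{(l-k+1)}.
\end{equation*}
By \emph{(a)}, the inner bracket equals $\Delta^{(k)}_L(\{C_j,u_\alpha^{(1)}\}_\mathcal{A})$, which vanishes because $C_j\in\Cas(\mathcal{A},\{\,,\,\})$. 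Substituting an arbitrary $H\in\mathcal{A}$ for $C_{j'}$ and taking $l=N$, the same calculation yields commutativity with $\Delta^{(N)}_L(H)$; in particular it holds for each generator $x_i$. The R-family case is symmetric, using the right-handed telescoping identity, and the equality $\Delta^{(N)}_L(x_i)=\Delta^{(N)}_R(x_i)$, itself a consequence of full coassociativity, handles the mixed statement.

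The only real obstacle is the careful formulation and verification of the telescoping identity \emph{(b)}. Conceptually it is ``just coassociativity'', but one has to track which tensor factor each intermediate $\Delta$ acts upon in order to reconcile the two recursions in~\eqref{higher-order-coproducts}; this is most cleanly expressed by observing that any ``binary tree'' of successive applications of $\Delta$ yields the same $m$-fold coproduct. Once \emph{(b)} is in hand, the rest of the theorem follows with no further ingredient beyond a single iteration of~\eqref{eq:pstrucatensa} and the Casimir property of $C_j$.
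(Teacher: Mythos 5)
The paper does not prove this theorem: it is imported verbatim from the cited references (Ballesteros et al.), so there is no in-paper argument to compare yours against. Your proof is, in substance, the standard one from that literature and is correct: the two ingredients you isolate (the $m$-fold coproducts being Poisson maps, and the telescoped coassociativity $\Delta^{(l)}_L=(\Delta^{(k)}_L\otimes\id^{\otimes(l-k)})\circ\Delta^{(l-k+1)}_L$, which indeed follows from generalized coassociativity) are exactly what is needed, the collapse of the bracket onto the first $k$ factors via $\{\cdot,1\}=0$ is sound, and you correctly use the Casimir property only on the element with the smaller index, which is why the same computation also gives commutation with $\Delta^{(N)}_L(H)$ for arbitrary $H$. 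One small point worth making explicit: the theorem asserts commutativity only \emph{within} each family L and R (the paper later remarks that left and right integrals do not commute with each other in general), which is precisely what your argument delivers.
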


This result was obtained in~\cite{Ballesteros_et_al_1996} in the case of the
left coproduct. The extension to incorporate also the right coproducts
originally introduced in~\cite{Ballesteros_et_al_2004} can be found, for
example, in~\cite{Ballesteros_et_al2009}.  The functions $C_{j}^{(m)}$ and
$C_{j,(m)}$ are usually called \emph{left and right Casimir functions}. Note
that, despite the name these functions are not Casimir elements in the standard
sense, and in general they do not commute with each other~\cite{Latini2019}.
Moreover, because of the coassociativity property we have
$C_{j}^{(N)}=C_{j,(N)}$, at a fixed $j=1, \ldots s$.

Now, to discuss integrable systems, we need to put this algebraic construction
with a Hamiltonian system. This is done through the so-called symplectic
realisation of a Poisson algebra. A symplectic realisation
is, in general, a Poisson algebra homomorphism from a given Poisson algebra
$(\mathcal{A},\left\{\,,\, \right\})$ to the algebra of smooth functions on an
open subset $U$ of a symplectic manifold $(\mathcal{M},\Omega)$, with $\dim
\mathcal{M}=2M$. Since we are working locally, from Darboux theorem, we
can consider on the open subset $U$ the canonical coordinates
$(\vb{q}_{1},\vb{p}_{1})$, and the symplectic form to be the canonical one
$\omega$. Under such hypotheses a \emph{$M$ degrees of freedom canonical
symplectic realisation } is the following Poisson homomorphism:
\begin{equation}
    \begin{tikzcd}[row sep=tiny]
        \DD\colon (\mathcal{A},\left\{ \,,\, \right\})  \arrow{r} & 
        (\mathcal{C}^{\infty}(U),\left\{\, ,\, \right\}_{\omega})
        \\ 
        \phantom{\mathrm{D}\colon S\galg}x_{i} \arrow[mapsto]{r} &
        x_{i}(\vb{q}_{1},\vb{p}_{1}).
    \end{tikzcd}
    \label{eq:grepr}
\end{equation}

If the Poisson algebra is a Poisson coalgebra, from a $M$-degrees of freedom
symplectic realisation we can find a $N M$-degrees of freedom symplectic
realisation by taking the tensor product of realisations: $\DD^{\otimes N}$. The
factors of the tensor product are usually physically interpreted as ``sites'',
where on each of them resides a ``particle'' with $M$-degrees of freedom. 
We denote the canonical coordinates in the $NM$-degree of freedom
realisation by:
\begin{subequations}
\begin{equation}
 (\vb{q},\vb{p}) = (\vb{q}_{1},\vb{p}_{1},\ldots,\vb{q}_{N},\vb{p}_{N}),
    \label{eq:genqpa}
\end{equation}
where:
\begin{equation}
    \vb{q}_{j} ={(q_{j,1},\ldots,q_{j,M})}^{T},\quad
\vb{p}_{j}={(p_{j,1},\ldots,p_{j,M})}^{T},
    \qquad j=1,\ldots,N.
    \label{eq:genqpb}
\end{equation}
    \label{eq:genqp}
\end{subequations}

In general, by applying the tensor product realisation we can construct the
first integrals, using the left and right Casimir introduced
in~\Cref{thm:fundcoalg}. Moreover, using the last part of~\Cref{thm:fundcoalg}
we see that we can lift a one-particle, $M$-degrees of freedom Hamiltonian $H
\in \mathcal{C}^{\infty}(U)$ to generate an $N$ particle, $NM$-degrees of
freedom Hamiltonian equipped with \emph{many} first integrals, usually called
\emph{universal integrals}~\cite{BallesterosHerranz2007,Latini2019}. This is
the content of the following result:

\begin{theorem}
    Assume we are given a finitely generated Poisson coalgebra $(\mathcal{A},
    \Delta, \left\{ \,,\, \right\})$ whose Casimir subalgebra is finitely
    generated $\Cas(\mathcal{A},\left\{ \,,\, \right\}) =
    \Set{C_{1},\ldots,C_{s}}$, and assume we are given a $M$-degrees of
    freedom canonical realisation $D\colon\mathcal{A}\longrightarrow
    \mathcal{C}^{\infty}(U)$. Then, given $H\in \mathcal{A}$ and the corresponding
    $M$-degrees of freedom Hamiltonian $H(\vb{q}_{1},\vb{p}_{1}):=\DD(H)\in
    \mathcal{C}^{\infty}(U)$, the $N M$-degrees of freedom Hamiltonian:
    \begin{equation}
        H^{(N)}(\vb{q},\vb{p}):= (\DD^{\otimes N}\circ \Delta_{L}^{(N)})(H) 
        \label{eq:HN}
    \end{equation}
    admits two sets of $sN$ commuting first integrals given by:
    \begin{subequations}
        \begin{align}
          L &= 
            \Set{C_{j}^{(m)}(\vb{q},\vb{p}):= \DD^{\otimes N}(C_{j}^{(m)}) }_{m=1,\dots,N,j=1,\dots,s},
            \label{eq:CLRrealF}
            \\
            R &= 
            \Set{C_{j,(m)} (\vb{q},\vb{p}):= \DD^{\otimes N}(C_{j,(m)}) }_{m=1,\dots,N,j=1,\dots,s},
            \label{eq:CLRrealG}
        \end{align}
        \label{eq:CLRreal}
    \end{subequations}
    where the functions $C_{j}^{(m)}$ and $C_{j, (m)}$ are defined in equation~\eqref{eq:rlfunc}.
    \label{thm:coalgfund2}
\end{theorem}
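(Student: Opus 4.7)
The plan is to derive \Cref{thm:coalgfund2} as a direct corollary of \Cref{thm:fundcoalg}, by transporting the purely algebraic commutation identities available in $\mathcal{A}^{\otimes N}$ through the symplectic realisation $\DD^{\otimes N}$. Since the proposed first integrals $C_j^{(m)}(\vb{q},\vb{p})$ and $C_{j,(m)}(\vb{q},\vb{p})$ are by definition the images under $\DD^{\otimes N}$ of the abstract elements $C_j^{(m)}, C_{j,(m)}\in\mathcal{A}^{\otimes N}$ introduced in \eqref{eq:rlfunc}, the whole statement will follow once we know that $\DD^{\otimes N}$ is a Poisson algebra homomorphism.

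First I would check that the $N$-fold tensor product of the canonical realisation
\begin{equation*}
 \DD^{\otimes N}\colon \mathcal{A}^{\otimes N}\,\longrightarrow\,
 \mathcal{C}^{\infty}(U^{N}),\qquad
 U^{N}\subset (\mathcal{M},\Omega)^{N},
\end{equation*}
is indeed a Poisson homomorphism with respect to the tensor-product bracket $\{\,,\,\}_{\mathcal{A}^{\otimes N}}$ of \eqref{eq:pstrucatensa} and to the canonical Poisson bracket induced on $U^{N}$ by the product symplectic form with coordinates $(\vb{q},\vb{p})$ as in \eqref{eq:genqp}. This reduces to an induction on $N$ starting from the hypothesis that $\DD$ is a Poisson homomorphism and from the coalgebra compatibility \eqref{eq:pstrucatensa}, and it is the only genuinely technical verification required.

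Once that is in place, I would apply \Cref{thm:fundcoalg} to the Poisson coalgebra $(\mathcal{A},\Delta,\{\,,\,\})$ to obtain, for all admissible $j,j',m,m'$, the abstract identities
\begin{equation*}
 \left\{C_j^{(m)},\,C_{j'}^{(m')}\right\}_{\mathcal{A}^{\otimes N}}=0,
 \quad
 \left\{C_{j,(m)},\,C_{j',(m')}\right\}_{\mathcal{A}^{\otimes N}}=0,
\end{equation*}
together with $\{C_j^{(m)},\Delta_L^{(N)}(H)\}_{\mathcal{A}^{\otimes N}}=0$ and the analogous identity for $C_{j,(m)}$, remembering that $\Delta_L^{(N)}(H)=\Delta_R^{(N)}(H)$ on the central generators considered here. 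Applying $\DD^{\otimes N}$ to each of these identities and invoking the homomorphism property transports them into Poisson bracket identities in $\mathcal{C}^{\infty}(U^{N})$, which is exactly the statement that the functions in $L$ pairwise commute, the functions in $R$ pairwise commute, and every element of $L\cup R$ Poisson-commutes with $H^{(N)}(\vb{q},\vb{p})=\DD^{\otimes N}(\Delta_L^{(N)}(H))$.

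Finally, I would observe that the cardinality count $sN$ for each family is immediate from the ranges $j=1,\dots,s$ and $m=1,\dots,N$. The main (and only) obstacle is the inductive verification that $\DD^{\otimes N}$ is a Poisson homomorphism for the tensor-product bracket; everything else is a formal consequence of \Cref{thm:fundcoalg} once that functoriality statement is established. Thus the proof is essentially a \emph{transport of structure} argument: \Cref{thm:fundcoalg} provides the abstract involutivity in $\mathcal{A}^{\otimes N}$, and $\DD^{\otimes N}$ pushes it forward to the symplectic setting required for Hamiltonian integrability.
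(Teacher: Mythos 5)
Your argument is correct and is exactly the standard ``transport of structure'' proof used in the coalgebra-symmetry literature: the paper itself gives no proof of \Cref{thm:coalgfund2}, presenting it as a background result imported from the cited works of Ballesteros et al., and your reduction to \Cref{thm:fundcoalg} plus the verification that $\DD^{\otimes N}$ is a Poisson homomorphism for the tensor-product bracket is precisely how those references establish it. One small imprecision: the identity $\Delta_{L}^{(N)}(H)=\Delta_{R}^{(N)}(H)$ holds for \emph{every} $H\in\mathcal{A}$ by coassociativity of the coproduct (as stated in \Cref{thm:fundcoalg}), not merely ``on the central generators'', so that step needs no restriction.
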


Let us now discuss the main case of interest, that is the one of Lie--Poisson
algebras $(\mathcal{F}(\galg^*),\{,\})$, or if only polynomials are involved
the symmetric algebra $S\galg \cong \Pol(\galg^*)\subset \mathcal{F}(\galg^*)$
of a Lie algebra $\galg$.  All Lie--Poisson algebras admit a common coproduct,
called the \emph{primitive coproduct} $\Delta$~\cite{Tjin1992}, which can be
defined on the (linear) generators of the symmetric algebra as:
 \begin{equation}
    \begin{tikzcd}[row sep=tiny]
        \Delta \colon S\galg \arrow{r} & (S\galg)^{\otimes2}
        \\ 
        \phantom{\Delta} x_{i} \arrow[mapsto]{r} & \Delta(x_{i})
        = x_{i} \otimes 1 + 1 \otimes x_{i}, 
    \end{tikzcd}
    \label{eq:DeltaP}
\end{equation}
then extended to generic elements of $S\galg$ through the homorphism
property~\eqref{eq:homalg}, and eventually to the whole algebra of smooth
functions $\mathcal{F}(\galg^{*})$. Many different coproducts are build upon
deformation of the primitive coproduct, see e.g.~\cite{ChariPressley1994Book}
and the papers~\cite{LyakhovskyMudrov1992,BallesterosHerranz1996Rmat}.  The
primitive coproduct allows great simplifications in the theory.  For instance,
it makes the construction of the tensor realisations from a basic one
immediate. For the sake of convenience, we enclose this result in the following
lemma:

\begin{lemma}
    Consider a Lie algebra $\galg=\Span_{\R}\Set{e_{1},\ldots,e_{d}}$, the
    induced primitive coproduct on $S\galg\cong
    \R[x_{1},\ldots,x_{d}]$~\eqref{eq:DeltaP}, and a $M$-degrees of freedom
    realisation $\DD\colon (\mathcal{A},\pb{\;}{\;})  \longrightarrow
    (\mathcal{C}^{\infty}(U),\pb{\;}{\;}_{\omega})$~\eqref{eq:grepr}. Then, we
    have the following expressions for the $N M$-degrees of freedom canonical symplectic realisation
    of the generators: 
    \begin{subequations}
        \begin{align}
            \DD^{\otimes N}( \iota_{m,L}\circ \Delta_{L}^{(m)}(x_{i}) ) &=
            \sum_{j=1}^{m}x_{i}(\vb{q}_{j},\vb{p}_{j}),
            \label{eq:tensorXL}
            \\
            \DD^{\otimes N}(\iota_{m,R}\circ {\Delta_{R}^{(m)}}(x_{i}) ) &=
            \sum_{j=N-m+1}^{N}x_{i}(\vb{q}_{j},\vb{p}_{j}),
            \label{eq:tensorXR}
        \end{align}
        \label{eq:tensorX}
    \end{subequations}
    where $\vb{q}_{j}$ and $\vb{p}_{j}$ are given in equation~\eqref{eq:genqpb}.
    \label{lem:coprodreal}
\end{lemma}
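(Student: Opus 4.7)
The plan is to prove both identities by induction on $m$, treating the left case in detail; the right case is entirely analogous after reversing the order of tensor factors. The key observation is that the primitive coproduct $\Delta$ on a linear generator $x_i$ produces a sum of two pure tensors, each with $x_i$ in one slot and $1$ in the other, and this structure propagates cleanly through the recursion~\eqref{higher-order-coproducts}.

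I would first establish the base case $m=1$: by definition $\Delta_L^{(1)}=\id$, so $\iota_{1,L}(x_i)=x_i\otimes 1^{\otimes(N-1)}$, and since $\DD$ is an algebra homomorphism with $\DD(1)=1$, one obtains $\DD^{\otimes N}(\iota_{1,L}(x_i))=x_i(\vb{q}_1,\vb{p}_1)$, as required. For the inductive step, assume that
\begin{equation*}
\Delta_L^{(m-1)}(x_i)\,=\,\sum_{j=1}^{m-1} T_j^{(m-1)},
\end{equation*}
where $T_j^{(m-1)}\in\mathcal{A}^{\otimes(m-1)}$ is the pure tensor with $x_i$ in the $j$-th slot and $1$ elsewhere. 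Applying $\id^{\otimes(m-2)}\otimes\Delta$ as prescribed in~\eqref{left-higher-order-coproducts}, the terms with $j\leq m-2$ yield $T_j^{(m)}$ (because $\Delta(1)=1\otimes 1$), while the single term with $j=m-1$ produces $\Delta(x_i)=x_i\otimes 1+1\otimes x_i$ in the last two slots, contributing $T_{m-1}^{(m)}+T_m^{(m)}$. Thus $\Delta_L^{(m)}(x_i)=\sum_{j=1}^{m}T_j^{(m)}$, which extends the expected form by one slot.

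Having obtained this explicit expression, applying $\iota_{m,L}$ embeds each $T_j^{(m)}$ as the tensor in $\mathcal{A}^{\otimes N}$ with $x_i$ in slot $j$ and $1$ everywhere else. Because $\DD^{\otimes N}$ is defined factor-wise and $\DD(1)=1$, this becomes exactly $x_i(\vb{q}_j,\vb{p}_j)$, and summing over $j=1,\ldots,m$ gives formula~\eqref{eq:tensorXL}. For the right case~\eqref{eq:tensorXR}, the same induction using~\eqref{right-higher-order-coproducts} shows that $\Delta_R^{(m)}(x_i)$ has $x_i$ appearing in slots $1,\ldots,m$ of $\mathcal{A}^{\otimes m}$, but the embedding $\iota_{m,R}$ places these slots at positions $N-m+1,\ldots,N$ of $\mathcal{A}^{\otimes N}$, producing the summation range stated in the lemma.

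The argument is essentially bookkeeping, and no analytic or Poisson-theoretic obstacle arises: the Poisson-homomorphism property of $\DD$ is not even needed here, only its compatibility with the unit and the tensor product structure. The only subtle point — and therefore the step I would present most carefully — is verifying that the recursion in~\eqref{left-higher-order-coproducts}, which acts only on the last two factors, nonetheless generates the full symmetric sum over all slots $1,\ldots,m$. This is immediate once one notes that each previous application of $\Delta$ to a $1$ factor acts as identity, so the ``new'' slot is consistently appended at the right end without disturbing the positions of the earlier $x_i$'s.
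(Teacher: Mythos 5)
Your proof is correct; the paper states this lemma without proof, treating it as an immediate consequence of the primitive coproduct and the factor-wise definition of $\DD^{\otimes N}$, and your induction on $m$ supplies exactly the bookkeeping verification that the paper omits. The key points — that $\Delta(1)=1\otimes 1$ keeps earlier slots undisturbed, that $\Delta(x_i)=x_i\otimes 1+1\otimes x_i$ appends one new slot, and that $\iota_{m,R}$ shifts the slots to positions $N-m+1,\ldots,N$ — are all handled correctly.
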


\

Let us summarise the construction and place it in the context of
integrability of Hamiltonian systems. From \Cref{thm:coalgfund2} we have that
a Hamiltonian system endowed with coalgebra symmetry is naturally equipped with
two sets of first integrals whose number depends on the Casimir of the
underlying Lie (or Poisson) algebra. However, for a
``generic'' Poisson algebra such a Hamiltonian system is not
necessarily integrable, while for some specific ones it can be \emph{even more}
than integrable. To this end, we recall some basic definitions in the theory of
integrable systems:
\begin{itemize}
    \item a Hamiltonian system possessing $N$ functionally independent first
        integrals in involution, Hamiltonian included, is called
        \emph{Liouville integrable}~\cite{Arnold1997};
    \item a Liouville integrable system possessing $N+k$, $1\leq k\leq N-1$
        functionally independent first integrals is called \emph{superintegrable},
        in particular if $k=N-1$ it is called \emph{maximally superintegrable
        (MS)}, and if $k=N-2$ it is called \emph{quasi-maximally
        superintegrable (QMS)}~\cite{MillerPostWinternitz2013R,Rauch1983};
    \item a Hamiltonian system possessing $l$ with $1\leq l\leq N-1$ commuting
        first integrals is called \emph{a system of
        Poincar\'e--Lyapunov--Nekhoroshev (PLN) type with rank $l$}, in
        particular if $l=N-1$ it is called
        \emph{quasi-integrable}~\cite{Nekhoroshev1994,Gaeta2002,Gaeta2003}.
\end{itemize}
Now, for a Hamiltonian system with coalgebra symmetry there are algebraic
conditions to be fulfilled on the algebra and on its realisation to obtain
automatically Liouville integrability, see~\cite{Ballesteros_et_al2009}.
Moreover, for some specific realisation, the first integrals computed through
formulas~\eqref{eq:CLRreal} can be trivial up to a certain $N$, i.e.\ a
constant. Without entering into the details, we simply note that the only
aspect that may fail in terms of integrability concerns the functional
independence of the first integrals produced via \Cref{thm:coalgfund2}.
However, for any given Casimir, the left and right integrals generated from it
are, by construction, mutually functionally independent. We will show the
details of this construction in the explicit example of the Lie algebra
$\galg_{n}$ in \Cref{sec:coalg}.

\section{The $\mathfrak{g}_{n}$ chain of Lie algebras}
\label{sec:gndef}

\noindent In this section we rigorously introduce the $\mathfrak{g}_{n}$ Lie
algebra and derive its first properties. In particular, we present a matrix
representation of $\mathfrak{g}_{n}$ in the special linear algebra of a
properly chosen dimension.

Throughout this section and the rest of the paper we will not explicitly
indicate the base field for the Lie algebras. In general, the reader can
assume either the complex or the real numbers. Field-dependent results or
statements will be specified when needed.

\subsection{Definition and main properties}

The Lie algebra $\mathfrak{g}_n$ is a Lie algebra of dimension
$T_{n}=n(n+1)/2$ characterized by the following
generators\footnote{We use the notation $T_{n}$ because the
dimensions of the Lie algebras $\galg_{n}$ are the \emph{triangular numbers}.}:
\begin{itemize}
    \item the $3$ generators $\Set{h, x_-,x_+}$;
    \item the $2(n-2)$ generators $\Set{y_{i,-}, y_{i,+}}$ for $i=1, \dots,
        n-2$, ($n \geq 3$);
    \item the $T_{n-2}$ central elements $z_{i,j}$ with $1\leq i\leq j\leq n-2$,
        ($n \geq 3$);
\end{itemize}
with commutation relations:
\begin{equation}
    \begin{array}{lll}
        [x_+, x_-]=h, &  
        [h, x_\pm]=\pm 2 x_\pm,
        &
        [h, y_{i,\pm}]=\pm  y_{i,\pm}, 
        \\{}
        [x_-, y_{i,-}]=[x_+, y_{i,+}]=0,
        &
        [x_-, y_{i,+}] =y_{i,-}, & [x_+, y_{i,-}]=y_{i,+}, 
        \\{}
        [y_{i,+},y_{j,+}]= [y_{i,-},y_{j,-}]=0 &
        [y_{i,+}, y_{j,-}]=z_{\min(i,j), \max(i,j)}
        &
        [z_{i,j},-] = 0.
    \end{array}
    \label{eq:gncommrel}
\end{equation}
As stated above, the centre of $\galg_{n}$ is
\begin{equation}
    Z(\galg_{n}) = \Span_{\K}\Set{ z_{i,j}}_{1\leq i \leq j \leq n-2}.
    \label{eq:Zgn}
\end{equation}
Note that 
$\dim Z(\galg_{n})=T_{n-2} = (n-1)(n-2)/2$.

In the following we denote by $\overline{\galg}_{n}$ the
quotient algebra obtained as the quotient of $\galg_n$ by the ideal
$Z(\galg_n)$, i.e.\ $\overline{\galg}_{n}=\galg_{n}/Z(\galg_{n})$, see for
instance~\cite[\S 9.1]{FultonHarris1991}.

By construction, for $n=2$ we have $\mathfrak{g}_{2}=\Sl_{2}(\K)$.  Indeed, in
this case the generators $y_{i,\pm}$ and $z_{i,j}$ are not present.  Next, for
$n=3$ we get a six-dimensional Lie algebra with generators
$\Set{h,x_+,x_-,y_+:=y_{1,+}, y_-:=y_{1,-}, z:=z_{1,1}}$. This Lie algebra is
isomorphic to the two-photon Lie algebra
$\mathfrak{h}_{6}$~\cite{Zhang_et_al1990}, which is in turn isomorphic to the
$(1+1)$-Schr\"odinger algebra~\cite{Hagen1972,Niederer1972}. For $n=4$, we have a
ten-dimensional Lie algebra isomorphic to the $\mathfrak{A}_{10}$ Lie algebra
introduced in~\cite{GLT_coalgebra}. The peculiarity of these Lie algebras is
the fact that they form a chain of inclusions:
\begin{equation}
    \Sl_{2}(\K)\subset \mathfrak{h}_{6}\subset \mathfrak{A}_{10},
    \label{eq:chain}
\end{equation}
where each Lie algebra is ``embedded'' as a Lie subalgebra in the next one.
This result can be made more general in the following lemma:

\begin{lemma}
    For a fixed $n \, \geq 2$ we have the following chain of inclusions
    \begin{equation}
        \galg_{2}\subset \mathfrak{g}_{3}\subset \cdots \subset \mathfrak{g}_{n-1}
        \subset \galg_{n},
        \label{eq:chaing}
    \end{equation}
    where any Lie algebra is a Lie subalgebra of the following one.
    \label{lem:inclusion}
\end{lemma}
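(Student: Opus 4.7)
By transitivity, it suffices to prove that for every $k$ with $2\leq k\leq n-1$ the Lie algebra $\galg_{k}$ is (isomorphic to) a Lie subalgebra of $\galg_{k+1}$, i.e.\ $\galg_{k}\subset\galg_{k+1}$. The plan is to exhibit an explicit embedding by sending each generator of $\galg_k$ to the generator of $\galg_{k+1}$ carrying the same symbol, and then to verify that the resulting subspace is closed under the Lie bracket of $\galg_{k+1}$.

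More precisely, I would define the linear map $\iota_k\colon \galg_k\longrightarrow \galg_{k+1}$ by
\begin{equation*}
\iota_k(h)=h,\quad \iota_k(x_\pm)=x_\pm,\quad \iota_k(y_{i,\pm})=y_{i,\pm}\ (1\leq i\leq k-2),\quad \iota_k(z_{i,j})=z_{i,j}\ (1\leq i\leq j\leq k-2).
\end{equation*}
Since the generators on the right are linearly independent by the definition of $\galg_{k+1}$, this map is injective, and by construction it is a linear isomorphism onto its image. The only thing left to check is that $\iota_k$ preserves the brackets, i.e.\ that $\iota_k([u,v]_{\galg_k})=[\iota_k(u),\iota_k(v)]_{\galg_{k+1}}$ for every pair of generators $u,v$ of $\galg_k$.

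This is done by inspecting each line of the defining relations~\eqref{eq:gncommrel}. The brackets involving $h$, $x_\pm$ and the centre involve only symbols shared by $\galg_k$ and $\galg_{k+1}$, and they are literally identical in the two presentations. The brackets $[x_\pm, y_{i,\mp}]$ and $[h,y_{i,\pm}]$ produce only $y_{i,\mp}$ or $y_{i,\pm}$ with the same $i$, hence never involve the extra generators $y_{k-1,\pm}$ of $\galg_{k+1}$. The only relation that could, a priori, leave the image of $\iota_k$ is $[y_{i,+},y_{j,-}]=z_{\min(i,j),\max(i,j)}$; but whenever $1\leq i,j\leq k-2$ we have $\min(i,j),\max(i,j)\leq k-2$, so the output is of the form $z_{a,b}$ with $1\leq a\leq b\leq k-2$, which lies in the image of $\iota_k$. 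Thus $\iota_k(\galg_k)$ is a Lie subalgebra of $\galg_{k+1}$ and $\iota_k$ is a Lie algebra monomorphism, giving $\galg_k\subset\galg_{k+1}$.

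There is no serious obstacle: the argument is a direct verification, and the only point requiring a moment of attention is the behaviour of the index structure in the $[y_{i,+},y_{j,-}]$ bracket, which is precisely the reason the centre $Z(\galg_k)$ is indexed by pairs $(i,j)$ with $i,j\leq k-2$. Iterating $\iota_{n-1}\circ\iota_{n-2}\circ\cdots\circ\iota_2$ produces the claimed chain~\eqref{eq:chaing}.
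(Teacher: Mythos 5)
Your proof is correct and follows essentially the same route as the paper: both arguments identify $\galg_k$ with the span of the same-named generators inside $\galg_{k+1}$ and verify closure under the bracket by inspecting the defining relations, the only delicate point being that $[y_{i,+},y_{j,-}]=z_{\min(i,j),\max(i,j)}$ stays within the allowed index range. The paper organises this verification as a block decomposition of the commutation table (and additionally observes that the complementary subspace $I_{k}$ is an ideal), but the underlying check is the same as yours.
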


\begin{proof}
    Let us note that, given the isomorphisms discussed above, we can restrict
    to prove this statement for $n\geq 3$. To be more precise, we observe that 
    for all $2<k\leq n$ the following decomposition \emph{as vector spaces}
    holds true:
    \begin{equation}
        \galg_{k} = \galg_{k-1} \oplus
        I_{k} \supset \galg_{k-1},
        \label{eq:galgdecsub}
    \end{equation}
    where:
    \begin{equation}
        I_{k} = \Span_{\K}\Set{ y_{k-2,-},y_{k-2,+}}
        \oplus
        \Span_{\K}\Set{ z_{1,k-2},\ldots,z_{k-2,k-2}}.
        \label{eq:Ik}
    \end{equation}
    Now, we see that $[\galg_{k},I_{k}]\subset I_{k}$, that is $I_{k}$ is an
    ideal in $\galg_{k}$, and that $[\galg_{k-1},\galg_{k-1}]\subset
    \galg_{k-1}\subset \galg_k$, that is $\galg_{k-1}$ is a Lie subalgebra of
    $\galg_{k}$. Both these statements can be read from the following
    decomposition of the commutation table:
    \begin{equation}
        \begin{array}{c|c|cc|ccc}
            & \galg_{k-1} & y_{k-2,+} & y_{k-2,-} & z_{1,k-2} & \cdots & z_{k-2,k-2}
            \\
            \toprule
            \galg_{k-1} & L_{k-1} & \multicolumn{2}{c|}{W_{k}} & \vb{0} & \cdots & \vb{0}
            \\
            \midrule
            y_{k-2,+} & \multirow{2}{*}{$-W_{k}^{T}$}  & 0 & -z_{k-2,k-2} & 0 & \cdots & 0 
            \\
            y_{k-2,-} &  & z_{k-2,k-2} & 0 & 0 & \cdots & 0
            \\
            \midrule
            z_{1,k-2} & \vb{0}^T & 0 & 0 & 0 & \cdots &0
            \\
            \vdots & \vdots & \vdots & \vdots & \vdots & \ddots & \vdots
            \\
            z_{k-2,k-2} & \vb{0}^T & 0 & 0 & 0 & \cdots & 0
            \\
            \bottomrule
        \end{array}
        \label{eq:tabledec}
    \end{equation}
    where $L_{k-1}$ is the commutation table for $\galg_{k-1}$, and
    \begin{equation}
        W_{k} :=
        \begin{pmatrix}
            -y_{k-2,-} & y_{k-2,+}
            \\
            0 & y_{k-2,-}
            \\
            y_{k-2,+} & 0
            \\
            0 & -z_{1,k-2}
            \\
            z_{1,k-2} & 0
            \\
            0 & 0
            \\
            0 & -z_{2,k-2}
            \\
            z_{2,k-2} & 0
            \\
            0 & 0
            \\
            0 & 0
            \\
            0 & -z_{3,k-2}
            \\
            z_{3,k-2} & 0
            \\
            0 & 0
            \\
            \vdots & \vdots
            \\
            0 & 0
            \\
            0 & -z_{k-3,k-2}
            \\
            z_{k-3,k-2} & 0
        \end{pmatrix}.
        \label{eq:abdec}
    \end{equation}
    This ends the proof of the lemma.
\end{proof}

So, following the statement of the previous Lemma, we say that the
sequence of Lie algebras $\set{\galg_{n}}_{n \geq 2}$ form a \emph{chain of
Lie algebras}. Next, we show the Levi decomposition~\cite[\S
2.2]{SnobWinternitz2017book} of the Lie algebra $\galg_{n}$:

\begin{lemma}
    The Lie algebra $\galg_{n}$ has the following Levi decomposition:
    \begin{equation}
        \galg_{n} = \Sl_2(\K) \oplus_{S} \mathfrak{r}_{n},
        \label{eq:ld}
    \end{equation}
    where $\mathfrak{r}_{n}$ is the radical, generated by the elements
    $y_{i,\pm}$, $i=1,\ldots,n-2$ and $z_{i,j}$, $i\leq j = 1,\ldots,n-2$,
    whose non-zero commutation relations are:
    \begin{equation}
        [y_{i,+},y_{j,-}] = z_{\min(i,j),\max(i,j)}.
        \label{eq:genheinsenberg}
    \end{equation}
    \label{lem:ld}
\end{lemma}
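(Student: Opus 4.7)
The plan is to read the Levi decomposition directly off the commutation table~\eqref{eq:gncommrel}. I would set $\mathfrak{r}_n := \Span_{\K}\{y_{i,\pm} : 1 \leq i \leq n-2\} \oplus Z(\galg_n)$, so that $\galg_n = \Span_{\K}\{h, x_+, x_-\} \oplus \mathfrak{r}_n$ as a vector space, and verify in turn: (i) $\mathfrak{r}_n$ is an ideal; (ii) $\mathfrak{r}_n$ is solvable; (iii) the complement is a subalgebra isomorphic to $\Sl_2(\K)$; (iv) $\mathfrak{r}_n$ is maximal solvable, hence is the radical.

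For (i) and (iii), the relations $[h, y_{i,\pm}] = \pm y_{i,\pm}$, $[x_\pm, y_{i,\mp}] = y_{i,\pm}$, $[x_\pm, y_{i,\pm}] = 0$, $[y_{i,+}, y_{j,-}] = z_{\min(i,j),\max(i,j)}$ together with the centrality of the $z_{i,j}$ show that $[\galg_n, \mathfrak{r}_n] \subset \mathfrak{r}_n$, while $[x_+, x_-] = h$ and $[h, x_\pm] = \pm 2 x_\pm$ identify the complement with $\Sl_2(\K)$. For (ii), the only non-vanishing brackets between generators of $\mathfrak{r}_n$ are $[y_{i,+}, y_{j,-}] = z_{\min(i,j),\max(i,j)}$, so $[\mathfrak{r}_n, \mathfrak{r}_n] \subset Z(\galg_n)$; since $Z(\galg_n)$ is abelian this forces $\mathfrak{r}_n^{(2)} = 0$, i.e.\ $\mathfrak{r}_n$ is metabelian and in particular solvable.

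For (iv), the induced bracket on $\galg_n/\mathfrak{r}_n$ is that of $\Sl_2(\K)$, which is simple, so the only solvable ideal of the quotient is zero; hence any solvable ideal of $\galg_n$ is contained in $\mathfrak{r}_n$, and $\mathfrak{r}_n$ is the radical. Together with (iii) this yields the semidirect sum~\eqref{eq:ld}, and equation~\eqref{eq:genheinsenberg} is simply the restatement of the only non-vanishing brackets inside $\mathfrak{r}_n$. There is no genuine obstacle in the argument: every step is a direct inspection of~\eqref{eq:gncommrel}. The only point needing a word beyond the commutation table is the maximality of $\mathfrak{r}_n$ among solvable ideals, which follows from the simplicity of the quotient $\Sl_2(\K)$ as above.
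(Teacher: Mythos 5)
Your proof is correct and follows the same route as the paper, which simply states that the Levi decomposition ``follows trivially from the commutation relations''~\eqref{eq:gncommrel}; you have merely spelled out the routine verifications (ideal, two-step nilpotency, the $\Sl_2(\K)$ complement, and maximality via simplicity of the quotient) that the authors leave implicit. No discrepancy to report.
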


\begin{proof}
    This statement follows trivially from the commutation relations
    of $\galg_{n}$~\eqref{eq:gncommrel}.
\end{proof}

\begin{remark}
    The radical $\mathfrak{r}_{n}$ is a two-step nilpotent Lie algebra which
    can be interpreted as a generalised Heisenberg algebra.  We remark that
    in the case $n=3$, it is exactly the Heisenberg algebra $\halg(1)$, or
    in the notation of~\cite{SnobWinternitz2017book} the nilpotent algebra
    $\mathfrak{n}_{1,3}$.
    \label{eq:heisenberg}
\end{remark}

Since our goal is to determine the Casimir elements of the Lie algebra
$\galg_n$, we begin by establishing how many such invariants the algebra
admits:

\begin{proposition}
    The Lie algebra $\galg_{n}$ admits
    \begin{equation}
        \nu_{n} = T_{n-2}+1,
        \label{eq:casnum}
    \end{equation}
    Casimir elements of which $T_{n-2}$ are the central elements $z_{i,j}$ and
    only one is nontrivial up to multiplication by scalars and by  central
    elements.
    \label{prop:casnum}
\end{proposition}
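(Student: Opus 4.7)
The plan is to apply the Beltrametti--Blasi theorem (\Cref{thm:beltramettiblasi}). Since $\dim\galg_n = T_n$, the asserted count $\nu_n = T_{n-2}+1$ is equivalent to showing that the structure matrix $A_n = \bigl(\sum_k c_{ij}^k x_k\bigr)_{i,j=1}^{T_n}$ has generic rank exactly $2n-2$.

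For the upper bound, I will use two constraints on $A_n$. First, the antisymmetry $c_{ij}^k = -c_{ji}^k$ makes $A_n$ skew-symmetric, so its rank is necessarily even. Second, each $z_{i,j}$ is central, so the $T_{n-2}$ rows and columns of $A_n$ indexed by central generators vanish identically; this bounds the rank by $T_n - T_{n-2} = 2n-1$. Combined, these force $\rank A_n \leq 2n-2$.

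For the matching lower bound, I will exhibit one point of $\galg_n^*$ at which a $(2n-2)\times(2n-2)$ minor of $A_n$ is nonzero; since the nonvanishing of a minor is a Zariski-open condition, this suffices to conclude that the generic rank is at least $2n-2$. A convenient choice is $h=1$, $z_{i,i}=1$ for $i=1,\dots,n-2$, and all other generators set to $0$. At this point the row and column of $h$ vanish (because $[h,x_\pm]=\pm 2x_\pm$ and $[h,y_{i,\pm}]=\pm y_{i,\pm}$ involve only generators that have been set to $0$), the entries $A_{x_\pm,y_{i,\mp}}$ vanish for the same reason, and the $(y_{i,+},y_{j,-})$ entries vanish for $i\neq j$ because the off-diagonal $z_{i,j}$ are zero. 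What remains is a block-diagonal $(2n-2)\times(2n-2)$ submatrix indexed by $\{x_+,x_-,y_{1,+},y_{1,-},\dots,y_{n-2,+},y_{n-2,-}\}$, with $n-1$ nondegenerate $2\times 2$ antisymmetric blocks: one arising from $[x_+,x_-]=h$ and one from each $[y_{i,+},y_{i,-}]=z_{i,i}$. Its rank is therefore exactly $2(n-1)=2n-2$.

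Combining the two bounds gives $\rank A_n = 2n-2$, and Beltrametti--Blasi yields $\nu_n = T_n - (2n-2) = T_{n-2}+1$. The $T_{n-2}$ central elements $z_{i,j}$ are manifestly Casimirs and are linearly, hence functionally, independent, so by the count precisely one further functionally independent Casimir exists, and it cannot be a function of the $z_{i,j}$ alone; any other Casimir is therefore a function of this one together with the central elements, which is the uniqueness ``up to scalars and central elements'' claimed in the proposition. I expect the main obstacle to be the lower-bound rank computation, but the specialisation above makes it essentially immediate: all noncentral brackets of $\galg_n$ either land in the centre (contributing through $z_{i,i}=1$) or involve generators that we have killed, leaving only the decoupled two-dimensional blocks that carry all of the rank.
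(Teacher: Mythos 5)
Your proposal is correct and follows essentially the same route as the paper: both apply the Beltrametti--Blasi formula and compute $\rank A_n = 2n-2$ by combining the upper bound from skew-symmetry plus the $T_{n-2}$ identically zero rows of the central generators with a lower bound obtained by specialising to a point where the matrix becomes block-diagonal with $n-1$ nondegenerate $2\times2$ blocks. The only cosmetic difference is the choice of specialisation (you set $h=1$ and keep the $x_+,x_-$ block, while the paper keeps $x_-$ as a variable and deletes the $x_+$ row/column), which changes nothing of substance.
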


\begin{proof}
    The proof of this result relies on the Betrametti--Blasi formula in 
    \Cref{thm:beltramettiblasi}. Consider the commutative variables in
    $\K$: 
    \begin{equation}
        h, x_{-}, x_{+}, y_{1,-}, y_{1,+},\ldots, 
        y_{n-2,-}, y_{n-2,+},
        z_{1,n-2},\ldots,z_{n-2,n-2},
        \label{eq:vars}
    \end{equation}
    associated to the corresponding generators of $\galg_{n}$, and
    taken in this order. Then, using formula~\eqref{eq:tabledec} we can write
    the matrix $A(n)$ in~\eqref{eq:bbformula} in a recursive way as:
    \begin{equation}
        A(n) =
        \left(
        \begin{array}{cccccc}
            A(n-1) & Y(n) & O
            \\
            -Y^{T}(n) & Z(n) & O
            \\
            O & O & O
        \end{array}
        \right),
        \label{eq:Angen}
    \end{equation}
    where $O$ denotes a zero matrix of the appropriate size, and we defined:
    \begin{equation}
        Y(n) :=
        \begin{pmatrix}
            -y_{n-2,-} & y_{n-2,+}
            \\
            0 & y_{n-2,-}
            \\
y_{n-2,+} & 0
            \\
            0 & -z_{1,n-2}
            \\
            z_{1,n-2} & 0
            \\
            0 & 0
            \\
            0 & -z_{2,n-2}
            \\
            z_{2,n-2} & 0
            \\
            0 & 0
            \\
            0 & 0
            \\
            0 & -z_{3,n-2}
            \\
            z_{3,n-3} & 0
            \\
            0 & 0
            \\
            \vdots & \vdots
            \\
            0 & 0
            \\
            0 & -z_{n-3,n-2}
            \\
            z_{n-3,n-2} & 0
        \end{pmatrix},
        \qquad
        Z(n) :=
        \begin{pmatrix}
            0 & -z_{n-2,n-2}
            \\
            z_{n-2,n-2} & 0
        \end{pmatrix},
        \label{eq:abdecvars}
    \end{equation}
    and the initial datum:
    \begin{equation}
        A(2) := 
        \begin{pmatrix}
            0 & -2 x_{-} & 2 x_{+}
            \\
            2 x_{-} & 0 & -h
            \\
            -2x_{+} & h & 0
        \end{pmatrix}.
        \label{eq:A2def}
    \end{equation}
    
    We aim to prove that $\rank A(n) = 2(n-1)$. To do so, we will extract from
    $A(n)$ a $2(n-1)\times 2(n-1)$ submatrix which evaluated for a particular
    choice of the variables~\eqref{eq:vars} will have non-vanishing
    determinant. From the continuity of the determinant this is enough to prove
    the statement.

    Firstly, let us remove all zero lines and rows in $A(n)$: these correspond
    to the coefficients associated with the central elements $z_{i,j}$. This gives
    us  a matrix with $T_{n}-T_{n-2}=2(n-1)+1$ columns and
    rows of the following shape:
    \begin{equation}
        \tilde{A}(n)
        =
        \begin{pmatrix}
            0 & -2 x_{-} & 2 x_{+} & -y_{1,-} & y_{1,+} & \cdots & -y_{n-2,-} & y_{n-2,+}
            \\
            & 0 & -h & 0 & y_{1,-} & \cdots & 0  & y_{n-2,-}
            \\
            & & 0 & y_{1,+} & 0 & \cdots & y_{n-2,+} & 0
            \\
            & & & 0 & -z_{1,1} & \cdots & 0 & -z_{1,n-2}
            \\
            & & & & \ddots & &  & \vdots
            \\
            &&&&&& & -z_{n-2,n-2}
            \\
            &&&&&& & 0
        \end{pmatrix}.
        \label{eq:Atilda}
    \end{equation}
    where we omit the subdiagonal elements since the matrix is skewsymmetric.
    Formula~\eqref{eq:Atilda} can be proved easily using the recursive formula
    for $A(n)$~\eqref{eq:Angen}. This matrix is a skewsymmetric matrix of odd
    dimension, meaning that its determinant must be zero and its rank is
    \emph{at most} $2(n-1)$.

    Now, consider the submatrix of $\tilde{A}(n)$ obtained by deleting the
    third row and third column and evaluating on the following value of the
    variables~\eqref{eq:vars}:
    \begin{equation}
        y_{i,-}=y_{i,+}=0, 
        \qquad
        z_{i,j} =
        \begin{cases}
            0 & \text{if $i\neq j$},
            \\
            1 & \text{if $i=j$},
        \end{cases}
        \label{eq:varsval}
    \end{equation}
    that is, the $2(n-1)\times 2(n-1)$ matrix:
    \begin{equation}
        \tilde{A}_{0}(n) =
        \begin{pmatrix}
            0 & -2x_{-}
            \\
            2x_{-} & 0
            \\
            &&0&-1
            \\
            &&1&0
            \\
            &&&&\ddots
            \\
            &&&&&0&-1
            \\
            &&&&&1&0
        \end{pmatrix}.
        \label{eq:matrixfin}
    \end{equation}
    The determinant of the matrix $\tilde{A}_{0}(n)$ is clearly non-zero, 
    and as stated above the continuity of the deteminant implies the following
    chain of equalities:
    \begin{equation}
        \rank A(n) = \rank \tilde{A}(n)= \rank \tilde{A}_{0} = 2(n-1).
        \label{eq:rankAA0}
    \end{equation}
    This last finding implies formula~\eqref{eq:casnum} using
    Beltrametti--Blasi's formula in \Cref{thm:beltramettiblasi}. Moreover, since the
    number of central elements is exactly $T_{n-2}$, we conclude
    that these are part of the above number, and then only a nontrivial
    Casimir invariant exists (up to a scalar multiple). This ends the proof of
    the proposition.
\end{proof}

\subsection{Matrix representations of $\galg_{n}$}

In this subsection, following the ideas of~\cite{Zhang_et_al1990} for the
$\mathfrak{h}_{6}$ Lie algebra, we construct a faithful representation of
$\galg_{n}$ in $\Sl_{2n-2}(\K)$. The faithful representation also implies that
the Jacobi identity holds in $\galg_n$. This is the content of the following
proposition:

\begin{proposition}
    The Lie algebra $\galg_{n}$ has the following faithful representation:
    \begin{equation}
        \begin{tikzcd}[row sep=tiny]
            \rho \colon\galg_{n} \arrow{r} & \Sl_{2(n-1)}(\K)
            \\ 
            \phantom{\rho\colon}x \arrow[mapsto]{r} & X.
        \end{tikzcd}
        \label{eq:faithful}
    \end{equation}
    where
    \begin{equation}
         x = ah+bx_++cx_-
        +\sum_{i=1}^{n-2} (d_iy_{i,+}+e_iy_{i,-})
        +\sum_{1\leq i\leq j}^{n-2} m_{ij}z_{i,j},
        \label{eq:elementxi}
    \end{equation}
    and
    \begin{equation}
        X = \left( \begin{array}{ccccccccc}
        0& 0&\cdots& 0& 0&0&0&\cdots&0
        \\
        \vdots& &&&&\vdots& &&\vdots
        \\
        0& 0&\cdots& 0& 0&0&0&\cdots&0
        \\
         d_1&d_2&\cdots&d_{n-2}&a&b&0&\cdots&0
         \\
         e_1&e_2&\cdots&e_{n-2}&c&-a&0&\cdots&0
         \\
         2m_{11}&m_{12}&\cdots&m_{1,n-2}&-e_1&d_1&0&\cdots&0
         \\
         \vdots& &&&&\vdots&&&\vdots
         \\
         m_{1,n-2}& m_{2,n-2}&\cdots& 2m_{n-2,n-2}&-e_{n-2}&d_{n-2}&0&\cdots&0
         \\
        \end{array}\right).
        \label{eq:elementX}
    \end{equation}
    \label{prop:representation}
\end{proposition}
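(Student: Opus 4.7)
The proof splits into three claims: (a) $\rho$ maps into $\Sl_{2(n-1)}(\K)$; (b) $\rho$ preserves the Lie bracket; and (c) $\rho$ is injective. Claim (a) is immediate, since the only nonzero diagonal entries of $X$ in~\eqref{eq:elementX} are $a$ at position $(n-1,n-1)$ and $-a$ at position $(n,n)$, so $\operatorname{tr} X = 0$. Claim (c) is also immediate: each of the coefficients $a,b,c,d_i,e_i,m_{ij}$ in~\eqref{eq:elementxi} appears as a distinct entry of $X$ (the scalar factor $2$ on $2m_{ii}$ is harmless), so $X=0$ forces $x=0$.

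The substantive work is claim (b). The plan is to organize the calculation through the block decomposition $\K^{2(n-1)}=\K^{n-2}\oplus\K^{2}\oplus\K^{n-2}$, under which
\[
X \;=\; \begin{pmatrix} 0 & 0 & 0 \\ D & A & 0 \\ M & B & 0 \end{pmatrix},
\]
where $A=\bigl(\begin{smallmatrix}a&b\\c&-a\end{smallmatrix}\bigr)\in\Sl_2(\K)$, $D$ is the $2\times(n-2)$ matrix with $i$th column $(d_i,e_i)^T$, $M$ is the symmetric $(n-2)\times(n-2)$ matrix with $M_{ij}=m_{\min(i,j),\max(i,j)}$ for $i\neq j$ and $M_{ii}=2m_{ii}$, and $B$ is the $(n-2)\times 2$ matrix with $i$th row $(-e_i,d_i)$; equivalently $B=-D^{T}J$ for $J=\bigl(\begin{smallmatrix}0&-1\\1&0\end{smallmatrix}\bigr)$. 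A direct block multiplication then yields
\[
[X,X'] \;=\; \begin{pmatrix} 0 & 0 & 0 \\ AD'-A'D & [A,A'] & 0 \\ BD'-B'D & BA'-B'A & 0 \end{pmatrix},
\]
which already has the shape of a $\rho$-image.

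From here the verification of each relation of~\eqref{eq:gncommrel} reduces to a short check on one block: $[A,A']$ reproduces the $\Sl_2(\K)$-relations among $h,x_\pm$; the map $D\mapsto AD'-A'D$ acts on each column $(d_i,e_i)^T$ by the standard defining representation of $\Sl_2(\K)$, producing $[h,y_{i,\pm}]=\pm y_{i,\pm}$, $[x_\pm,y_{i,\mp}]=y_{i,\pm}$, and $[x_\pm,y_{i,\pm}]=0$; the identity $B=-D^{T}J$ propagates automatically to the $BA'-B'A$ block so the output is again of the required form; finally, a short calculation using $J^{T}=-J$ gives $(BD')^{T}=-B'D$, so $BD'-B'D=BD'+(BD')^{T}$ is symmetric, and evaluating on the generators $y_{i,+}$ and $y_{j,-}$ shows the $(i,j)$-entry equals $1$, matching $[y_{i,+},y_{j,-}]=z_{\min(i,j),\max(i,j)}$. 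The main (purely organizational) obstacle is keeping the doubling convention $M_{ii}=2m_{ii}$ consistent: this factor is exactly what is needed so that the diagonal entry $2$ produced by $BD'+(BD')^{T}$ corresponds to the single central element $z_{i,i}$ arising from $[y_{i,+},y_{i,-}]$. Once (b) is established, the Jacobi identity in $\galg_n$ is a free corollary, since it holds in $\gl_{2(n-1)}(\K)$ and is transported back by the injective bracket-preserving map $\rho$.
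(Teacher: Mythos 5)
Your proof is correct and follows essentially the same route as the paper's: the same block decomposition $X=\bigl(\begin{smallmatrix}0&0&0\\ D&A&0\\ M&B&0\end{smallmatrix}\bigr)$, the same block commutator, and the same use of the identities $JA=-A^{T}J$ and $J^{T}=-J$ to show the image is closed under the bracket and that the lower-left block is symmetric (the paper then verifies the individual relations via the elementary matrices $E_{i,j}$ rather than staying in block form, but this is only a difference in bookkeeping). Your explicit treatment of injectivity and of the trace condition is a small addition the paper leaves implicit.
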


\begin{proof}
    We start by noting that the subspace of $\Sl_{2(n-1)}(\K)$ consisting of
    the matrices $X$ of the form~\eqref{eq:elementX} can be described
    synthetically as:
    \begin{equation}
        X = \left( \begin{array}{ccc}
            0 & 0 & 0
            \\
            P & A & 0 
            \\
            M&Q&0
        \end{array} \right),
        \label{eq:Xsynt}
    \end{equation}
    where $A$ is the $2\times 2$ block representing $\Sl_{2}(\K)$:
    \begin{equation}
    A = 
    \begin{pmatrix}
        a & b
        \\
        c & -a
    \end{pmatrix},
        \label{eq:Ablock}
    \end{equation}
    $P$ is the $(n-2)\times 2$ block:
    \begin{equation}
        P\,=\,\begin{pmatrix} d_1&\ldots&d_{n-2}\\e_1&\ldots&e_{n-2}\end{pmatrix},
        \label{eq:Pblock}
    \end{equation}
    and $Q$ is a $2\times (n-2)$ block related
    to $P$ by the formula:
    \begin{equation}
        Q=P^{T}J,
        \qquad 
        J=\begin{pmatrix}
            0&1
            \\
            -1&0
        \end{pmatrix},
        \label{eq:Qdef}
    \end{equation}
    and finally $M$ is a symmetric matrix.
    We will prove that this subspace is a Lie subalgebra of $\Sl_{2(n-1)}(\K)$, and
    that its commutation relations agree with those of $\galg_{n}$, thus
    proving the proposition. 

    We start by showing that the subspace is closed under the Lie bracket of
    $\Sl_{2(n-1)}(\K)$. By definition, $[X,X']=XX'-X'X$, where the matrix $X'$ has
    submatrices denoted with primes. So, after some easy algebraic
    manipulations we obtain:
    \begin{equation}
        [X,X'] = 
        \left( 
            \begin{array}{ccc}
            0 & 0 &0
            \\
            AP'-A'P & AA'-A'A&0 
            \\
            QP'-Q'P&QA'-Q'A&0
        \end{array} \right).
        \label{eq:XXpcomm}
    \end{equation}
    Observing that if $A,A'\in \Sl_{2}(\K)$ then $AA'-A'A\in\Sl_{2}(\K)$, we
    are left to check that the submatrices in~\eqref{eq:XXpcomm} satisfy the
    two conditions:
    \begin{subequations}
        \begin{align}
            QA'-Q'A &= {\left(AP'-A'P\right)}^{T}J,
            \label{eq:condsa}
            \\
            {\left(QP'-Q'P\right)}^{T} &= QP'-Q'P.
            \label{eq:condsb}
        \end{align}
        \label{eq:conds}
    \end{subequations}
    To prove \eqref{eq:condsb} it is enough to observe that
    $J$ is a skew-symmetric matrix, so that:
    \begin{equation}
    {(QP'-Q'P)}^{T}= \left(P'\right)^{T}(PJ)^{T}- P^{T}\left(P'J\right)^{T} 
    =
    -\left(P'\right)^{T} J P + P^{T}JP' =-Q'P+QP'.
        \label{eq:condsbproof}
    \end{equation}
    On the other hand, to prove~\eqref{eq:condsa}, we observe that $A\in 
    \Sl_{2}(\K)$
    one has $J A = - A^{T} J$, so that:
    \begin{equation}
        QA'-Q'A = P^{T}JA' - {\left(P'\right)}^{T}JA = 
        -P^{T}{\left(A'\right)}^{T}J + {\left(P'\right)}^{T} A^{T}J =
        {\left(AP'-A'P\right)}^{T}J.
        \label{eq:condsaproof}
    \end{equation}
    
    The commutation relations can be proved using the basic matrices $E_{i,j}$
    introduced in equation~\eqref{eq:Eij} and their commutation relations, see
    e.g.~\cite[Section IV.6]{Jacobson1962}:
    \begin{equation}
        [E_{i,j}, E_{k,l}] = \delta_{jk} E_{i,l}-\delta_{i l} E_{k,j}.
        \label{eq:EijEkm}
    \end{equation}
    Indeed, it is easy to see that:
    \begin{equation}
        \begin{gathered}
            \rho(x_+) = E_{n-1,n}, 
            \quad
            \rho(x_-) = E_{n,n-1}, 
            \quad
            \rho(h) = E_{n-1, n-1}- E_{n,n},
            \\
            \rho(y_{i,+}) = E_{n-1,i} + E_{n+i, n},
            \quad
            \rho(y_{i,-}) = E_{n,i} -E_{n+i, n-1},
            \quad
            i=1,\ldots,n-2,
            \\
            \rho(z_{i,j}) = E_{i+n, j} + E_{j+n,i},
            \quad
            1 \leq i \leq j \leq n-2.
        \end{gathered}
        \label{eq:gnEij}
    \end{equation}
    Note that the elements $\rho(z_{i,j})$ automatically commute with all the
    others since the indices of their base matrices never intersect with the
    ones of the others and equation~\eqref{eq:EijEkm} vanishes in such a case.
    As a title of example let us compute the commutation relations of $h$ with
    the other elements:
    \begin{subequations}
        \begin{align}
            [\rho(h),\rho(x_{+})]& 
            \begin{aligned}[t]
                &= [E_{n-1,n-1}-E_{n,n},E_{n-1,n}]
                = [E_{n-1,n-1},E_{n-1,n}]-[E_{n,n},E_{n-1,n}]
                \\
                &=\delta_{n-1,n-1}E_{n-1,n}-\delta_{n-1,n}E_{n-1,n-1}
                -(\delta_{n,n-1}E_{n,n}-\delta_{n,n}E_{n-1,n})
                \\
                &= E_{n-1,n}+E_{n-1,n} = 2 E_{n-1,n}=2\rho(x_{+}),
            \end{aligned}
            \label{eq:HXpcomm}
            \\
            [\rho(h),\rho(x_{-})]& 
            \begin{aligned}[t]
                &= [E_{n-1,n-1}-E_{n,n},E_{n,n-1}]
                \\
                &= - E_{n,n-1}-E_{n,n-1} = -2 E_{n,n-1}=-2\rho(x_{-}),
            \end{aligned}
            \label{eq:HXmcomm}
            \\
            [\rho(h),\rho(y_{i,+})]&
            \begin{aligned}[t]
                &=[E_{n-1,n-1}-E_{n,n},E_{n-1,i}+E_{n+i,n}]
                \\
                &=[E_{n-1,n-1},E_{n-1,i}]-[E_{n,n},E_{n+i,n}]
                \\
                &=\delta_{n-1,n-1}E_{n-1,i}-\delta_{n-1,i}E_{n-1,n-1}
                -(\delta_{n,n+i}E_{n,n}-\delta_{n,n}E_{n+i,n})
                \\
                &=E_{n-1,i}+E_{n+i,n} = \rho(y_{i,+}),
            \end{aligned}
            \label{eq:HYipcomm}
            \\
            [\rho(h),\rho(y_{i,-})]&
            \begin{aligned}[t]
                &=[E_{n-1,n-1}-E_{n,n},E_{n,i}+E_{n+i,n-1}]
                \\
                &=-[E_{n,n},E_{n,i}]-[E_{n-1,n-1},E_{n+i,n-1}]
                \\
                &=-E_{n,i}+E_{n+i,n-1} = -\rho(y_{i,-}).
            \end{aligned}
            \label{eq:HYimcomm}
        \end{align}
        \label{eq:commproof}%
    \end{subequations}
    Then, all the commutation relations of the elements with $h$ coincide
    with those given in~\eqref{eq:gncommrel}. As a last check, we prove the
    commutation relations of the elements $y_{i,+}$ with the elements $y_{j,-}$:
    \begin{equation}
        \begin{aligned}
            [\rho(y_{i,+}),\rho(y_{j,-})] & = [E_{n-1,i}+E_{n+i,n},E_{n,j}-E_{n+j,n-1}]
            \\
            &=[E_{n+i,n},E_{n,j}] - [E_{n-1,i},E_{n+j,n-1}]
            \\
            &=\delta_{n,n}E_{n+i,j}-\delta_{n+i,j}E_{n,n}
                -(\delta_{i,n+j}E_{n-1,n-1}-\delta_{n-1,n-1}E_{n+j,i})
            \\
            &=E_{n+i,j}+E_{n+j,i} = \rho(z_{\min(i,j),\max(i,j)}).
        \end{aligned}
        \label{eq:YipYimcomm}
    \end{equation}
    The last equality in equation~\eqref{eq:YipYimcomm} follows from the
    fact that formula~\eqref{eq:gnEij} is defined for  $1\leq i\leq j \leq n-2$.
    So, also these commutation relations coincide with those given in~\eqref{eq:gncommrel}.
    The remaining commutation relations can be proved in the same way, thus
    ending the proof of the proposition.
\end{proof}

We conclude this subsection with the following corollary which gives a faithful
representation $\overline{\rho}$ of the quotient algebra $\overline{\galg}_{n}$
with image in the $n\times n$ matrices with trace zero. We leave the proof to
the reader, but we observe that the matrix $\overline{\rho}(\overline{x})$ is
the upper left block in the matrix $\rho(x)$.

\begin{corollary}
    The quotient algebra $\overline{\galg}_{n}=\galg_n/Z(\galg_n)$ has the
    following representation in $\mathfrak{sl}_{n}(\mathbb{K})$:
    \begin{equation}
        \begin{tikzcd}[row sep=tiny]
            \overline{\rho} \colon\overline{\galg}_{n} \arrow{r} & \Sl_{n}(\K)
            \\
            \overline{x} \arrow[mapsto]{r} & \overline{X},
        \end{tikzcd}
        \label{eq:faithfulbar}
    \end{equation}
    where $\overline{x}$ has representative
    \begin{equation}
        x\,=\,ah+bx_++cx_-
        +\sum_{i=1}^{n-2} (d_iy_{i,+}+e_iy_{i,-}), 
        \label{eq:elementxibar}
    \end{equation}
    and
    \begin{equation}
        \overline{X} =
    \begin{pmatrix} 0& 0&\ldots& 0& 0&0\\
     \vdots& &&&&\vdots\\
     0& 0&\ldots& 0& 0&0\\
     d_1&d_2&\ldots&d_{n-2}&a&b\\
     e_1&e_2&\ldots&e_{n-2}&c&-a
    \end{pmatrix}~.
        \label{eq:slnrepr}
    \end{equation}
    \label{cor:gbarrepr}
\end{corollary}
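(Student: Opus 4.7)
The plan is to derive $\overline{\rho}$ by projecting the representation $\rho$ from Proposition~\ref{prop:representation} onto the upper-left $n\times n$ block of $\rho(x)$.

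First I would observe that, in the synthetic block form~\eqref{eq:Xsynt}, the upper-left $n\times n$ corner of $\rho(x)$ consists of the zero $(n-2)\times n$ strip sitting above the $2\times n$ strip $(P\;\; A)$, which is exactly the matrix $\overline{X}$ displayed in~\eqref{eq:slnrepr}. Crucially, this block depends only on the coefficients $a,b,c,d_i,e_i$ from~\eqref{eq:elementxi} and not on the central coefficients $m_{ij}$, because the images $\rho(z_{i,j})=E_{n+i,j}+E_{n+j,i}$ have support only in rows $n+1,\dots,2(n-1)$. Hence $\rho$ composed with the block projection annihilates $Z(\galg_n)$, so the assignment $x\mapsto \overline{X}$ descends to a well-defined linear map $\overline{\rho}\colon\overline{\galg}_n \to \Sl_n(\K)$, and the trace-zero condition is immediate since $\tr\overline{X}=a+(-a)=0$.

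To verify that $\overline{\rho}$ preserves the Lie bracket I would invoke the commutator formula~\eqref{eq:XXpcomm} from the proof of Proposition~\ref{prop:representation}: writing the primed blocks $A', P', Q', M'$ for a second element $x'$, the upper-left $n\times n$ block of $[\rho(x),\rho(x')]$ reads
$$\begin{pmatrix} 0 & 0 \\ AP'-A'P & AA'-A'A \end{pmatrix},$$
and a direct matrix multiplication shows this to be exactly $[\overline{X},\overline{X'}]$ computed in $\Sl_n(\K)$. This establishes the homomorphism property.

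Finally, faithfulness follows by inspection: if $\overline{\rho}(\overline{x})=0$, the explicit form~\eqref{eq:slnrepr} forces $a=b=c=0$ and $d_i=e_i=0$ for all $i=1,\dots,n-2$, so any representative $x$ of $\overline{x}$ lies in $Z(\galg_n)$ and $\overline{x}=0$ in the quotient. The only point requiring some care is the bookkeeping between the row/column splits $(n-2,2,n-2)$ of $\rho(x)$ and the $(n-2,2)$ split of $\overline{X}$; this is straightforward given the synthetic form~\eqref{eq:Xsynt}, so no genuine obstacle arises and the corollary reduces entirely to the already established structure of $\rho$.
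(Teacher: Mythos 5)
Your proposal is correct and follows exactly the route the paper indicates: the paper leaves the proof to the reader but explicitly remarks that $\overline{\rho}(\overline{x})$ is the upper-left $n\times n$ block of $\rho(x)$, which is precisely your starting point. Your verification that the central generators do not touch that block (so the map descends to the quotient), that the upper-left block of the commutator formula~\eqref{eq:XXpcomm} reproduces $[\overline{X},\overline{X}']$, and that the kernel of the block map is exactly $Z(\galg_n)$, supplies the details the paper omits, with no gaps.
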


\subsection{The coadjoint representation in the space of vector
fields}\label{advec}

We conclude this section on the generalities of the Lie algebra  $\galg_{n}$ by
presenting the explicit form of its coadjoint representation in the space of
vector fields on $\galg_n^*$. This is the content of the following proposition:

\begin{proposition}
    The coadjoint representation of the Lie algebra $\galg_{n}$
    in the space of vector fields $\mathfrak{X}(\galg^*)$
    is generated by the following vector fields:
    \begin{subequations}
        \begin{align}
            \widehat{h} &= 
            -2x_{-}\frac{\partial}{\partial x_{-}}
            +2x_{+}\frac{\partial}{\partial x_{+}}
            +\sum_{j=1}^{n-2}
            \left[y_{j,+}\frac{\partial}{\partial y_{j,+}}
                -y_{j,-}\frac{\partial}{\partial y_{j,-}}\right],
            \label{eq:coadjh}
            \\
            \widehat{x}_{\mp} &= 
            \pm 2x_{\mp}\frac{\partial}{\partial h}
            \mp h \frac{\partial}{\partial x_{\pm}}
            +\sum_{j=1}^{n-2}y_{j,\mp}\frac{\partial}{\partial y_{j,\pm}},
            \label{eq:coadjxp}
            \\
            \widehat{y}_{i,\mp} &= 
            \pm y_{i,\mp}\frac{\partial}{\partial h}
            - y_{i,\pm} \frac{\partial}{\partial x_{\pm}}
            \mp \sum_{j=1}^{n-2}z_{j,i}\frac{\partial}{\partial y_{j,\pm}},
            \label{eq:coadjypm}
            \\
            \widehat{z}_{i,j} &= 0.
            \label{eq:coadjzij}
        \end{align}
        \label{eq:coadjgn}
    \end{subequations}
    \label{prop:coadjgn}
\end{proposition}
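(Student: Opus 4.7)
The proposition is a direct computation from the general formula~\eqref{eq:eihat} for the coadjoint vector fields, combined with the commutation relations~\eqref{eq:gncommrel}. The plan is to apply the identity
\[
    \widehat{x}_i \,=\, \sum_{j,k} c_{ij}^{k} x_k \frac{\partial}{\partial x_j} \,=\, \sum_j [x_i,\, x_j] \frac{\partial}{\partial x_j},
\]
where the bracket on the right is reinterpreted as a linear function on $\galg_n^*$ (i.e.\ as an element of $S\galg_n$), to each basis element in turn, and to collect terms.

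Concretely, I will process the generators one at a time. For $\widehat{h}$, the relations $[h, x_\pm] = \pm 2 x_\pm$, $[h, y_{i,\pm}] = \pm y_{i,\pm}$ and $[h, z_{i,j}] = 0$ reproduce~\eqref{eq:coadjh} term by term. For $\widehat{x}_\mp$, I will use $[x_\mp, h] = \pm 2 x_\mp$, $[x_\mp, x_\pm] = \mp h$, together with $[x_\mp, y_{i,\pm}] = y_{i,\mp}$ and the vanishing $[x_\mp, y_{i,\mp}] = 0$, to obtain~\eqref{eq:coadjxp}. For $\widehat{y}_{i,\mp}$, the commutators with $h$ and with $x_\pm$ produce the first two summands of~\eqref{eq:coadjypm}, while the key contributions come from $[y_{i,\mp}, y_{j,\pm}] = \mp z_{\min(i,j),\max(i,j)}$, yielding the sum of derivatives in the $y_{j,\pm}$ directions. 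Finally, $\widehat{z}_{i,j} = 0$ is immediate since $z_{i,j} \in Z(\galg_n)$.

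No genuine obstacle arises; the only subtlety is the careful bookkeeping of signs and the convention that the symbol $z_{j,i}$ appearing in~\eqref{eq:coadjypm} should be read as $z_{\min(i,j),\max(i,j)}$, since the central elements were originally indexed only in the ordered regime $i \leq j$. As a consistency check, the vanishing $\widehat{z}_{i,j} = 0$ matches Remark~\ref{rem:Zentrum}: the coadjoint representation of $\galg_n$ is not faithful, mirroring the non-trivial centre~\eqref{eq:Zgn}.
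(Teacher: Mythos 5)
Your proposal is correct and follows essentially the same route as the paper: the authors likewise invoke the general formula~\eqref{eq:eihat} together with the commutation relations~\eqref{eq:gncommrel}, working out $\widehat{x}_-$ as a representative example and leaving the rest to the same bookkeeping you describe. Your remark that $z_{j,i}$ must be read as $z_{\min(i,j),\max(i,j)}$ and the cross-check of $\widehat{z}_{i,j}=0$ against \Cref{rem:Zentrum} are both consistent with the paper's conventions.
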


The proof of \Cref{prop:coadjgn} is an immediate consequence of the general  result presented in equation~\eqref{eq:eihat}.  For example, from the commutation
relations \eqref{eq:gncommrel}, one has $\rho_{\adj}(x_-)(x)=[x_-,x]\neq 0$
for basis element $x$ of $\galg_n$ only if $x=h,x_+,y_{i,+}$ and one has
\begin{equation}
    [x_-,h]=2x_-,\quad [x_-,x_+]\,=\,-h,\quad [x_-,y_{i,+}]=y_{i,-}~.
    \label{eq:hcomm}
\end{equation}
Thus the vector field $\widehat{x}_-$ is given by
\begin{equation}
    \widehat{x}_{-} \,=\,
    2x_{-}\frac{\partial}{\partial h}
    - h \frac{\partial}{\partial x_{+}}
    +\sum_{j=1}^{n-2}y_{j,-}\frac{\partial}{\partial y_{j,+}},
    \label{eq:vfieledxm}
\end{equation}
since $\widehat{x}_-(x)=[x_-,x]$ for any $x\in \galg_n$.

We observe that, as expected, the coadjoint representation is not faithful
because the central elements all map to the null vector field.  Thus, the
coadjoint representation is in fact a representation of
$\overline{\galg}_{n}=\galg_n/Z(\galg_n)$.

\section{Construction of the Casimir invariants of $\galg_{n}$}
\label{sec:casgn}

\noindent In this section we review the construction of the Casimir elements
for the Lie algebras $\galg_2 \cong\, \mathfrak{sl}_{2}(\mathbb{K})$ and
$\galg_3 \cong \,\halg_{6}$ with the standard method of vector fields, and we
computationally extend the result to the case of $\galg_{4}$. From these
low-dimensional cases, we are able to guess the general form of the Casimir
polynomial. After a short digression on the properties of the $\Sl_{n}(\K)$ Lie
algebra and its invariants we prove the guess in the main theorem of this
section and of the whole paper.

\subsection{Low dimensional cases: $n=2,3,4$}
\label{sss:lowdim}

Let us start with the Lie algebra $\galg_{2}\cong \Sl_{2}(\mathbb{K})$, and let
us find its Casimir element using the coadjoint action on the space of vector
fields. In such a case we have to solve the following system of overdetermined
PDEs for the function $F=F(h,x_{-},x_{+})$, obtained from \eqref{eq:coadjgn}
with $n=2$:
\begin{equation}
    \widehat{h}(F) = 
    -2x_{-}\frac{\partial F}{\partial x_{-}}
    +2x_{+}\frac{\partial F}{\partial x_{+}}=0,
    \quad
    \widehat{x}_{\mp}(F) = 
    \pm 2x_{\mp}\frac{\partial F}{\partial h}
    \mp h \frac{\partial F}{\partial x_{\pm}} =0.
    \label{eq:C2find}
\end{equation}
Solving these three equations using for instance the computer algebra system
\texttt{Maple} we get the following solution:
\begin{equation}
    F = F\left( h^{2} + 4 x_{+}x_{-}  \right),
    \label{eq:C2sol}
\end{equation}
where $F$ is an arbitrary function of its argument,  to 
which corresponds the only functionally independent polynomial 
Casimir element in $S\galg_{2}$:
\begin{equation}
C_{2} = h^{2} + 4 x_{+}x_{-}.
\label{eq:C2solh}
\end{equation}
This is of course a well-know result, see for 
instance~\cite[\S16.5]{SnobWinternitz2017book}.

We observe that one can write the (functional independent) Casimir of
$S\galg_{2}$~\eqref{eq:C2solh} as the determinant of a $2\times2$ symmetric
matrix in the following way:
\begin{equation}
    C_2 = - \det 
    \begin{pmatrix}
        -2x_{-} & h
        \\
        h & 2x_{+}
    \end{pmatrix}.
    \label{eq:C2mat}
\end{equation}

\begin{remark}
    As stated in \Cref{sec:background} Casimir elements in $S\galg$ are usually
    found just by looking at homogeneous polynomial solutions of the system of
    PDEs~\eqref{eq:pdej}. In this simple case, we choose to present the general
    solution of the system~\eqref{eq:C2find} since it is one of the few cases
    where using a computer algebra system it is possible to obtain such a
    general solution in a reasonable amount of time.
    \label{rem:polcas}
\end{remark}

Let us now consider the next Lie algebra in the hierarchy, namely
$\galg_{3}\cong \halg_{6}$. 
To find the Casimir elements in $S\galg_{3}$,
we use again the PDE system obtained from the coadjoint
representation from \eqref{eq:coadjgn} with $n=3$:
\begin{subequations}
    \begin{align}
        \widehat{h}(F) &= 
        -2x_{-}\frac{\partial  F}{\partial x_{-}}
        +2x_{+}\frac{\partial F}{\partial x_{+}}
        +y_{1,+}\frac{\partial F}{\partial y_{1,+}}
        -y_{1,-}\frac{\partial F}{\partial y_{1,-}}
        =0,
        \\
        \widehat{x}_{\mp}(F) &= 
        \pm 2x_{\mp}\frac{\partial F}{\partial h}
        \mp h \frac{\partial F}{\partial x_{\pm}} 
        + y_{1,\mp} \frac{\partial F}{\partial y_{1,\pm}} 
        =0,
        \\
        \widehat{y}_{1,\pm}(F) &= 
        \pm y_{1,\mp}\frac{\partial F}{\partial h}
        - y_{1,\pm} \frac{\partial F}{\partial x_{\pm}} 
        \mp z_{1,1} \frac{\partial F}{\partial y_{1,\pm}} 
        =0,
        \\
        \widehat{z}_{1,1}(F) & = 0,
    \end{align}
    \label{eq:C3find}
\end{subequations}
for the unknown function $F=F(h,x_{-},x_{+},y_{1,-},y_{1,+},z_{1,1})$. Now,
instead of a generic function $F$, since we work in the symmetric algebra
$S\galg_n$ we can restrict our search to homogeneous Casimir polynomials $P$ of
degree $l$, namely:
\begin{equation}
    P^{(l)} = \sum_{i_{1}+\ldots+i_{6}=l}
    c_{i_{1},\ldots,i_{6}}
    h^{i_{1}}x_{-}^{i_{2}}x_{+}^{i_{3}}
    y_{1,-}^{i_{4}}y_{1,+}^{i_{5}}z_{1,1}^{i_{6}}.
    \label{eq:C3d}
\end{equation}
Substituting the expression~\eqref{eq:C3d} into the
system~\eqref{eq:C3find} for $l=1,2$ we obtain only
trivial solutions involving the central element $z_{1,1}$, but we
find, up to an unessential multiplicative constant, the following degree
three Casimir element:
\begin{equation}
    C_{3}= 2 x_{+}y_{1,-}^{2}
    -2x_{-}y_{1,+}^2 +2 h y_{1,-}y_{1,+}
    + z_{1,1} (h^2  + 4 x_{-} x_{+}),
    \label{eq:C3sol}
\end{equation}
where $C_3=P^{(3)}|_{\text{solution of} \,\,\eqref{eq:C3find}}$. 
This Casimir element $C_{3}$ corresponds, mutatis mutandis, to the known
Casimir element of the two-photon Lie algebra $\halg_{6}$, see for
instance~\cite{BallesterosHerranz2001,BlascoPhD,BallesterosBlasco2008}.

Even more interestingly, we observe that that it is possible to write this
degree three Casimir of $S\galg_{3}$~\eqref{eq:C3sol} as the determinant of a
$3\times 3$ symmetric matrix with coefficients in $\galg_3\subset S \galg_3 =
\Pol(\galg_3^*)$:
\begin{equation}
   C_{3} = 
    -\det
    \begin{pmatrix}
        z_{1,1} & -y_{1,-} & y_{1,+}
        \\
        -y_{1,-} & -2x_{-} & h
        \\
        y_{1,+} & h & 2x_{+}
    \end{pmatrix}.
    \label{eq:C3soldet}
\end{equation}
The idea of considering a determinant came from a study of this polynomial
Casimir. As a homogeneous cubic in six variables it defines a hypersurface in
$\PP^5$. One finds that it has a 2-dimensional singular locus, which is rather
special. At this point one remembers that the determinant of a $3\times3$
symmetric matrix defines a similar hypersurface, which is singular in the
symmetric matrices of rank one, this gives a $2$-dimensional singular locus.
With some trial and error one find the expression for $C_3$ given above.

In the case $n=4$, for $l=1,2,3,4$, we have to
look for homogeneous polynomial solutions of the system of PDEs
\begin{equation}
    \begin{aligned}
        \widehat{h}(P^{(l)}) &= 
        -2x_{-}\frac{\partial P^{(l)}}{\partial x_{-}}
        +2x_{+}\frac{\partial P^{(l)}}{\partial x_{+}}
        +\sum_{j=1}^{2}\left[y_{j,+}\frac{\partial P^{(l)}}{\partial y_{j,+}}
        -y_{j,-}\frac{\partial P^{(l)}}{\partial y_{j,-}}\right]
        =0,
        \\
        \widehat{x}_{\mp}(P^{(l)}) &= 
        \pm 2x_{\mp}\frac{\partial P^{(l)}}{\partial h}
        \mp h \frac{\partial P^{(l)}}{\partial x_{\pm}} 
        + \sum_{j=1}^{2}y_{j,\mp} \frac{\partial P^{(l)}}{\partial y_{j,\pm}} 
        =0,
        \\
        \widehat{y}_{i,\pm}(P^{(l)})) &= 
        \pm y_{i,\mp}\frac{\partial P^{(l)}}{\partial h}
        - y_{i,\pm} \frac{\partial P^{(l)}}{\partial x_{\pm}}
        \mp \sum_{j=1}^{2} z_{j,i} \frac{\partial P^{(l)}}{\partial
        y_{j,\pm}}=0, \quad i=1,2 
        \\
        \widehat{z}_{1,1}(P^{(l)}) & = 
        \widehat{z}_{1,2}(P^{(l)}) = \widehat{z}_{2,2}(P^{(l)}) = 0 .
    \end{aligned}
    \label{eq:C4find}
\end{equation}
With the same notation as above, also in this case for $l=1,2,3$ one
finds trivial solutions related to the centre of $\galg_{4}$, and finally a
new non-trivial element for $l=4$. This non-trivial element can be
written down as the determinant of a symmetric matrix with coefficients
in $\galg_4\subset S\galg_4=\Pol(\galg_4^*)$:
\begin{equation}
    C_{4}=
    -\det
    \begin{pmatrix}
        z_{1,1} & z_{1,2} & -y_{1,-} & y_{1,+}
        \\
        z_{1,2} & z_{2,2} & -y_{2,-} & y_{2,+}
        \\
        -y_{1,-} & -y_{2,-} & -2x_{-} & h
        \\
        y_{1,+} & y_{2,+} & h & 2x_{+}
    \end{pmatrix}.
    \label{eq:C4soldet}
\end{equation}
So, from the result above one can guess that the form of the Casimir
polynomial for general $n \geq 2$ is:
\begin{equation}
    C_{n} = -\det(M_n),
    \label{eq:casn}
\end{equation}
where $M_{n}$ is the following $n \times n$ symmetric matrix, with coefficients
in $\galg_n$:
\begin{equation}
    M_n\,:=\,\begin{pmatrix}
        z_{1,1} & \cdots  & z_{1,n-2} & -y_{1,-} & y_{1,+}
        \\
        \vdots & \ddots & \vdots & \vdots & \vdots
        \\
        z_{1,n-2} & \cdots & z_{n-2,n-2} & -y_{n-2,-} & y_{n-2,+}
        \\
        -y_{1,-} & \cdots & -y_{n-2,-} & -2x_{-} & h
        \\
        y_{1,+} & \cdots & y_{n-2,+} & h & 2x_{+}
    \end{pmatrix}~.
    \label{eq:Mn0}
\end{equation}
Experimentally, one can check that this formula provides the Casimir
polynomials also for $n=5,6,7,8$. However, by simple observation it is not
possible to establish that this will hold for all $n$. To do so requires a
change of perspective.

\subsection{The $\Sl_n(\K)$ case}

In this subsection we recall the crucial facts about determinants
and $\SL_{n}(\K)$ invariance, interpreted in terms of representation
theory of $\Sl_{n}(\K)$.

The determinant of a symmetric $n\times n$ matrix $N$ is a homogeneous
polynomial of degree $n$ in the coefficients $(N)_{ij}=n_{ij}$, $1\leq i\leq j\leq n$,
of $N$.  It is invariant under the map $N\mapsto ANA^T$ for $A\in \SL_{n}(\K)$,
that is $\det(N)=\det(ANA^T)$. The vector space of symmetric matrices can be
identified with $\Sym^2(\K^n)$ and $N\mapsto ANA^T$ is the representation of the
Lie group $\SL_n(\K)$ (for $\K=\R$ or $\CC$) on $\Sym^2(\K^n)$.

Let $A(t)\in \SL_n(\K)$ be a one parameter subgroup with $A(0)=I$, $A'(0)=X\in
\Sl_n(\K)$ and $A(s+t)=A(s)A(t)$. Differentiating the matrix $A(t)N{A(t)}^T$ with
respect to $t$ and evaluating at $t=0$ gives:
\begin{equation}
    \eval{\dv{t} A(t)NA(t)^T}_{t=0}\,=\,XN\,+\,NX^T~,
\end{equation}
which is indeed $\rho^{(2)}(X)(N)$ for the representation $\rho^{(2)}$ of
$\Sl_n(\K)$ on $\Sym^2(\K^n)$, see~\eqref{eq:rhot2}.

We now consider the coefficients $n_{ij}$, $1\leq i\leq j\leq n$, of a
symmetric $n\times n$ matrix $N$ as independent polynomial variables.  Then, the
representation $\rho^{(2)}$ on the linear forms $n_{ij}$ above induces, by
derivations, a representation of $\Sl_n(\K)$ on the polynomials in
$\K[\ldots,n_{ij},\ldots]$. Thus, as we have seen in \Cref{sec:background}, we get a
representation of $\Sl_n(\K)$ in the vector fields on $\K^{n(n+1)/2}$, defined as:
\begin{equation}
    \begin{tikzcd}[row sep=tiny]
        \Sl_n(\K)  \arrow{r} & \mathfrak{X}(\K^{n(n+1)/2})
        \\ 
        X \arrow[mapsto]{r} & \widehat{X}\,:=\displaystyle\,\sum_{1\leq i\leq j\leq n}
        \ell_{ij}(X)\frac{\partial}{\partial n_{ij}}~,
    \end{tikzcd}
\end{equation}
where:
\begin{equation}
    \ell_{ij}(X)\,:=\,(XN\,+\,NX^T)_{ij}~.
    \label{eq:ellij}
\end{equation}
This representation has the property that for $G\in \K[\ldots,n_{ij},\ldots]$ one has
\begin{equation}
    \widehat{X}(G)\,=\, \eval{\dv{t} G(A(t)N{A(t)}^T)}_{t=0} =\,
    \sum_{1\leq i\leq j\leq n} \frac{\partial G}{\partial n_{ij}} (XN+NX^T)_{ij}~,
    \label{eq:XGact}
\end{equation}
where $A(t)=\exp(tX)\in \SL_n(\K)$ is the one parameter subgroup generated by
$X\in \Sl_n(\K)$.

The function $\det(N)\in \K[\ldots,n_{ij},\ldots]$ is a polynomial, of degree
$n$, in the $n_{ij}$. The function $\det(N)$, being invariant under the action
of $\SL_n(\K)$ on $N$ so $\det(A(t)NA(t)^T)$ is constant in $t$, is annihilated by the vector fields $\widehat{X}$:
\begin{equation}
    \widehat{X}(\det(N))\,=\,0,\qquad \forall\,X\,\in\,\Sl_n(\K)~.
\end{equation}

\begin{example}
    We consider the case $n=2$. The Lie algebra $\Sl_n(\K)$ consists of the
    matrices with trace zero, so we have:
    \begin{equation}
        N\,=\,
        \begin{pmatrix} 
            n_{11}&n_{12}
            \\
            n_{12}&n_{22}
        \end{pmatrix},
        \qquad
        \Sl_2(\K)\,=\,\left\{X=\begin{pmatrix} a&b\\c&-a\end{pmatrix}:\;a,b,c\,\in\,\K\,\right\}~.
        \label{eq:sl2case}
    \end{equation}
    Since:
    \begin{equation}
        XN\,+\,NX^T\,=\,
        \begin{pmatrix}
            2an_{11}+2bn_{12}&cn_{11}+bn_{22}\\
            cn_{11}+bn_{22}&2cn_{12}-2an_{22}
        \end{pmatrix}
        \label{eq:XNNXT2}
    \end{equation}
    we find for $X\in \Sl_2(\K)$ the vector field in $\mathfrak{X}(\K^{3})$:
    \begin{equation}
        \widehat{X}=\,2(a n_{11}+b n_{12})\frac{\partial}{\partial  n_{11}}\,+\,
    (c n_{11}+b n_{22})\frac{\partial}{\partial  n_{12}}\,+\,2(c n_{12}-a n_{22})
    \frac{\partial}{\partial  n_{22}},
    \end{equation}
    or more explicitly considering
    the standard basis of $\Sl_{2}(\K)$ we have three vector fields:
    \begin{subequations}
        \begin{align}
            \widehat{X}_{h} &= 2 n_{11}\frac{\partial}{\partial  n_{11}}\,-\,2 n_{22}\frac{\partial}{\partial  n_{22}}
            \label{eq:Xhata}
            \\
            \widehat{X}_{x_+} &=2n_{12}\frac{\partial}{\partial  n_{11}}\,+\,
            n_{22}\frac{\partial}{\partial  n_{12}},
            \label{eq:Xhatb}
            \\
            \widehat{X}_{x_-} &= n_{11}\frac{\partial}{\partial  n_{12}}\,+\,
            2n_{12}\frac{\partial}{\partial  n_{22}},
            \label{eq:Xhatc}
        \end{align}
        \label{eq:Xhatabc}
    \end{subequations}
    such that $\widehat{X}=a\widehat{X}_{h}+b\widehat{X}_{x_+}+c\widehat{X}_{x_-}$.
    As guaranteed by the way we defined these vector fields, we then have
    \begin{equation}
        \widehat{X}_{x}(\det N)\,=\, \widehat{X}_{x}( n_{11} n_{22}- n_{12}^2)\,=\,0,
        \quad x=h,x_+,x_-,
        \label{eq:XdeN2}
    \end{equation}
    and this is of course also easy to verify. This is clearly equivalent,
    up to a change of basis, to the vector field approach we presented
    in \Cref{sec:casgn}.
\end{example}

\subsection{The $\galg_n$ case}

Now we prove the following statement:

\begin{theorem}
    The functionally independent Casimir invariants of the Lie algebra
    $\mathfrak{g}_{n}$ are the linear central elements $z_{i,j}$ and the
    $n$th degree polynomial $C_{n}$ given in \Cref{eq:casn}.
    \label{thm:cas}
\end{theorem}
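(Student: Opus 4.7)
The plan is to exhibit $C_n=-\det(M_n)$ as a Casimir by transferring the $\SL_n(\K)$-invariance of the determinant on symmetric matrices to the coadjoint action of $\galg_n$, and then to invoke \Cref{prop:casnum} to conclude uniqueness.

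First, I would read $M_n$ as an $n\times n$ symmetric matrix whose entries are linear elements of $S\galg_n$. The crucial observation is that the shape of $M_n$ mirrors the block structure of the $\Sl_n(\K)$ representation $\overline{\rho}$ from \Cref{cor:gbarrepr}: the $(n-2)\times(n-2)$ upper-left block records the central elements $z_{i,j}$, the two extra rows/columns record the $y_{i,\pm}$, and the lower $2\times 2$ block records the $\Sl_2$ generators $h,x_\pm$. This strongly suggests the following

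\medskip
\noindent\textbf{Key Lemma.} For every $x\in\galg_n$, the coadjoint vector field $\widehat{x}$ acts on the matrix of generators $M_n$ by
\begin{equation*}
\widehat{x}(M_n)\,=\,\overline{\rho}(\bar{x})\,M_n\,+\,M_n\,\overline{\rho}(\bar{x})^{T},
\end{equation*}
where $\widehat{x}$ is applied entry-wise and $\overline{\rho}\colon\overline{\galg}_n\to\Sl_n(\K)$ is the representation of \Cref{cor:gbarrepr}.

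\medskip

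I would prove this lemma by direct verification on a basis of $\galg_n$. The central generators $z_{i,j}$ give $\widehat{z}_{i,j}=0$ and $\overline{\rho}(\bar z_{i,j})=0$, so those cases are trivial. For $h$, $x_\pm$, and $y_{i,\pm}$, I would compute $\overline{\rho}(\bar{x})M_n+M_n\overline{\rho}(\bar{x})^T$ block by block using the explicit form \eqref{eq:slnrepr}, and compare with the coadjoint formulas \eqref{eq:coadjgn} applied to each entry of $M_n$; the commutation relations \eqref{eq:gncommrel} make each match straightforward though slightly tedious. This is the step I expect to be the main obstacle, as all the signs in the placements of $-y_{i,-}$ versus $y_{i,+}$ and of $-2x_-$ versus $2x_+$ in $M_n$ must conspire exactly right for the right-hand side to close on entries of $M_n$ rather than producing spurious terms.

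Once the key lemma is in hand, the conclusion is almost immediate. Treating the entries of $M_n$ as independent polynomial variables $n_{ij}$, the map $N\mapsto XN+NX^T$ is the differential of the $\SL_n(\K)$-action on $\Sym^2(\K^n)$, and $\det(N)$ is invariant under this action, hence annihilated by every $\widehat{X}$ with $X\in\Sl_n(\K)$, as recalled in \Cref{eq:XGact}. Applying the chain rule and specialising $n_{ij}$ to the corresponding entries of $M_n$, the key lemma gives
\begin{equation*}
\widehat{x}\bigl(\det M_n\bigr)\,=\,\sum_{i\le j}\frac{\partial\det}{\partial n_{ij}}\Bigl(\overline{\rho}(\bar x)M_n+M_n\overline{\rho}(\bar x)^T\Bigr)_{\!ij}\,=\,0
\end{equation*}
for every $x\in\galg_n$, so $\det(M_n)\in S\galg_n$ satisfies \eqref{eq:casdef} and $C_n=-\det(M_n)$ is a Casimir element.

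It remains to check that $C_n$ is functionally independent from the $z_{i,j}$ and that these together exhaust all Casimirs. Functional independence is evident since the expansion of $\det(M_n)$ contains the term $(h^2+4x_-x_+)\prod z_{i,i}$ (plus contributions with $y_{i,\pm}$), which involves $h,x_\pm,y_{i,\pm}$ and therefore cannot be written as a function of the $z_{i,j}$ alone. Combined with \Cref{prop:casnum}, which gives exactly $T_{n-2}+1$ functionally independent Casimirs, $T_{n-2}$ of which are accounted for by the central elements, we conclude that $\{z_{i,j}\}\cup\{C_n\}$ is a complete set of functionally independent Casimir invariants, as claimed.
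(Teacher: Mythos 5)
Your proposal follows essentially the same route as the paper: your Key Lemma is exactly the paper's Lemma~\ref{Xx}, except that the correct identity carries an overall minus sign, $\rho_{\adj}(x)(M_n)=-\bigl(\overline{\rho}(x)M_n+M_n\overline{\rho}(x)^T\bigr)$, which is immaterial since $-\overline{\rho}(x)$ is still traceless and the determinant is annihilated by the associated vector field either way. The remaining steps --- transferring the $\SL_n(\K)$-invariance of $\det$ on symmetric matrices via the chain rule, and concluding uniqueness from \Cref{prop:casnum} --- coincide with the paper's proof.
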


To prove \Cref{thm:cas} we need to show that the determinant of the matrix
$M_{n}$~\eqref{eq:Mn0} is annihilated by the vector field $\widehat{x}$ on
$\galg_n^*$ for all $x\in \galg_n$.  Since $M_{n}$ is a symmetric
matrix, we want to apply the ideas discussed in the previous Subsection. 

Notice that $\galg_{n}\subset S\galg_n=\K[h,x_-,x_{+},\ldots,z_{n-2,n-2}]$ and
that each of the $n(n+1)/2$ variables appears as a coefficient  (up to
constants) of $M_n$, so that $M_n$ is basically the matrix $N$ above, if we
make the appropriate identification of the variables.

We will show in Lemma~\ref{Xx} that for each $x\in\galg_{n}$ there is a matrix
$X_x\in \Sl_n(\K)$ such that $\rho_{\adj}(x)(M_n)=X_xM_n+M_nX_x$.  Here by
$\rho_{\adj}(x)(M_n)$ we mean the matrix obtained by applying
$\rho_{\adj}(x)$ to all coefficients of $M_n$. From this we will easily deduce
that $\hat{x}(\det(M_n))=0$.

Since the $z_{i,j}$ are in the center of $\galg_n$, which is the kernel of
adjoint representation, the vector fields $\widehat{z}_{i,j}$ are zero. Thus we
only need to consider the quotient Lie algebra $\overline{\galg}_n$.
This quotient algebra has a faithful representation in $\Sl_n(\K)$, see
\Cref{cor:gbarrepr}. We will find it convenient to consider the
composition:
\begin{equation}
    \begin{tikzcd}
    \galg_{n}\arrow[r] \arrow[dr] & \overline{\galg}_{n} 
        \arrow[d]
        \\
        & \Sl_{n}(\K)
    \end{tikzcd}
    \label{eq:comp}
\end{equation}
again denoted by $\overline{\rho}$, and whose action is:
\begin{equation}
    \begin{tikzcd}[row sep=tiny]
        \overline{\rho} \colon\galg_{n} \arrow{r} & \Sl_{n}(\K)
        \\
        \phantom{\overline{\rho} \colon}x \arrow[mapsto]{r} & \overline{\rho}(x)\,=\,\begin{pmatrix}0&0\\P&A\end{pmatrix}~,
    \end{tikzcd}
    \label{eq:rhobar}
\end{equation}
where:
\begin{equation}
    x\,=\,ah+bx_++cx_-+\sum_{j=1}^{n-2}(d_iy_{i,+}+e_iy_{i,-}),
\end{equation}
with $P$ and $A$ defined by the equations~\eqref{eq:Ablock}
and~\eqref{eq:Pblock}, respectively. We now show that
$X_x=-\overline{\rho}(x)$.

\begin{lemma}\label{Xx}
    For $x\in\galg_{n}$ we have:
    \begin{equation}
        \rho_{\adj}(x)(M_n)\,=\,-\big(\overline{\rho}(x)M_n\,+\,M_n\overline{\rho}(x)^T\big)~.
        \label{eq:rhoadjxMn}
    \end{equation}
\end{lemma}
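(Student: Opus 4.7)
The plan is to verify \eqref{eq:rhoadjxMn} by reducing to a basis of $\galg_n$ and exploiting the block structures of $M_n$ and $\overline{\rho}(x)$. Both sides are $\K$-linear in $x$, and for the central generators $z_{i,j}$ both sides vanish: on the left because $\rho_{\adj}(z_{i,j})=0$ (the centre is the kernel of the adjoint action), and on the right because $\overline{\rho}(z_{i,j})=0$ by \Cref{cor:gbarrepr}. It therefore suffices to check the identity for $x$ ranging over the non-central generators $\{h,x_\pm\}\cup\{y_{i,\pm}:1\le i\le n-2\}$, interpreting $\rho_{\adj}(x)(M_n)$ as the matrix obtained by applying $[x,\,\cdot\,]$ entrywise to $M_n$.

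To systematize the comparison, decompose $M_n$ and $\overline{\rho}(x)$ compatibly as
\begin{equation*}
M_n \,=\, \begin{pmatrix} Z & Y \\ Y^T & A' \end{pmatrix}, \qquad
\overline{\rho}(x) \,=\, \begin{pmatrix} 0 & 0 \\ P & A \end{pmatrix},
\end{equation*}
where $Z=(z_{\min(i,j),\max(i,j)})_{i,j=1}^{n-2}$ is the central block, $Y$ is the $(n-2)\times 2$ block whose $i$-th row is $(-y_{i,-},\,y_{i,+})$, $A'$ is the $2\times 2$ symmetric block with entries $-2x_-,\,h,\,2x_+$ as in \eqref{eq:Mn0}, and $(P,A)$ are the blocks of $\overline{\rho}(x)$ given by \Cref{cor:gbarrepr}. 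A short block matrix computation gives
\begin{equation*}
\overline{\rho}(x) M_n + M_n \overline{\rho}(x)^T \,=\,
\begin{pmatrix} 0 & ZP^T + YA^T \\ PZ + AY^T & PY+(PY)^T + AA' + A'A^T \end{pmatrix},
\end{equation*}
so the lemma decouples into three block identities: $\rho_{\adj}(x)(Z)=0$ (automatic, as every entry of $Z$ is central), $\rho_{\adj}(x)(Y) = -(ZP^T + YA^T)$ in the off-diagonal block, and $\rho_{\adj}(x)(A') = -(PY+(PY)^T + AA' + A'A^T)$ in the lower-right block.

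These remaining identities split naturally into two families according to which block of $\overline{\rho}(x)$ is non-zero. For $x\in\{h,x_\pm\}$ one has $P=0$, and the off-diagonal identity collapses to $\rho_{\adj}(x)(Y) = -YA^T$, asserting that each column pair $(-y_{i,-},\,y_{i,+})^T$ carries the standard two-dimensional $\Sl_2$-representation; this is a direct check using $[h,y_{i,\pm}]=\pm y_{i,\pm}$, $[x_+,y_{i,-}]=y_{i,+}$ and $[x_-,y_{i,+}]=y_{i,-}$. The $(2,2)$ identity becomes $\rho_{\adj}(x)(A') = -(AA'+A'A^T)$, the $\Sl_2$-action on the symmetric square module spanned by $\{-2x_-,h,2x_+\}$. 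For $x=y_{i,\pm}$ one has $A=0$, and both nontrivial identities then reduce to the single new relation $[y_{i,+},y_{j,-}] = z_{\min(i,j),\max(i,j)}$ (together with the already-used commutators with $h$ and $x_\pm$), matched against the indices produced by the block products $PZ$ and $PY$. The main obstacle is purely the bookkeeping of signs and indices across the $Y$-rows and the central block $Z$: the deliberate minus signs on $-y_{i,-}$ and $-2x_-$ placed in $M_n$ are exactly what force the uniform overall minus sign in \eqref{eq:rhoadjxMn}. Conceptually, the lemma expresses the $\galg_n$-equivariance of the packaging assignment sending each non-central generator to its corresponding entry of the universal symmetric matrix $M_n$, with respect to the adjoint action on the source and the natural $\Sl_n(\K)$-action $N\mapsto -XN-NX^T$ on symmetric matrices pulled back via $\overline{\rho}$; once the lemma is established, the $\SL_n(\K)$-invariance of $\det$ under congruence immediately yields $\widehat{x}(\det M_n)=0$ for every $x\in\galg_n$, completing the proof of \Cref{thm:cas}.
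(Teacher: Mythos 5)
Your proposal is correct and follows essentially the same route as the paper's proof: the same block decomposition of $M_n$ and $\overline{\rho}(x)$, the same block computation of $\overline{\rho}(x)M_n+M_n\overline{\rho}(x)^T$, and the same entrywise verification split between the $\Sl_2$ generators (acting on the standard and symmetric-square pieces) and the $y_{i,\pm}$ generators (reducing to $[y_{i,+},y_{j,-}]=z_{\min(i,j),\max(i,j)}$). The only difference is organisational — you fix basis elements and set $P=0$ or $A=0$, whereas the paper carries a general $x$ with coefficients $a,b,c,d_i,e_i$ throughout — which does not change the substance of the argument.
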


\begin{proof}
    For convenience, we write $M_n$ in block form
    \begin{equation}
       M_{n} =
        \begin{pmatrix}
            S & U
            \\
            U^{T} & R
        \end{pmatrix},
        \label{eq:Mn}
    \end{equation}
    where the (sub)matrices $S$, $U$, and $R$ are defined as follows:
    \begin{equation}
        S = \begin{pmatrix}
            z_{1,1} & \cdots  & z_{1,n-2}
            \\
            \vdots & \ddots & \vdots
            \\
            z_{1,n-2} & \cdots & z_{n-2,n-2}
        \end{pmatrix},
        \quad
        U =\begin{pmatrix}
            -y_{1,-} & y_{1,+}
            \\
            \vdots & \vdots
            \\
            -y_{n-2,-} & y_{n-2,+}
        \end{pmatrix},
        \quad
        R =\begin{pmatrix}
            -2x_{-} & h
            \\
             h & 2x_{+}
         \end{pmatrix}.
        \label{eq:SURmat}
    \end{equation}
    Note that the matrix $S$ is symmetric. Since
    $\rho_{\adj}(x)(z_{i,j})=[x,z_{i,j}]=0$, the upper left block of
    $\rho_{\adj}(X)(M_n)$ is zero.  For the upper right block of
    $\rho_{\adj}(X)(M_n)$ we use that
    \begin{subequations}
    \begin{align}
    \rho_{\adj}(ah+bx_++cx_-)(-y_{i,-})\,&=\,ay_{i,-}-by_{i,+},\\
    \rho_{\adj}(ah+bx_++cx_-)(y_{i,+})\,&=\,ay_{i,+}+cy_{i,-}~,
    \end{align}
    \end{subequations}
    leading to the term $-UA^T$ in the upper right block and its transpose in
    the lower right block. Similarly:
    \begin{subequations}
    \begin{align}
        \rho_{\adj}\left(\sum_{j=1}^{n-2} (d_j y_{+,j}+e_j y_{j,-})\right)(-y_{i,-})\,
        &=\,-\sum_{j=1}^{n-2} d_j z_{i,j},\\
        \rho_{\adj}\left(\sum_{j=1}^{n-2} (d_j y_{+,j}+e_j y_{j,-})\right)(y_{i,+})\,
        &=\, -\sum_{j=1}^{n-2} e_j z_{i,j},
    \end{align}
    \end{subequations}
    hence $\rho_{\adj}(\sum_{j} (d_j y_{+,i}+e_j y_{j,-}))$ maps $U$ to $-SP^T$. 
    For the lower right block we compute:
    \begin{equation}
        \begin{aligned}
            \rho_{\adj}(ah+bx_++cx_-)(R)\,&=\,
            \begin{pmatrix}
                4ax_--2bh&-2bx_++2cx_-\\-2bx_++2cx_-&2ch-4ax_+
            \end{pmatrix}
            \\
            \,&=\,-\big(AR\,+\,RA^T\big)~.
        \end{aligned}
        \label{eq:rhoadjR1}
    \end{equation}
    Similarly:
    \begin{equation}
        \begin{aligned}
\rho_{\adj}(\sum (d_jy_{+,j}+e_jy_{j,-})(R)\,&=\,
\begin{pmatrix}
2\sum_jd_jy_{j,-}&\sum_j (-d_jy_{j,+}+e_jy_{j,-})\\
\sum_j (-d_jy_{j,+}+e_jy_{j,-})&-2\sum_jd_jy_{j,+}
\end{pmatrix}
        \\
        \,&=\,-\big(PU+U^TP^T\big)~.
        \end{aligned}
        \label{eq:rhoadjR2}
    \end{equation}
    Putting everything together, we showed that:
    \begin{equation}
    \rho_{\adj}(x)(M_n)\,     =\,
        \begin{pmatrix}
            0 & -(U A^{T} + S P^{T})
            \\
            -(A U^{T} + P S) & -(A R + R A^{T} + PU + U^{T}P^{T})
        \end{pmatrix}.
        \label{eq:Mndir}
    \end{equation}
    On the other hand, we can compute the action $\overline{\rho}(x)$ directly
    and we obtain the desired equality:
    \begin{equation}
        \begin{aligned}
            \overline{\rho}(x)M_n  + M_{n}\overline{\rho}(x)^T
            &=
            \begin{pmatrix}
                0 & 0
                \\
                P & A
            \end{pmatrix}
            \begin{pmatrix}
                S & U
                \\
                U^{T} & R
            \end{pmatrix}
            +
            \begin{pmatrix}
                S & U
                \\
                U^{T} & R
            \end{pmatrix}
            \begin{pmatrix}
                0 & P^{T}
                \\
                0 & A^{T}
            \end{pmatrix}
            \\
            &=
            \begin{pmatrix}
                0 & 0
                \\
                PS + AU^{T} & PU +AR
            \end{pmatrix}
            +
            \begin{pmatrix}
                0 & SP^{T} + UA^{T}
                \\
                0 & U^{T}P^{T} + RA^{T}
            \end{pmatrix}
            \\
            &=
            \begin{pmatrix}
                0 & SP^{T} + UA^{T}
                \\
                PS + AU^{T} & PU  + U^{T}P^{T}+AR + RA^{T}
            \end{pmatrix}.\\
            &=\,  - \rho_{\adj}(x)(M_n)~.
        \end{aligned}
        \label{eq:actMn}
    \end{equation}
    The ends the proof of the lemma.
\end{proof}

We are in place to complete the proof of the \Cref{thm:cas}.

\begin{proof}[Proof of \Cref{thm:cas}]
    For $x\in\galg_n$, we define $X_x:=-\overline{\rho}(x)\in \Sl_n(\K)$.  Then
    the vector field $\widehat{X}_x$ has the property that
    $\widehat{X}_x(y)=\rho_{\adj}(x)(y)$ for all $y\in \galg_{n}\subset
    S\galg_{n}$ and thus $\widehat{X}_x=\rho_{\adj}(x)$ for all $x\in \galg_n$.
    For $x\in \galg_n$, let $A(t)=\exp(tX_x) \in\SL_n(\K)$, then:
    \begin{equation}
        \rho_{\adj}(x)(\det M_n)\,=\,\eval{\dv{t} \det(A(t)N{A(t)}^T)}_{t=0}\,=\,0,
        \label{eq:rhoadjf}
    \end{equation}
    where the first equality follows from Lemma~\ref{Xx}, and the second from
    the invariance of the determinant. Thus, $C_n:=-\det(M_n)$ is a polynomial
    Casimir for $\galg_n$.
\end{proof}

In the next section, as a physical application in classical mechanics within
the framework of coalgebra symmetry, we derive a one degree of freedom
canonical symplectic realisation of the Lie-Poisson coalgebra
$(\mathcal{F}(\galg_{n}^*), \{,\}, \Delta)$  and we use the information on its
polynomial invariant to construct Hamiltonian systems in arbitrary dimensions
with a given number of first integrals obtained by the coalgebra symmetry
through the Casimir of degree $n$ reported in \Cref{eq:casn}.

\section{Canonical symplectic realisations and Hamiltonian systems associated to the Lie--Poisson coalgebra constructed from $\mathfrak{g}_n$}
\label{sec:coalg}

\noindent In this section we present a hierarchy of Hamiltonian models arising
from the Lie--Poisson algebra $S\mathfrak{g}_n=\Pol(\galg_n^*) \subset
\mathcal{F}(\galg_n^*)$ using the coalgebra symmetry approach as detailed in
\Cref{sec:background}. To do so, we will consider as a base case the simplest
symplectic manifold possible, namely $\R^{2}$ with local coordinates $(q,p)$,
the canonical symplectic form $\omega=\dd q \wedge \dd p$, and Poisson bracket:
\begin{equation}
    \{f,g\}=\pdv{f}{q}\pdv{g}{p}-\pdv{g}{q}\pdv{f}{p} \,,
    \quad
f,g\in\mathcal{C}^{\infty}(\R^{2}).
\label{eq:canpoiss}
\end{equation}
Our final result will be that to the Lie--Poisson algebra constructed from the
Lie algebra $\galg_{n}$ corresponds a hierarchy of Hamiltonian systems in $N$
degrees of freedom (obtained from a canonical one degree of freedom
realisation) possessing a grand total of $2N-(2n-2)$ first integrals, of which
$N-(n-2)$ are in involution. These Hamiltonian systems will be, in general,
non-integrable. The first two cases, namely $n=2,3$ are well known in the
literature, having being studied in many different
papers~\cite{Ballesteros_et_al_1996,BallesterosRagnisco1998,BallesterosHerranz2001,Ballesteros_et_al2008PhysAtomNuclei,BallesterosBlasco2010}.
Now, we will details this construction, starting from the basic cases $n=2,3,4$
and confront our results with the literature.

\subsection{The case $n=2$: $\mathfrak{g}_2$ coalgebra symmetry}

The case $n=2$ corresponds to the Lie-Poisson coalgebra associated to
$\mathfrak{sl}_2(\mathbb{R})$. The coalgebra symmetry approach is well known in
the literature, see
e.g.~\cite{Ballesteros_et_al2009,Ballesteros_et_al2008PhysAtomNuclei}.  We use
this example to illustrate in details the construction
outlined in \Cref{sss:coalg}. The reader who already knows the method can
safely jump to \Cref{sss:genericn}.

It is possible to show that the following map:
\begin{equation} 
    \begin{tikzcd}[row sep=tiny]
         \DD_2\colon S\galg_{2}  \arrow{r} & 
        \mathcal{C}^{\infty}(\R^{2})
        \\ 
        \phantom{\DD_{2}\colon S\galg}(h,x_{-},x_{+}) \arrow[mapsto]{r} &
        (q p,-q^2/2, p^2/2),
    \end{tikzcd}
    \label{eq:sl2cc}
\end{equation}
give rise to a one degree of freedom realisation of the linear 
elements of the symmetric algebra
$S\galg_{2}$. Indeed, by direct computation we have that:
\begin{equation}
    \{\DD_2(x_+),\DD_2(x_-)\}=\DD_2(h),
    \quad  
    \{\DD_2(h), \DD_2(x_\pm)\}=\pm 2  \DD_2(x_\pm),
\end{equation}
where we used the canonical Poisson bracket~\eqref{eq:canpoiss}. When we consider the canonical realization the Casimir assumes the
value:
\begin{equation}
    C_2(q,p)= \DD_2(C_2)=\DD_2(h)^2+4 \DD_2(x_-)\DD_2( x_+)\equiv0.
\end{equation}
At this level, we can take as one degree of freedom Hamiltonian any polynomial
(and then extend to any smooth/analytic function) of the generators of
the Lie--Poisson algebra $S\galg_2$, i.e.:
\begin{equation}
    H_{2}=H_{2}(\DD_2(h), \DD_2(x_-), \DD_2(x_+)) = H_{2}(qp,-q^{2}/2,p^{2}/2).
	\label{eq:ham1}
\end{equation}
To give a concrete example, the harmonic oscillator arises with the choice:
\begin{equation}
    H_\text{ho}=\DD_2(x_+)-\omega^{2}\DD_2(x_-)
    \label{eq:ham1ho}
\end{equation}
in~\eqref{eq:ham1}. 
Now, we can use the primitive coproduct $\Delta$ to obtain the generators
of {$S\galg_{2}$ within $(S\galg_{2})^{\otimes 2}$, namely:
\begin{equation}
    \Delta(h)=h \otimes 1 +1 \otimes h, 
    \quad   \Delta (x_{\pm})=x_{\pm} \otimes 1 + 1 \otimes x_\pm.
    \label{eq:gensl22}
\end{equation}
Applying this to the Hamiltonian~\eqref{eq:ham1} we obtain its two
degrees of freedom version:
\begin{equation}
    \begin{aligned}
    H_{2}^{(2)}&:=  \Delta(H_{2})=H_{2}(  \Delta(h),   \Delta(x_-),   \Delta(x_+))
        \\
        &=H_{2}(h \otimes 1 +1 \otimes h, x_{-} \otimes 1 + 1 \otimes x_-, x_{+} \otimes 1 + 1 \otimes x_+).
    \end{aligned}
    \label{eq:ham2}
\end{equation}
In the same way we obtain the Casimir element:
\begin{equation}
    \begin{aligned}
        C^{(2)}_2&:=
        \Delta(C_2)=\Delta(h)^2+4 \Delta(x_-)\Delta(x_+)
        \\
        &=C_2 \otimes 1 + 1 \otimes C_2
        +2 \bigl(h \otimes h+2(x_- \otimes x_++x_+ \otimes x_-) \bigl).
    \end{aligned}
\end{equation}
From this last formula we see how the coproduct produces non-trivial elements
by ``intertwining'' elements coming from different sites. 

Now, we come to the realisation. By direct computation we see that applying the
realisation $\DD_2\colon S\galg_{2}\longrightarrow\mathcal{C}^{\infty}(\R^{2})$
to the generators~\eqref{eq:gensl22} we obtain:
\begin{equation}
    \DD^{\otimes2}(\Delta(h))=q_1 p_1+q_2 p_2, 
    \quad 
    \DD^{\otimes2}(\Delta(x_{-}))=-\frac{q_1^2+q_2^2}{2},
    \quad 
    \DD^{\otimes2}(\Delta(x_{+}))=\frac{p_1^2+p_2^2}{2}.
    \label{eq:gensl22rel}
\end{equation}
That is, we obtain the following canonical two degrees of freedom realisation:
\begin{equation} 
    \begin{tikzcd}[row sep=tiny]
        (\DD_2)^{\otimes2}\colon S\galg_{2}^{\otimes 2}   \arrow{r} & 
        \mathcal{C}^{\infty}(\R^{4})
        \\ 
        \phantom{\DD_{2}\colon S\galg}(\Delta(h),\Delta(x_{-}),\Delta(x_{+})) \arrow[mapsto]{r} &
        (q_1 p_1+q_2 p_2,-(q_1^2+q_2^2)/2, (p_1^2+p_2^2)/2).
    \end{tikzcd}
    \label{eq:sl2cc2}
\end{equation}
Clearly, this is the same as an application of \Cref{lem:coprodreal}. So, in
the realisation we have the two degrees of freedom Hamiltonian:
\begin{equation}
    H_{2}^{(2)}(q_1, p_1,q_2,p_2)=H_{2}(q_1 p_1+q_2 p_2,-(q_1^2+q_2^2)/2, (p_1^2+p_2^2)/2),
\end{equation}
and the Casimir function becomes:
\begin{equation}
    C_2^{(2)}(q_1, p_1,q_2,p_2) =-L_{12}^2,
    \quad
    L_{12}=q_1 p_2-q_2 p_1.
\end{equation}
The minus sign is of course irrelevant. Note that taking into account the first
integral $C_{2}^{(2)}$ the Hamiltonian system generated by $H_2^{(2)}$ is
automatically Liouville integrable, which for $N=2$ coincides with the notion
of quasi-maximal superintegrability, being $N=2N-2$ for $N=2$. For example, for
$H_{2}=H_{\text{ho}}$ we obtain:
\begin{equation}
    H^{(2)}_{\text{ho}}(q_1, p_1,q_2,p_2) = 
    \frac{1}{2}(p_1^2+p_2^2) + 
    \frac{\omega^{2}}{2} (q_1^2+q_2^2),
\end{equation}
i.e.\ the two-dimensional isotropic harmonic oscillator. Such a system admits
additional first integrals making it maximally superintegrable. As far as we
know today, the additional integrals are not of coalgebraic origin, see for
instance~\cite{PostRiglioni2015} for a discussion of this problem in the
quantum case.

Again by direct computation, or by application of \Cref{lem:coprodreal}
we produce the following canonical $N$ degrees of freedom realisation with
Darboux coordinates $(\vb{q}, \vb{p})=(q_1,\ldots, q_N,p_{1},\ldots, p_N)$ 
\begin{equation} 
    \begin{tikzcd}[row sep=tiny]
        (\DD_{2})^{\otimes N}\colon S\galg_{2}^{\otimes N} \arrow{r} & 
        \mathcal{C}^{\infty}(\R^{2N})
        \\ 
        \phantom{\DD_{2}\colon S\galg}(\Delta^{(N)}(h),\Delta^{(N)}(x_{-}),\Delta^{(N)}(x_{+})) \arrow[mapsto]{r} &
        (\vb{q} \vdot \vb{p},-\vb{q}^2/2, \vb{p}^2/2),
    \end{tikzcd}
    \label{eq:sl2ccN}
\end{equation}
where we used the vector notation. In the same way, we obtain the following
abstract Hamiltonian:
\begin{equation}
    H_{2}^{(N)}:=\Delta(H_{2})=H_{2}(\Delta^{(N)}(h), \Delta^{(N)}(x_-), \Delta^{(N)}(x_+)) \, ,
    \label{eq:hamN}
\end{equation}
which in the realisation~\eqref{eq:sl2ccN} becomes the function:
\begin{equation}
    H_{2}^{(N)}(\vb{q}, \vb{p})=H_{2}(\vb{q} \vdot \vb{p},-\vb{q}^2/2, \vb{p}^2/2).
    \label{HN}
\end{equation}
Applying \Cref{thm:coalgfund2} we have that the above Hamiltonian
is endowed with the following left and right Casimir functions:
\begin{equation}
    C_2^{(m)}(\vb{q}, \vb{p})=-\sum_{1 \leq i<j}^m L_{ij}^2, \, 
    \quad 
    C_{2,(m)}(\vb{q}, \vb{p})=-\sum_{N-m+1 \leq i<j}^N L_{ij}^2, 
    \qquad m=2, \dots, N,
\label{LRg2}
\end{equation}
where $L_{ij}=q_i p_j-q_j p_i$ are the components of the angular momentum
tensor. Let us remark that the subscripts appearing in the left and right
Casimirs refer to the $n=2$ case of $\galg_n$.  Again, the minus sign is
irrelevant. Still following the statement of \Cref{thm:coalgfund2} the
$C_{2}^{(m)}$ and the $C_{2,(m)}$ form two sets of commuting first integrals.
We also recall that, from the general theory, $C_2^{(N)}=C_{2, (N)}$. 

Adding the Hamiltonian $H^{(N)}_{2}$ to the two sets of integrals
$L_{2}:=\set{C_2^{(2)}, \dots, C_2^{(N)}}$ and $R_{2}:=\set{C_{2,(2)}, \dots,
C_{2,(N)}}$ we obtain two sets of $N$ functionally independent commuting first
integrals:
\begin{equation}
    \widetilde{L}_{2}:=\set{H^{(N)}_{2}, C_2^{(2)}, \dots, C_2^{(N)}},
    \quad
    \widetilde{R}_{2}:=\set{H^{(N)}_{2}, C_{2, (2)}, \dots,  C_{2, (N)}},
\end{equation}
thus proving that the obtained Hamiltonian system is always Liouville
integrable, see also~\cite{Ballesteros_et_al2009}. This kind of property is
called \emph{multi-integrability}.  In total, it is easy to check, as all first
integrals except $H^{(N)}_{2}$ and $C_{2}^{(N)}=C_{2,(N)}$ are local,
that we have a set of $[(N-1)+(N-1)-1]+1=2N-2$ functionally independent
first integrals:
\begin{equation} 
  I_{2}:=\set{H^{(N)}_{2}, C_2^{(2)}, \ldots, C_2^{(N)}, C_{2,(2)},
    \ldots, C_{2,(N-1)}}.
\end{equation} 
So, the Hamiltonian system is quasi-maximally superintegrable (QMS), being just
one constant missing for maximal superintegrability. In some particular cases,
like the harmonic oscillator discussed above, there are additional functionally
independent constants leading to maximal superintegrability, see for
instance~\cite{LatiniHerranz2025} and references therein.

\subsection{The case $n=3$: $\mathfrak{g}_3$ coalgebra symmetry}

The case $n=3$ is associated to the Lie algebra
$\mathfrak{h}_6$~\cite{Zhang_et_al1990}. The coalgebra symmetry approach to
this case has been developed
in~\cite{BallesterosHerranz2001,Ballesteros_et_al2009,BlascoPhD}.

It is possible to show by direct computation that the map $\DD_{3}\colon
S\galg_{3}\longrightarrow \mathcal{C}^{\infty}(\R^{2})$ whose action on the
generating set of $S\galg_{3}$, $\set{h, x_-,x_+,y_{1,-},y_{1,+},z_{1,1}}$, is
given by:	
\begin{equation}
    h \mapsto q p \, , \quad x_-\mapsto-q^2/2, \quad x_+ \mapsto p^2/2 \, ,
    \quad y_{1,-}\mapsto-\alpha q \, , \quad y_{1,+}\mapsto \alpha p\, , \quad
    z_{1,1}\mapsto\alpha^2,
    \label{eq:realg3}
\end{equation}
where $\alpha \in \mathbb{R}$, provides a one degree of freedom canonical
symplectic realisation of $S\galg_{3}$. So, as the Hamiltonian we can take any
smooth function of the generators in the realisation~\eqref{eq:realg3}:
\begin{equation}
    \begin{aligned}
        H_{3}&=H_{3}(\DD_{3}(h),\DD_{3}(x_-),\DD_{3}(x_+),\DD_{3}(y_{1,-}),\DD_{3}(y_{1,+}),\DD_{3}(z_{1,1}))
        \\
        &=H_{3}(q p , -q^2/2, p^2/2 , -\alpha q , \alpha p, \alpha^2).
    \end{aligned}
\label{Hg3}
\end{equation}

Then, through a straightforward application of \Cref{lem:coprodreal} we obtain
a $N$ degrees of freedom canonical symplectic realisation $\DD_{3}^{\otimes
N}\colon S\galg_{3}^{\otimes N} \longrightarrow\mathcal{C}^{\infty}(\R^{2N})$
with Darboux coordinates $(\vb{q}, \vb{p})=(q_1,\ldots, q_N,p_{1},\ldots, p_N)$
whose action on the generating set of $S\galg_{3}$, $\set{h,
x_-,x_+,y_{1,-},y_{1,+},z_{1,1}}$, is given by:
\begin{equation} 
    h \mapsto \vb{q}\vdot \vb{p}, 
    \quad 
    x_-\mapsto-\vb{q}^2/2, 
    \quad 
    x_+ \mapsto \vb{p}^2/2,
    \quad 
    y_{1,-}\mapsto-\vb*{\alpha}\vdot \vb{q}, 
    \quad 
    y_{1,+}\mapsto \vb*{\alpha}\vdot \vb{p}, 
    \quad
    z_{1,1}\mapsto\vb*{\alpha}^2,
    \label{eq:realg3N}
\end{equation}
where we used the vector notation and $\vb*{\alpha}\in\R^{N}$ is a vector
of arbitrary constants. So, from the one degree of freedom Hamiltonian
we obtain the following $N$ degrees of freedom one:
\begin{equation}
    \begin{aligned}
        H^{(N)}_{3} &=\DD_{3}^{\otimes N}\bigr[H_{3}(\Delta^{(N)}(h), \Delta^{(N)}(x_-), 
      \Delta^{(N)}(x_+), \Delta^{(N)}(y_{1,-}), \Delta^{(N)}(y_{1,+}), 
       \Delta^{(N)}(z_{1,1}))\bigr]
        \\
        &=H_{3}\bigl(\vb{q}\vdot \vb{p}, -\vb{q}^2/2, \vb{p}^2/2, 
            -\vb*{\alpha}\vdot \vb{q}, \vb*{\alpha}\vdot \vb{p},
            \vb*{\alpha}^2\bigr).
	\label{HNg3}
    \end{aligned}
\end{equation}
Then, applying again \Cref{thm:coalgfund2}, we obtain the left and right
Casimir functions arising from the degree three Casimir
polynomial~\eqref{eq:C3sol}:
\begin{subequations}
    \begin{align}
	C_3^{(m)}(\vb{q}, \vb{p})&=-\sum_{1 \leq i<j<k}^m L_{ijk}^2,
        \quad
        m=3, \dots, N,
        \label{LRg3a}
        \\
        C_{3,(m)}(\vb{q}, \vb{p})&=-\sum_{N-m+1 \leq i<j<k}^N L_{ijk}^2, 
        \quad
        m=3, \dots, N,
        \label{LRg3b}
    \end{align}
    \label{LRg3}%
\end{subequations}
where 
\begin{equation}
    L_{ijk}:=\alpha_i L_{jk} - \alpha_j L_{ik}+\alpha_k L_{ij}. 
    \label{eq:buldingblock3}
\end{equation}
We observe that the index $m$ in this case starts from 3, since it easy to show
that $C_3^{(m)}$ and $C_{3,(m)}$ vanish identically for $m=1,2$. In summary, we
have that the two sets $L_{3}:=\set{C_{3}^{(m)}}_{m=3}^{N}$ and the
$R_{3}:=\set{C_{3,(m)}}_{m=3}^{N}$ are made of commuting first integrals.
Again, from the general theory we have that $C_{3}^{(N)}=C_{3,(N)}$.

So, for $N\geq3$ adding the Hamiltonian $H^{(N)}_{3}$ to the sets $L_{3}$ and
$R_{3}$ we obtain two sets composed by $N-1$ functionally independent commuting
first integrals:
\begin{equation}
    \widetilde{L}_{3}:=\set{H^{(N)}_{3}, C_3^{(3)}, \dots, C_3^{(N)}},
    \quad
    \widetilde{R}_{3}:=\set{H^{(N)}_{3}, C_{3,(3)}, \dots,  C_{3,(N)}},
\end{equation}
thus proving that the obtained Hamiltonian system is always quasi-integrable.
In total, it is easy to check, as all first
integrals except $H^{(N)}_{3}$ and $C_{3}^{(N)}=C_{3,(N)}$ are local, we have a
set of $[(N-2)+(N-2)-1]+1=2N-4$ functionally independent first integrals:
\begin{equation} 
    I_{3}:=\set{H^{(N)}_{3}, C_3^{(3)}, \ldots, C_3^{(N)}, C_{3,(3)},
    \ldots, C_{3, (N-1)}}.
\end{equation}
So, despite the Hamiltonian system is not integrable, it possesses a wealth of
first integrals. In some particular cases such that this additional first
integral exists, the Hamiltonian system immediately becomes superintegrable
with $2N-3$ first integral. These particular cases have been studied in
the literature with various methods, including the so-called subalgebra
integrability approach, see for instance~\cite{BlascoPhD,BallesterosBlasco2010}.

\subsection{The case $n=4$: $\mathfrak{g}_4$ coalgebra symmetry}

The case $n=4$ is associated to the ten-dimensional Lie algebra
$\mathfrak{g}_4$. This algebra was first introduced (in a different basis)
in~\cite{GLT_coalgebra}, where it was observed that it could be used to
interpret a $N$ degrees of freedom generalisation of the autonomous discrete
Painlev\'e equation, see~\cite{JoshiViallet2017,Grammaticosetal1991}.  However,
we observe that in~\cite{GLT_coalgebra} the details of the coalgebra
construction for such a Lie algebra were not presented.  In this Section we
fill this gap presenting an outline of such a construction.

It is possible to show by direct computation that the map $\DD_{4}\colon
S\galg_{4}\longrightarrow \mathcal{C}^{\infty}(\R^{2})$ whose action on the
generating set of $S\galg_{4}$, $\set{h,
x_-,x_+,y_{1,-},y_{1,+},z_{1,1},y_{2,-},y_{2,+},z_{1,2},z_{2,2}}$, is given by:	
\begin{equation}
    \begin{array}{ccccc}
	h \mapsto q p, & x_- \mapsto-q^2/2, & x_+ \mapsto p^2/2, & y_{1,-}\mapsto-\alpha q, & y_{1,+}\mapsto \alpha p,
        \\
	z_{1,1}\mapsto\alpha^2, & y_{2,-}\mapsto-\beta q, & y_{2,+}\mapsto\beta p, & z_{1,2}\mapsto\alpha \beta, & z_{2,2}\mapsto\beta^2,
    \end{array}
    \label{eq:realg4}
\end{equation}
where $\alpha,\beta \in \mathbb{R}$ provides a one degree of freedom canonical
symplectic realisation of $S\galg_{4}$. So, as a Hamiltonian we can take any
smooth function of the generators in the realisation~\eqref{eq:realg4}:
\begin{equation}
    \begin{aligned}
        H_{4}&=\DD_{4}\bigl[H_{4}(h,x_-,x_+,y_{1,-},y_{1,+},z_{1,1},y_{2,-},y_{2,+},z_{1,2},z_{2,2})\bigr]
        \\
        &=H_{4}\bigl(q p, -q^2/2, p^2/2, -\alpha q, \alpha p,
            \alpha^2, -\beta q, \beta p, \alpha \beta, \beta^2\bigr). 
    \end{aligned}
    \label{Hg4}
\end{equation}
Then, with a straightforward application of \Cref{lem:coprodreal} we obtain a
$N$ degrees of freedom canonical symplectic realisation $(\DD_{4})^{\otimes
N}\colon S\galg_{4}^{\otimes N}  \longrightarrow\mathcal{C}^{\infty}(\R^{2N})$
with Darboux coordinates $(\vb{q}, \vb{p})=(q_1,\ldots, q_N,p_{1},\ldots, p_N)$
whose action on the generating set of $S\galg_{4}$, $\set{h,
x_-,x_+,y_{1,-},y_{1,+},z_{1,1},y_{2,-},y_{2,+},z_{1,2},z_{2,2}}$, is given by:
\begin{equation} 
    \begin{array}{ccccc}
        h \mapsto \vb{q}\vdot \vb{p}, & 
        x_- \mapsto-\vb{q}^2/2, & 
        x_+ \mapsto \vb{p}^2/2, & 
        y_{1,-}\mapsto-\vb*{\alpha}\vdot \vb{q}, & 
        y_{1,+}\mapsto \vb*{\alpha}\vdot \vb{p},
        \\
	z_{1,1}\mapsto\vb*{\alpha}^2, & y_{2,-}\mapsto-\vb*{\beta}\vdot \vb{q}, & 
        y_{2,+}\mapsto\vb*{\beta}\vdot \vb{p}, & 
        z_{1,2}\mapsto\vb*{\alpha}\vdot \vb*{\beta}, & 
        z_{2,2}\mapsto\vb*{\beta}^2,
    \end{array}
    \label{eq:realg4N}
\end{equation}
where we used the vector notation and $\vb*{\alpha},\vb*{\beta}\in\R^{N}$ are
vectors of arbitrary constants. So, from the one degree of freedom Hamiltonian
we obtain the following $N$ degrees of freedom one:
\begin{equation}
    \begin{aligned}
        H^{(N)}_{4} &=\DD_{4}^{\otimes N} \bigl[ 
        \Delta^{(N)}(H_{4})(h,x_-,x_+,y_{1,-},y_{1,+},z_{1,1},y_{2,-},y_{2,+},z_{1,2},z_{2,2})\bigr]
        \\
        &=H_{4}\bigl(\vb{q} \cdot \vb{p},-\vb{q}^2/2, \vb{p}^2/2, -\vb*{\alpha} \cdot \vb{q}, \vb*{\alpha} \cdot \vb{p}, \vb*{\alpha}^2,-\vb*{\beta} \cdot \vb{q}, \vb*{\beta} \cdot \vb{p} ,\vb*{\alpha} \cdot \vb*{\beta},\vb*{\beta}^2 \bigr).
    \end{aligned}
    \label{HNg4}
\end{equation}
Then, applying again \Cref{thm:coalgfund2}, we obtain the left and right
Casimir functions arising from the degree four Casimir
polynomial~\eqref{eq:C4soldet}:
\begin{subequations}
    \begin{align}
	C_4^{(m)}(\vb{q}, \vb{p}) &=-\sum_{1 \leq i<j<k<l}^m L_{ijkl}^2,
        \quad m=4, \dots, N,
        \label{Lg4}
        \\
        C_{4,(m)}(\vb{q}, \vb{p}) &=-\sum_{N-m+1 \leq i<j<k<l}^N L_{ijkl}^2, 
        \quad m=4, \dots, N,
        \label{Rg4}
    \end{align}
    \label{LRg4}%
\end{subequations}
where 
\begin{equation}
     \begin{aligned}
        L_{ijkl} &= \left(\alpha _k \beta _l -\alpha_l\beta_k\right) L_{ij} 
        + \left(\alpha _l \beta _j-\alpha _j \beta _l  \right) L_{ik}
        + \left(\alpha _j \beta _k-\alpha _k \beta _j \right)L_{il}
        \\
        &+ \left(\alpha _i \beta _l-\alpha _l \beta _i \right) L_{jk} 
        + \left(\alpha_k \beta _i-\alpha _i \beta _k \right)L_{jl}
        + \left(\alpha _i \beta _j- \alpha _j\beta _i\right)L_{kl}.
     \end{aligned}
    \label{eq:buildingblocksg4}
\end{equation}
We observe again that the index $m$ in this case starts from 4, since it easy
to show that $C_4^{(m)}$ and $C_{4,(m)}$ vanish identically for $m=1,2,3$. In
summary, we have that the two sets $L_{4}:=\set{C_{4}^{(m)}}_{m=4}^{N}$ and the
$R_{4}:=\set{C_{4,(m)}}_{m=4}^{N}$ are made of commuting first integrals.
Again, from the general theory we have that $C_{4}^{(N)}=C_{4,(N)}$.

So, for $N\geq4$ adding the Hamiltonian $H^{(N)}_{4}$ to the sets $L_{4}$ and
$R_{4}$ we obtain two sets composed by $N-2$ functionally independent commuting first
integrals:
\begin{equation}
    \widetilde{L}_{4}:=\set{H_{4}^{(N)}, C_4^{(4)}, \dots, C_4^{(N)}},
    \quad
    \widetilde{R}_{4}:=\set{H_{4}^{(N)}, C_{4,(4)}, \dots,  C_{4,(N)}},
\end{equation}
thus proving that the obtained Hamiltonian system is of PLN type with rank $N-2$.
In total, it is easy to check, as all first
integrals except $H^{(N)}$ and $C_{4}^{(N)}=C_{4,(N)}$ are local, we have a
set of $[(N-3)+(N-3)-1]+1=2N-6$ functionally independent first integrals:
\begin{equation} 
    I_{4}:=\set{H_{4}^{(N)}, C_4^{(4)}, \ldots, C_4^{(N)}, C_{4,(4)},
    \ldots, C_{4,(N-1)}}.
\end{equation}
So, despite the Hamiltonian system is not integrable, it possesses several
first integrals. In some particular cases such that two additional first
integral exist, the Hamiltonian system immediately becomes superintegrable with
$2N-4$ first integrals. We defer the study of such cases to future research.

\subsection{The case $n$ generic: $\mathfrak{g}_n$ coalgebra symmetry}
\label{sss:genericn}

It is now possible to obtain the expression for generic $n$. We start with the
following result characterising a one degree of freedom symplectic realisation
of $\galg_{n}$ and the associated $N$ degrees of freedom one:

\begin{proposition}
    The Lie--Poisson algebra $S\galg_{n}$ has a one degree of freedom
    canonical symplectic realisation $\DD_{n}\colon S\galg_{n}\longrightarrow
    \mathcal{C}^{\infty}(\R^{2})$ mapping the $T_{n}$ generators of $\galg_{n}$ as:
    \begin{equation}
        \begin{array}{ccc}
            h \mapsto q p, & 
            x_- \mapsto-q^2/2,
            &
            x_+ \mapsto p^2/2,
            \\
            y_{i,-} \mapsto-\alpha^{(i)} q,
            &
            y_{i,+} \mapsto \alpha^{(i)} p, & 
            z_{i,j} \mapsto\alpha^{(i)} \alpha^{(j)},
        \end{array}
        \label{eq:realgn}
    \end{equation}
    where $\alpha^{(i)} \in \mathbb{R}$, $i=1, \dots, n-2$, are $n-2$ real
    constants. Moreover, this realisation gives rise to the following $N$
    degrees of freedom canonical symplectic realisation $\DD_{n}^{\otimes N}\colon
    S\galg_{n}^{\otimes N}\longrightarrow \mathcal{C}^{\infty}(\R^{2N})$
    mapping the generators of $\galg_{n}$ as:
    \begin{equation}
        \begin{array}{ccc}
            h \mapsto \vb{q}\vdot \vb{p}, & 
            x_- \mapsto -\vb{q}^2/2,
            &
            x_+ \mapsto \vb{p}^2/2,
            \\
            y_{i,-} \mapsto-\vb*{\alpha}^{(i)}\vdot \vb{q}, &
            y_{i,+}\mapsto \vb*{\alpha}^{(i)}\vdot \vb{p}, & 
            z_{i,j} \mapsto\vb*{\alpha}^{(i)}\vdot\vb*{\alpha}^{(j)},
        \end{array}
        \label{eq:realgnN}
    \end{equation}
    where $\vb*{\alpha}^{(i)} \in \mathbb{R}^{N}$, $i=1, \dots, n-2$, are $n-2$ real
    vectors.
    \label{prop:realgn}
\end{proposition}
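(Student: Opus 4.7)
The plan is to verify that the prescriptions \eqref{eq:realgn} and \eqref{eq:realgnN} are genuine Poisson algebra homomorphisms, that is, that they preserve all Poisson brackets. The first part, concerning the one degree of freedom realisation $\DD_n$, will follow by a direct case-by-case calculation using the canonical Poisson bracket \eqref{eq:canpoiss} on $\R^2$; the second part will then be deduced from \Cref{lem:coprodreal} applied to the primitive coproduct on $S\galg_n$.

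For the one degree of freedom map, I would check the defining commutation relations \eqref{eq:gncommrel} one family at a time against the images under $\DD_n$. Concretely, writing $\DD_n(h)=qp$, $\DD_n(x_-)=-q^2/2$, $\DD_n(x_+)=p^2/2$, $\DD_n(y_{i,\pm})=\pm\alpha^{(i)}\{q,p\}$ of the correct type, and $\DD_n(z_{i,j})=\alpha^{(i)}\alpha^{(j)}$, one computes e.g.\ $\{qp,-q^2/2\}=q^2$ so that $\DD_n([h,x_-])=-2\DD_n(x_-)$, and $\{-q^2/2,p^2/2\}=-qp$ matching $\{x_-,x_+\}=-h$. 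The brackets $\{\DD_n(h),\DD_n(y_{i,\pm})\}=\pm\DD_n(y_{i,\pm})$, $\{\DD_n(x_\mp),\DD_n(y_{i,\pm})\}=\DD_n(y_{i,\mp})$, $\{\DD_n(x_\pm),\DD_n(y_{i,\pm})\}=0$, and $\{\DD_n(y_{i,+}),\DD_n(y_{j,-})\}=\alpha^{(i)}\alpha^{(j)}=\DD_n(z_{\min(i,j),\max(i,j)})$, are all obtained by the same one-line computation. Finally, since $\DD_n(z_{i,j})$ is a constant, all its Poisson brackets with any generator vanish, consistent with the centrality of $z_{i,j}$. Since the generators of $\galg_n$ Poisson-generate $S\galg_n$, matching Poisson brackets on generators suffices to conclude that $\DD_n$ is a Poisson algebra homomorphism.

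For the $N$ degree of freedom realisation, the crucial observation is that $\DD_n^{\otimes N}$ is automatically a Poisson homomorphism from $(S\galg_n)^{\otimes N}$ to $\mathcal{C}^\infty(\R^{2N})$, since tensoring Poisson homomorphisms yields a Poisson homomorphism with respect to the induced Poisson structures \eqref{eq:pstrucatensa}. To read off the explicit action on single-site generators, I would apply \Cref{lem:coprodreal} with $m=N$ to each generator $x_i$ of $S\galg_n$; formula \eqref{eq:tensorXL} then gives
\begin{equation*}
\DD_n^{\otimes N}(\Delta_L^{(N)}(x_i))=\sum_{j=1}^{N} x_i(q_j,p_j),
\end{equation*}
and substituting the expressions \eqref{eq:realgn} produces precisely the claimed formulas \eqref{eq:realgnN}, once the constants $\alpha^{(i)}$ at each site are packaged into the vectors $\vb*{\alpha}^{(i)}\in\R^N$ whose components are the constants at the individual sites.

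The proof is entirely mechanical and poses no real conceptual obstacle; the only point requiring a small amount of care is bookkeeping of signs and of the ordering convention on the indices of $z_{i,j}$ in the relation $\{y_{i,+},y_{j,-}\}=z_{\min(i,j),\max(i,j)}$, since the symmetric symbol $\alpha^{(i)}\alpha^{(j)}$ automatically accommodates the $\min/\max$ prescription. Because the primitive coproduct is cocommutative on linear elements, the left and right realisations give the same result on generators, so no ambiguity arises.
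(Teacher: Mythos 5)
Your proposal is correct and takes essentially the same approach as the paper: the paper likewise verifies the bracket relations by direct computation (displaying only the representative case $\pb{\DD_n(y_{i,+})}{\DD_n(y_{j,-})}=\alpha^{(i)}\alpha^{(j)}=\DD_n(z_{i,j})$ and noting the others are proven similarly) and then deduces the second part by a direct application of \Cref{lem:coprodreal}. Your additional remarks on sign bookkeeping and on packaging the per-site constants into the vectors $\vb*{\alpha}^{(i)}\in\R^{N}$ are consistent with, and slightly more explicit than, the paper's treatment.
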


\begin{remark}
    In \Cref{prop:realgn} we denoted with $\alpha^{(i)}$ the real constants
    appearing in the realization. For example, comparing with the previous case
    $n=4$,  in this notation we should identify $\alpha^{(1)}\equiv \alpha$ and
    $\alpha^{(2)}\equiv \beta$. 
    \label{rem:const}
\end{remark}

\begin{proof}
    The proof is from a direct computation. For the first part of the
    proposition, we verify that $\DD_{n}\left(\pb{y_{i,+}}{y_{j,-}}\right)
    = \pb{ \DD_{n}(y_{i,+})}{\DD_{n}(y_{j,-})}$, while the
    other relations are proven similarly:
    \begin{equation}
        \begin{aligned}
            \left\{\DD_{n}(y_{i,+}),\DD_{n}(y_{j,-})\right\}
            &=
            \left\{ \alpha^{(i)}p,-\alpha^{(j)}q \right\} 
            =-\alpha^{(i)}\alpha^{(j)}\left\{ p, q \right\} 
            \\
            &=
            \alpha^{(i)}\alpha^{(j)} =\DD_n(z_{i,j})
            =\DD_{n}\left(\left\{ y_{i,+},y_{j,-} \right\}\right).
        \end{aligned}
        \label{eq:pfyipyjm}
    \end{equation}

  \noindent   In turn, the second part of the statement follows from a direct application
    of \Cref{lem:coprodreal}. This concludes the proof of the proposition.
\end{proof}

At this point we are in place to define the corresponding $N$ degrees of
freedom Hamiltonian system. From formula~\eqref{eq:realgn} we have that as
Hamiltonian we can take any smooth function of the $T_{n}$ elements of
the generating set of $S\galg_{n}$:
\begin{equation}
    \begin{aligned}
        H_{n}&=\DD_{n}\bigl[H_{n}\bigl(h,x_-,x_+,\set{y_{i,-}}_{i},\set{y_{i,+}}_{i},\set{z_{i,j}}_{i\leq j}\bigr)\bigr]
        \\
        &=H_{n}\big(q p,-q^2/2,p^2/2,\set{-\alpha^{(i)} q}_{i},
    \set{\alpha^{(i)} p}_{i},\set{\alpha^{(i)} \alpha^{(j)}}_{i\leq j}\big), 
    \end{aligned}
    \label{eq:H1gn}
\end{equation}
where for the sake of brevity we consider understood that $i=1,\ldots,n-2$ and
$1\leq i \leq j \leq n-2$. The $N$ degrees of freedom extension of the
Hamiltonian is obtained through the application of $\DD^{\otimes N}_n$
realisation from equation~\eqref{eq:realgnN} and reads as:
\begin{equation}
    H^{(N)}=H\big(\vb{q}\vdot \vb{p},-\vb{q}^2/2,\vb{p}^2/2,
        \set{-\vb*{\alpha}^{(i)}\vdot \vb{q}}_{i},
        \set{\vb*{\alpha^{(i)}}\vdot \vb{p}}_{i},
        \set{\vb*{\alpha}^{(i)}\vdot \vb*{\alpha}^{(j)}}_{i\leq j}
        \big).
    \label{eq:HNgn}
\end{equation}

Now, let us discuss the first integral we can build for the Hamiltonian system
defined by equation~\eqref{eq:HNgn} using \Cref{thm:coalgfund2}. Along the line
of the previous results, we can generate first integrals only from the non-trivial
degree $n$ Casimir polynomial $C_{n}$. Then, upon careful direct
inspection, the left and right Casimir integrals arising from the
above-mentioned polynomial can be expressed as:
\begin{subequations}
    \begin{align}
        C^{(m)}_n(\vb{q}, \vb{p}) &=
        -\sum_{1 \leq i_1<\cdots<i_n}^m L_{i_1  \dots i_n}^2 \, , \quad m=n, \dots, N
        \label{LgN}
        \\ 
        C_{n,(m)}(\vb{q}, \vb{p}) &=
        -\sum_{N-m+1 \leq i_1<\cdots<i_n}^N L_{i_1 \dots i_n}^2,  \quad m=n, \dots, N
        \label{RgN}
    \end{align}
    \label{LRgN}
\end{subequations}
where:
\begin{equation}
    L_{i_{1},\ldots,i_{n}}
  : =
    \det
    \begin{pmatrix}
        \alpha_{i_{1}}^{(1)} & \ldots &\alpha_{i_{n}}^{(1)}
        \\
        \vdots & \ddots & \vdots
        \\
        \alpha_{i_{1}}^{(n-2)} & \ldots &\alpha_{i_{n}}^{(n-2)}
        \\
        q_{i_{1}} & \ldots & q_{i_{n}}
        \\
        p_{i_{1}} & \ldots & p_{i_{n}}
    \end{pmatrix}.
    \label{eq:Li1in}
\end{equation}
From the expression~\eqref{eq:Li1in} we see immediately that the
$C_n^{(m)}=C_{n,(m)}\equiv 0$ for $m<n$. Indeed, in such cases the building
blocks $L_{i_{1},\ldots,i_{n}}$ are identically zero because one should fill
the defining matrices with at least two equal indices, causing its determinant
to vanish.  In summary, we have that the two sets
$L_{n}:=\set{C_{n}^{(m)}}_{m=n}^{N}$ and $R_{n}:=\set{C_{n,(m)}}_{m=n}^{N}$
consist of commuting first integrals. Again, from the general theory, we have
that $C_{n}^{(N)}=C_{n,(N)}$ due to coassociativity.

\begin{remark}
    We remark that formula~\eqref{eq:Li1in} reduces to the component of the
    angular momentum $L_{i_{1},i_{2}}$ in the case $n=2$ because of the absence
    of the constants $\alpha^{(j)}_{i_{l}}$.
    \label{rem:angmom}
\end{remark}

A different yet instructive expression of the building blocks
$L_{i_{1},\ldots,i_{n}}$, can be obtained by expanding the determinant
in~\eqref{eq:Li1in} with respect to the last two rows and then resumming.
This yields the following alternative expression:
\begin{equation}
    L_{i_{1},\ldots,i_{n}}
    =
    \sum_{1\leq a < b \leq n}
    (-1)^{a+b-1} \det (A_{a,b}) L_{i_{a},i_{b}},
    \label{eq:Li1inf2}
\end{equation}
where
\begin{equation}
    A_{a,b}
    =
    \begin{pmatrix}
        \alpha_{i_{1}}^{(1)} & \ldots & \widehat{\alpha}_{i_{a}}^{(1)} 
        & \ldots &\widehat{\alpha}_{i_{b}}^{(1)} 
        & \ldots &\alpha_{i_{n}}^{(1)}
        \\
        \vdots &   & \vdots & & \vdots & & \vdots
        \\
        \alpha_{i_{1}}^{(n-2)} & \ldots & \widehat{\alpha}_{i_{a}}^{(n-2)} 
        & \ldots &\widehat{\alpha}_{i_{b}}^{(n-2)} 
        & \ldots &\alpha_{i_{n}}^{(n-2)}
    \end{pmatrix} ,
    \label{eq:Aiaib}
\end{equation}
where the hat means that the column is missing. This last form is quite
informative, since it shows that the building blocks $L_{i_1,\ldots,i_n}$ are
linear in the components of the angular momentum $L_{i_a,i_b}$ multiplied by
some degree $n-2$ polynomials in the real constants $\alpha^{(j)}_{i_{l}}$.
Moreover, formula~\eqref{eq:Li1inf2} reduces immediately to formulas
(\ref{eq:buldingblock3},\ \ref{eq:buildingblocksg4}), holding for
the cases $n=3,4$ respectively.

So, for $N\geq n$ adding the Hamiltonian $H^{(N)}_{n}$ to the sets $L_{n}$ and
$R_{n}$ we obtain two sets composed by $N-(n-2)$ functionally independent commuting first
integrals:
\begin{equation}
    \widetilde{L}_{n}:=\set{H_{n}^{(N)}, C_n^{(n)}, \dots, C_n^{(N)}},
    \quad
    \widetilde{R}_{n}:=\set{H_{n}^{(N)}, C_{n,(n)}, \dots,  C_{n,(N)}},
\end{equation}
thus proving that the obtained Hamiltonian system is of PLN type with rank
$N-(n-2)$. In total, it is easy to check, as all first integrals except
$H^{(N)}_{n}$ and $C_{n}^{(N)}=C_{n,(N)}$ are local, we have a set composed by
$[(N-n+1)+(N-n+1)-1]+1=2(N-n)+2$ functionally independent first integrals:
\begin{equation} 
    I_{n}:=\set{H_{n}^{(N)}, C_n^{(n)}, \ldots, C_n^{(N)}, C_{n,(n)},
    \ldots, C_{n, (N-1)}}.
\end{equation}
So, despite the Hamiltonian system is not integrable for $n > 2$, it admits a
large number of first integrals. In some particular cases such that $n-2$
additional first integrals exist, the Hamiltonian system immediately becomes
superintegrable with $2(N-n)+2$ first integrals. To summarise, the hierarchy of
Hamiltonians defined by the chain of Lie-Poisson algebras coming from $\galg_n$
turns out to be composed by an infinite family of integrable systems ($n= 2$),
then quasi-integrable systems ($n= 3$), and PLN type systems with rank
$N-(n-2)$ for $n>3$. 

\section{Conclusions and outlook}
\label{sec:concl}

\noindent In this paper we introduced and studied the novel $T_{n}$-dimensional
Lie algebra $\galg_n$.  This Lie algebra was suggested by two of us in a
previous work~\cite{GLT_coalgebra} when dealing with discrete integrable
systems. The main feature of this Lie algebra is being non-semisimple, but also
being a generalisation of both $\Sl_2(\K)$ and the two-photon Lie algebra
$\mathfrak{h}_6$. This algebra also has an unusually large centre
$Z(\galg_{n})$ of dimension $T_{n-2}$. 

Throughout this paper we proved the main properties of the $\galg_{n}$ Lie
algebra. In particular, in \Cref{prop:casnum} we showed that it can admit at
most one non-trivial Casimir element while in \Cref{prop:representation} we
gave a matrix representation of this algebra. Our main result, stated in
\Cref{thm:cas}, showed that such a non-trivial Casimir element exists and it is a
polynomial of degree $n$, for $n \geq 2$. Our proof was based on some elements of
representation theory, after guessing the general form of the Casimir invariant 
as a determinant of a $n \times n$ matrix through the standard method of vector fields. 
To be more specific, the key point is that we can
obtain a representation of $\Sl_{n}(\K)$ in the space of vector fields over
$\K^{n(n+1)/2}$ such that the determinant of a symmetric matrix is annihilated by its
vector fields.  This property encapsulate the well known property that the
determinant of a matrix  is a polynomial of degree $n$ and it is invariant with
respect to the action of the special linear group $\SL_{n}(\K)$. This allowed
us to conclude that the determinant of the matrix $M_{n}$ is the Casimir
polynomial we were looking for.

Finally, in the last section we used the coalgebra symmetry approach to
associate to the Lie algebra $\galg_{n}$ a hierarchy of $N$ degrees of freedom
Hamiltonian systems with Hamiltonian $H_{n}^{(N)}$~\eqref{eq:HNgn}. This was
achieved by constructing an appropriate one degree of freedom canonical
symplectic realisation and then using the known properties of the primitive
coproduct $\Delta$, see \Cref{sss:genericn}.  The construction of the
non-trivial Casimir polynomial was necessary to ensure the existence of two
sets of $N-(n-2)$ commuting first integrals, see equation~\eqref{LRgN}.  These
first integrals are expressed as sums of squares of $n$-indices objects arising
as linear combinations of the components of the angular momentum (see
\eqref{eq:Li1inf2}). We observed how the integrability properties of the
obtained Hamiltonian depend crucially on $n$, with $n=2$ being integrable,
$n=3$ quasi-integrable, and for $n\geq 4$ a Hamiltonian system of
Poincar\'e--Lyapunov--Nekhoroshev type.

The study of the Lie algebra $\galg_n$ we carried out in this paper also paves
the way to many other directions and investigations in the algebraic theory of
integrable systems. We outline some of them now, highlighting why we believe
they are important.

For instance, a natural question is the geometric interpretation of the
coalgebra symmetry with respect to the Lie algebra $\galg_n$. Indeed, for $n=2$
the coalgebra symmetry with respect to $\galg_{2}\cong\Sl_2(\R)$, in the chosen
realization, can be understood as radial symmetry, i.e.\ invariance with
respect to the Lie group of transformations $\SO(N)$. In a similar way, for
$n=3$, the coalgebra symmetry with respect to $\galg_{3}\cong\mathfrak{h}_{6}$
can be understood as symmetry around a fixed axis, i.e.\ invariance with
respect to the Lie group of transformations $\SO(N-1)\subset \SO(N)$. For
generic $n$, we don't know yet a similar interpretation. A related question is
if there is any relevant associated symmetry algebra structure for the
universal left and right invariants $C_{n}^{(m)}$ and $C_{n,(m)}$ like there is
in the case of $\galg_{2} \cong \Sl_{2}(\R)$, see~\cite{Latini2019}.
Specifically, it remains unclear whether these symmetries close in a Lie
algebra, or in a polynomial Poisson algebra.

Another area requiring further exploration is the generalisation of the
$\galg_n$ construction to other semi-direct sums with different semi-simple Lie
algebras, coming from Cartan's classification, and the study of the associated
Casimir elements. Since $\Sl_{2}(\R)\cong\Sp_{2}$ the most natural
generalisation would be to substitute the $\Sl_{2}$ Levi factor with a generic
orthosymplectic Lie algebra $\Sp_{2k}$. Let us note that the Casimir elements
of two Lie algebras having the orthosymplectic Lie algebra $\Sp_{2k}$ as Levi
factor, namely the perfect Lie algebra $w\Sp_{k}$ and the inhomogenous Lie
algebra $I\Sp_{2k}$, have been found in~\cite{Campoamor2005} as coefficients of
the characteristic polynomials of an associated matrix. Considering that a
determinant is the known term of a characteristic polynomial, it is plausible
that our results can generalise to the orthosymplectic case in similar way
as in~\cite{Campoamor2005}.

In \Cref{sec:coalg} we remarked how the coalgebra symmetry property with
respect to the Lie algebra $\galg_{n}$ in the realisation $\DD_{n}\colon
S\galg_{n}\longrightarrow \mathcal{C}^{\infty}(\R^{2})$ is not able to
automatically yield integrability as soon as $n\geq3$. To this end, we believe
that a careful application of the subalgebra integrability
method~\cite{BlascoPhD} can yield many interesting (super)integrable
Hamiltonian systems.  This method allows to build Hamiltonian with additional
first integrals using the fact that corresponding to some subalgebras one can
obtain additional non-linear Casimir invariants. Our strategy to find such
integrable case will be to classify all the optimal subalgebras of $\galg_{n}$,
i.e.\ all the different subalgebras conjugated up to adjoint action of
$G_{n}=\exp(\galg_{n})$ on $\galg_{n}$, see~\cite[Definition 3.11]{Olver1986}.
We observe that recently the slightly stricter notion of optimal
\emph{families} of subalgebras has been
introduced~\cite{Amata_etal2024,AmataOliveri2023}.  This notion is particularly
convenient for computational purposes, allowing to isolate subalgebras with
desired properties algorithmically, giving then a strong starting point to
tackle the general problem.

A final relevant open problem is a classification of the various symplectic
realisations of the Lie algebra $\galg_{n}$. Indeed, in this paper we confined
ourselves to one degree of freedom canonical symplectic realisations. However,
it is for instance known~\cite{Musso2009loopcoproducts} that the Lie algebra
$\galg_{3}\cong\mathfrak{h}_{6}$ admits a non-trivial two-degrees of freedom
canonical symplectic realisation $\DD_{3,2}\colon\galg_{3}\longrightarrow
\mathcal{C}^{\infty}(U)$, where $U=(\R^{2}_{(q_{1},q_{2})} \setminus
\set{\mu_{2}q_{1}=\mu_{1}q_{2}})\times \R^{2}_{(p_{1},p_{2})}$ and its action
on the generating set, $\set{h, x_-,x_+,y_{1,-},y_{1,+},z_{1,1}}$, is given by:
\begin{equation}
    \begin{array}{*3{>{\displaystyle}c}}
        h \mapsto q_{1} p_{1}+q_{2}p_{2}, & x_-\mapsto-\frac{q_{1}^{2}+q_{2}^2}{2}, 
        &
        x_+ \mapsto \frac{p_{1}^2+p_{2}^{2}}{2}+\frac{\kappa}{{(\mu_{2}q_{1}-\mu_{1}q_{2})}^{2}},
        \\[12pt]
    y_{1,-}\mapsto-\mu_{1} q_{1}-\mu_{2}q_{2}, & 
    y_{1,+}\mapsto \mu_{1} p_{1}+\mu_{2}p_{2}, &
    z_{1,1}\mapsto \mu_{1}^{2}+\mu_{2}^{2},
    \end{array}
    \label{eq:realg32}
\end{equation}
$\mu_1$, $\mu_2$, $\kappa$ being real parameters. It is therefore natural to
address the classification of such symplectic realisations and to investigate
their potential use in the construction of non-trivial integrable systems,
including possible applications of the loop coproduct
method~\cite{Musso2010gaudin,Musso2010loop,Musso2009loopcoproducts}. We observe
that such a method, which is an extension of the coalgebra symmetry approach
that we explained in \Cref{sss:coalg}, is a promising approach for the
construction of integrable systems, but its application was limited because of
the lack of suitable Lie algebras where the construction may apply. We believe
that the newly introduced Lie algebra $\galg_n$ can be used to fill this gap
and produce novel interesting result in such a theory.

\section*{Acknowledgements}

G.G.\ and D.L.\ acknowledge support from the INFN research project
\emph{Mathematical Methods in NonLinear Physics} (MMNLP), Gruppo~4 -- Fisica
Teorica.  Their work was also partially supported by the GNFM of the Istituto
Nazionale di Alta Matematica (INdAM). D.L.\ further acknowledges partial
funding from MUR -- Dipartimento di Eccellenza 2023--2027 (CUP~G43C22004580005,
project code DECC23\_012\_DIP).


\end{document}